\def\fullFlag{1} %full (1) vs. conf. (0) version
 \providecommand\BibTeX{{%
 \normalfont B\kern-0.5em{\scshape i\kern-0.25em b}\kern-.8em\TeX}}}
\newcommand{\cf}{{\it cf.}\xspace}
\newcommand{\eg}{{\it e.g.}\xspace}
\newcommand{\etal}{{\it et~al.}\xspace}
\newcommand{\ie}{{\it i.e.}\xspace}
\newcommand{\B}{\mathcal{B}} %upper bound
\renewcommand{\L}{\mathcal{L}} %loss variable
\newcommand{\M}{\mathcal{M}} %mechanism
\newcommand{\MN}{\mathcal{MN}} %matrix normal distribution
\newcommand{\N}{\mathcal{N}} %normal distribution
\renewcommand{\O}{\mathcal{O}} %output set
\newcommand{\V}{\mathcal{V}} %vocabulary
\newcommand{\Att}{\mathsf{Att}}
\newcommand{\FFN}{\mathsf{FFN}}
\newcommand{\MHA}{\mathsf{MHA}}
\newcommand{\ReLU}{\mathsf{ReLU}}
\newcommand{\LN}{\mathsf{LN}}
\newcommand{\erf}{\mathsf{erf}}
\newcommand{\softmax}{\mathsf{softmax}}
\newcommand{\lip}{\mathit{Lip}}
\newcommand{\vect}{\mathit{vec}}
\newcommand{\D}{\mathcal{X}}
\newcommand{\aux}{\mathrm{aux}}
\newcommand{\MLM}{\mathrm{MLM}} % subscript for objective
\DeclareMathOperator{\Tr}{Tr}
\newtheorem{definition}{\bf Definition}
\newtheorem{theorem}{\bf Theorem}
\newtheorem{proposition}{\bf Proposition}
\newtheorem{lemma}{\bf Lemma}
\begin{document}

\title{DP-Forward: Fine-tuning and Inference on Language Models with Differential Privacy in Forward Pass}

\settopmatter{authorsperrow=3}
\author{Minxin Du}
\affiliation{%
	\institution{The Chinese University of Hong Kong}
	\country{}
 }
\email{dm018@ie.cuhk.edu.hk}

\author{Xiang Yue}
\authornote{The first two authors contributed equally to this work.}
\affiliation{%
	\institution{The Ohio State University}
	\country{}
 }
\email{yue.149@osu.edu}

\author{Sherman S. M. Chow}
\authornote{Corresponding author is from Dept. of Information Engineering, CUHK, Hong Kong.}
\affiliation{%
	\institution{The Chinese University of Hong Kong}
	\country{}
 }
\email{smchow@ie.cuhk.edu.hk}

\author{Tianhao Wang}
\affiliation{%
	\institution{University of Virginia}
	\country{}
 }
\email{tianhao@virginia.edu}

\author{Chenyu Huang}
\affiliation{%
	\institution{Independent}
	\country{}
 }
\email{hcyray@gmail.com}

\author{Huan Sun}
\affiliation{%
	\institution{The Ohio State University}
	\country{}
 }
\email{sun.397@osu.edu}

\renewcommand{\shortauthors}{Du, et al.}

\begin{abstract}
Differentially private stochastic gradient descent (DP-SGD) adds noise to gradients in back-propagation, safeguarding training data from privacy leakage, particularly membership inference.
It fails to cover (inference-time) threats like embedding inversion and sensitive attribute inference.
It is also costly in storage and computation when used to fine-tune large pre-trained language models (LMs).

We propose DP-Forward, which directly perturbs embedding \emph{matrices} in the forward pass of LMs.
It satisfies stringent \emph{local} DP requirements for training and inference data.
To instantiate it using the smallest matrix-valued noise, we devise an analytic matrix Gaussian~mechanism (aMGM) by drawing possibly non-i.i.d. noise from a \emph{matrix} Gaussian distribution.
We then investigate perturbing outputs from different hidden (sub-)layers of LMs with aMGM noises.
Its utility on three typical tasks almost hits the non-private baseline and outperforms DP-SGD by up to~$7.7$pp at a moderate privacy~level.
It saves $3\times$ time and memory costs compared to DP-SGD with the latest high-speed library.
It also reduces the average success rates of embedding inversion and sensitive attribute inference by up to $88$pp and $41$pp, respectively, whereas DP-SGD fails.
\end{abstract}

\begin{CCSXML}
<ccs2012>
	 <concept>
			 <concept_id>10002978.10003018.10003019</concept_id>
			 <concept_desc>Security and privacy~Data anonymization and sanitization</concept_desc>
			 <concept_significance>500</concept_significance>
			 </concept>
	 <concept>
			 <concept_id>10002978.10002991.10002995</concept_id>
			 <concept_desc>Security and privacy~Privacy-preserving protocols</concept_desc>
			 <concept_significance>500</concept_significance>
			 </concept>
	 <concept>
			 <concept_id>10002978.10003029.10011150</concept_id>
			 <concept_desc>Security and privacy~Privacy protections</concept_desc>
			 <concept_significance>500</concept_significance>
			 </concept>
 </ccs2012>
\end{CCSXML}

\ccsdesc[500]{Security and privacy~Data anonymization and sanitization}
\ccsdesc[500]{Security and privacy~Privacy-preserving protocols}
\ccsdesc[500]{Security and privacy~Privacy protections}

\keywords{Local Differential Privacy,
Natural Language Processing, 
Pre-trained Language Models,
Privacy-preserving Fine-tuning and Inference of LMs,
Embedding Matrices,
Analytic Matrix Gaussian Mechanism}

\maketitle

%!TEX root = main.tex
\section{Introduction}
The deep learning architecture of transformer~\cite{nips/VaswaniSPUJGKP17} is now gaining popularity in computer vision and has been widely utilized in natural language processing (NLP).
Transformer-based language models (LMs), such as BERT~\cite{naacl/DevlinCLT19} and GPT~\cite{report/RadfordNSS18,report/radford2019language}, have remarkably achieved state-of-the-art performance in almost every NLP~task.
They are first pre-trained on massive (public) self-labeled corpora and then fine-tuned for various tasks using much smaller, potentially private corpora.
It avoids training from scratch and the possible shortage of task-specific corpora while earning versatility.

Training data contributing to the improved utility of fine-tuned LMs can be sensitive.
LMs can (unintentionally) memorize them~\cite{uss/Carlini0EKS19} and become vulnerable to membership inference attacks (MIAs)~\cite{sp/ShokriSSS17} that identify whether an example is in the training set.
Worse still, verbatim training text (\eg, SSNs) can be extracted via only black-box access to GPT-2~\cite{uss/CarliniTWJHLRBS21}.
It is also possible to recover personal health information (\eg, patient-condition pairs) from BERT trained over a clinical corpus~\cite{naacl/LehmanJPGW21} based on the extraction attack~\cite{uss/CarliniTWJHLRBS21}.

Differential privacy (DP)~\cite{tcc/DworkMNS06} has emerged as the \emph{de facto} privacy standard for 
protecting individual privacy.
To thwart MIAs on individuals' \emph{training} data, 
DP stochastic gradient descent (DP-SGD)~\cite{ccs/AbadiCGMMT016} can be used.
It clips the gradients of each example in a batch and adds random Gaussian noise to the aggregated gradient.
It is more general than earlier attempts~\cite{nips/ChaudhuriM08, jmlr/ChaudhuriMS11} that focus on convex problems and has been implemented in modern ML frameworks, such as \href{https://opacus.ai}{PyTorch} and \href{https://github.com/tensorflow/privacy}{TensorFlow}.
One can apply it to fine-tune LM-based NLP pipelines while ensuring \emph{example-level} privacy, assuming each individual contributes an example, typically a sequence-label~pair.

Unfortunately, DP-SGD often uses a \emph{trusted} party to curate users' sensitive training data.
Although it can be done distributively~\cite{ccs/BonawitzIKMMPRS17,iclr/McMahanRT018} via secure aggregation~\cite{ccs/ChaseC09} with extra costs and trust assumptions, it offers \emph{central} DP (CDP) at its core.\footnote{Distributed DP-SGD adds local noise too small to achieve LDP.
But it is protected by secret sharing.
When \emph{all} shares are aggregated,
they cancel out each other, assuming an honest majority.
It thus faces a ``synchronization'' issue begging for identification and recovery mechanisms with computation and communication overheads~\cite{ccs/BonawitzIKMMPRS17}.}
Instantiating \emph{per-example} gradients as large as \emph{entire} pipelines (\eg, ${>}110$M parameters for BERT-Base) is obliviously costly.
Moreover, maintaining the utility of pipelines trained by the noisy aggregated one is tricky due to the dimensional ``curse.''
A recent study~\cite[Table~$4$]{icml/YuZC0L21} shows that
the average accuracy in fine-tuning LMs for four NLP tasks at moderate privacy is $65.7\%$ (vs. $91.8\%$ without DP).
Finally, the inference-time embeddings are not perturbed by the noise added during training, leaving \emph{inference queries} vulnerable to various recovery attacks~\cite{ccs/SongR20, sp/PanZJY20}, ranging from sensitive attributes (\eg, authorship) to raw text.

\subsection{Natural Privacy via Perturbing Embeddings} 
We propose DP-Forward, a radically different approach that perturbs \emph{forward-pass} signals:
Users can locally inject noise into the \emph{embeddings} of (labeled) sequences before sharing them for training, in contrast to perturbing gradients in back-propagation (possibly by an untrusted party).
It is meant for provable \emph{local} DP (LDP) guarantees, thus protecting against stronger adversaries than DP-SGD.

Our approach also naturally fits the federated learning (FL) setting that does not gather users' data but with substantial differences -- FL typically shares \emph{noiseless} local model updates.
Note that any subsequent computation (\eg, gradient computation) on noisy embeddings incurs no extra privacy loss due to the free post-processing of LDP.
One might force DP-SGD to offer LDP by adding ``enough'' noise to the \emph{orders-of-magnitude larger} per-example gradient from a user, but it may yield unusable models at a similar privacy level.

DP-Forward also extends its applicability to \emph{inference} 
via adding noise to users' test-sequence embeddings, ensuring LDP as in training.
As a ``side'' benefit, it can effectively mitigate emerging embedding-based privacy risks~\cite{sp/PanZJY20,ccs/SongR20} beyond MIAs.

It is evident that the design goals of DP-Forward naturally align in tandem with our overarching objectives: LDP (vs. CDP), more direct protection of raw data (vs. gradients) against new threats~\cite{sp/PanZJY20,ccs/SongR20}, and can be as efficient as regular non-private training (allowing batch processing of noisy embeddings).
% (allowing batch processing of noisy embeddings for gradients vs. clipping per-example gradients in DP-SGD).
The foundation supporting these desiderata, unfortunately, was unavailable.
A dedicated mechanism to perturb the forward-pass signals is indispensable.

Specifically, we need to derive noises for embeddings of training/inference text sequences
obtained through the forward pass of LM-based pipelines as a real- and matrix-valued function.
One might adopt the classical Gaussian mechanism (GM)~\cite{fttcs/DworkR14} to add i.i.d. noise drawn from a univariate Gaussian distribution.
Yet, GM calibrates its noise variance based solely on a \emph{sufficient condition} for DP, and its variance formula is not applicable to a low privacy regime~\cite{icml/BalleW18}.
Another candidate is the matrix-variate Gaussian (MVG) mechanism~\cite{ccs/ChanyaswadDPM18}, tailored for matrix-valued data:
It exploits possibly \emph{non-i.i.d.} noise from a \emph{matrix} Gaussian distribution to perturb more important rows/columns less.
Although it may show better utility over GM~\cite{ccs/ChanyaswadDPM18}, it is still sub-optimal due to the sufficient condition.

To optimize MVG, we propose an analytic matrix Gaussian mechanism (aMGM) by integrating a \emph{necessary and sufficient} condition from the analytic GM (aGM)~\cite{icml/BalleW18} for non-i.i.d. noise calibration.
Our challenge lies in manipulating the two covariance matrices instead of a single variance.
We deduce a constraint only on
the \emph{two smallest singular values} (Section~\ref{sect:agm-aMGM}), indicating that i.i.d. noise (as in aGM) may already be 
optimal for general applications like DP-Forward.\footnote{
With extra assumptions, dedicated allocation of other singular values by optimizing/maximizing utility functions specific to applications could help.
}

A transformer-based pipeline contains an input embedding layer, encoders, and task layers.
All these layers prominently manipulate embeddings of text inputs in training and subsequent inference.
We investigate adding aMGM noise to 
embeddings output by any hidden (sub-)layer
before task layers (Figure~\ref{fig:encoder}).
To ensure \emph{sequence-level} LDP, we need to estimate the $L_2$-sensitivity~\cite{fttcs/DworkR14} of ``pre-noise'' functions for \emph{any} two sequences.
It is non-trivial since the functions can include different (sub-)layers that may not even be Lipschitz~\cite{icml/KimPM21}.
Our strategy is to normalize the function outputs to have a fixed Frobenius (or $L_2$) norm, similar to gradient clipping~\cite{ccs/AbadiCGMMT016}.
It works especially well for deeper sub-layers, achieving comparable task accuracy to the non-private baseline (Section~\ref{sect:exp}).
\ifnum\fullFlag=1
For the first few (sub-)layers, we also make two specializations in relaxing LDP to the token level, elaborated in Appendix~\ref{sect:token-level}, to improve accuracy.
\else
For the first few (sub-)layers, we also make two specializations in relaxing LDP to the token level to improve accuracy (in our full version~\cite{corr/DuYCWHS23}).
\fi

\begin{figure*}[!t]
\centering
\includegraphics[width=0.8\linewidth]{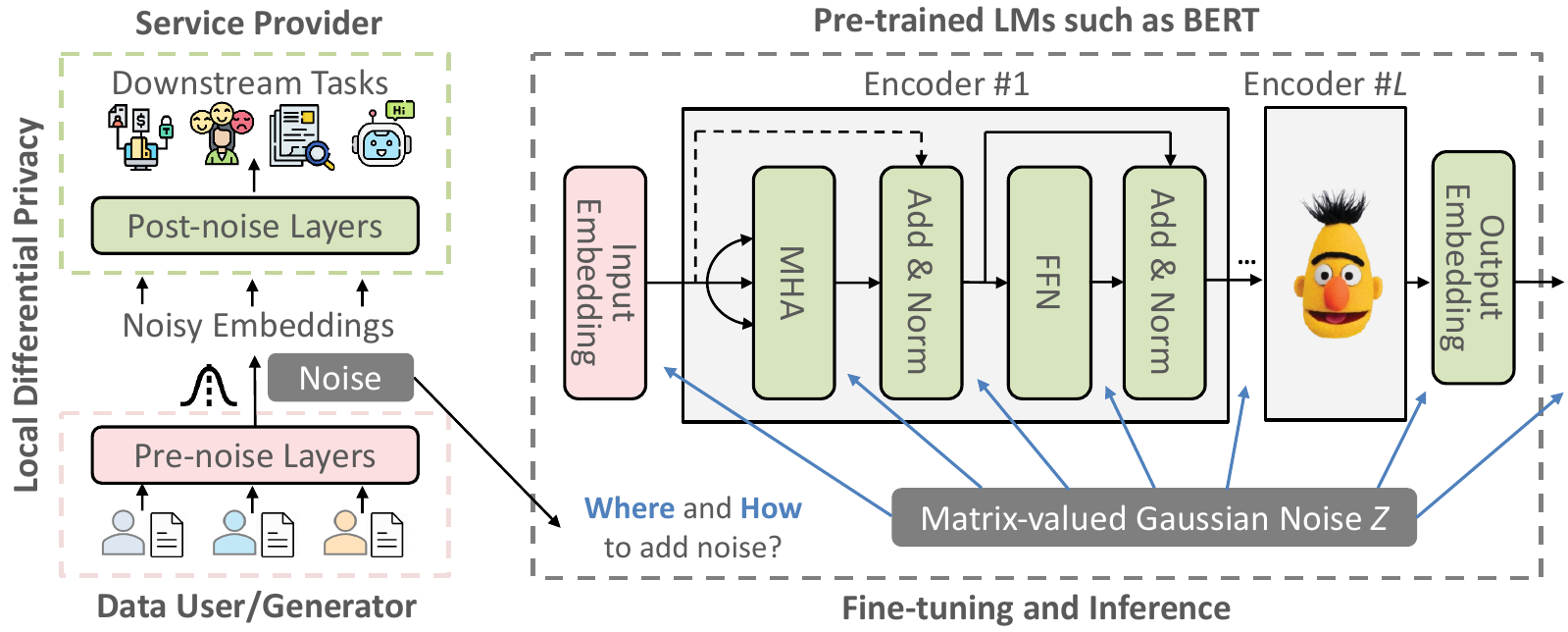}
\vspace{-5pt}
\caption{A typical NLP pipeline built atop a pre-trained LM such as BERT with our matrix-valued Gaussian noise layer}
\vspace{-8pt}
\label{fig:encoder}
\end{figure*}

\subsection{Our Contributions}
\label{subsec:contribution}
%\hspace{\parindent}
Motivated by prevailing privacy concerns in LM fine-tuning and inference and inherent shortcomings of \mbox{DP-SGD}, we initiate a formal study of an intuitive but rarely studied approach and explore its integration with a transformer-based NLP pipeline.
Specifically:

\smallskip
\noindent 1) We propose DP-Forward fine-tuning, which perturbs the forward-pass embeddings of \emph{every} user's (labeled) sequence.
It offers more direct protection than DP-SGD perturbing aggregated gradients.
Its provable guarantee (Theorem~\ref{theo:seq-ldp}) is a new sequence-level LDP notion (SeqLDP, Definition~\ref{def:seqldp}), with the more stringent $(\epsilon, \delta)$-LDP guarantee to hold w.r.t. only sequences.
Moreover, DP-Forward can naturally extend to inference, ensuring the standard LDP (Theorem~\ref{theo:inference_ldp}) for test sequences without labels, whereas DP-SGD cannot.

\smallskip
\noindent
2) To instantiate an optimal output perturbation mechanism for DP-Forward, we propose aMGM, owning independent interests for any matrix-valued function.
By exploiting a necessary and sufficient~DP condition from aGM~\cite{icml/BalleW18}, it can draw possibly non-i.i.d. noise from a \emph{matrix} Gaussian distribution like MVG~\cite{ccs/ChanyaswadDPM18} while producing orders-of-magnitude smaller noise for high-dimensional data (Section~\ref{sect:different_mech}).

\smallskip
\noindent
3) We conduct experiments\footnote{Our code is available at \url{https://github.com/xiangyue9607/DP-Forward}.} on three typical NLP tasks in Section~\ref{sect:exp},
showing how crucial hyperparameters (\eg, the sequence length) impact task accuracy.
To fairly compare with DP-SGD on privacy-vs.-utility: 
i) We perturb labels by the randomized response~\cite{jasa/Warner65} such that DP-Forward fine-tuning offers the standard LDP for sequence-label pairs (Theorem~\ref{theo:LDP_fine-tuning}).
ii) We ``translate'' DP-Forward with standard LDP to (example-level) CDP (as offered by DP-SGD) via shuffling~\cite{soda/ErlingssonFMRTT19}.
%as detailed in Section~\ref{sect:cmp_dpsgd}.
Our accuracy gain (for deep-layer DP-Forward instantiations) is up to $7.7$ percentage points (pp), compared to DP-SGD or its recent improvements~\cite{icml/YuZC0L21, iclr/YuNBGIKKL22} (reviewed in Section~\ref{sect:related_dp-sgd}), at a similar privacy level.
Efficiency-wise, DP-SGD incurs ${>}3\times$ time and GPU-memory costs even with the latest Opacus library~\cite{opacus}.

\smallskip
\noindent
4) We evaluate three classes of privacy threats.
Like DP-SGD, 
DP-Forward (including the two token-level designs 
\ifnum\fullFlag=1
in Appendix~\ref{apdx:exp})
\else
in our full version)
\fi
can effectively defend against sequence-level MIAs, but only DP-Forward can thwart the two threats on (inference-time) embeddings.
Specifically, Section~\ref{sect:attacks} shows that DP-SGD \emph{totally} fails in two embedding inversion attacks, while DP-Forward remarkably reduces their success rates by up to $88$pp.
For a neural-network-based attribute inference attack, DP-SGD reduces its success rates by only $15$pp on average, while DP-Forward achieves ${\sim}41$pp reduction, making the attack predict like assigning all labels to the majority class.

\smallskip
In short, DP-Forward is a better alternative to DP-SGD in training (and testing) deep-learning models, \eg, gigantic LM-based ones.

%!TEX root = main.tex
\section{Preliminaries and Notations}
\label{sect:pre}

\subsection{Transformer Encoders in BERT}
\label{sect:pre_BERT}
Modern transformer-based LMs, including BERT~\cite{naacl/DevlinCLT19} and GPT~\cite{report/RadfordNSS18}, are first pre-trained on enormous unannotated (public) corpora to learn contextualized text representations.
Later, they can be fine-tuned for various downstream NLP tasks (\eg, sentiment analysis, question answering) using much smaller, task-specific datasets.

We consider BERT (Figure~\ref{fig:encoder}), which comprises a stack of $L$ identical layers (\ie, bidirectional transformer encoders~\cite{nips/VaswaniSPUJGKP17}).
Each layer has two sub-layers:
the dot-product \emph{multi-head attention} (MHA)~\cite{nips/VaswaniSPUJGKP17} with $h$ heads and a \emph{feed-forward network} (FFN).
Each sub-layer has an extra residual connection, followed by layer normalization~\cite{corr/BaKH16}.

Let $X = \langle x_i\rangle^n_{i=1}$ be an input sequence of $n$ tokens (\eg, characters, words, sub-words, q-grams), where $x_i$ is from a vocabulary $\V$.
The input embedding layer first maps each $x_i$ to its representation in $\mathbb{R}^d$, which is the sum of the token, segment, and position embeddings.
We re-use $X$ to represent the hidden \emph{embedding matrix} in $\mathbb{R}^{n \times d}$.
For each of $h$ attentions $\Att_i$ in the MHA layer,
we derive the query, key, and value matrices $Q, K, V \in \mathbb{R}^{n \times d/h}$ ($h$ divides $d$) by multiplying~$X$ with head-specific weights $W^Q, W^K, W^V \in \mathbb{R}^{d \times d/h}$.
Its output is
$$\Att_i(Q,K,V) = \softmax (\frac{Q K^\top}{\sqrt{d/h}})V, \forall i \in [1,h].$$

The input to $\softmax(\cdot)$ is an $n \times n$ matrix of pairwise dot products.
Finally, MHA concatenates (denoted by $||$) all the head outputs into a matrix in $\mathbb{R}^{n \times d}$, right multiplied by a projection matrix $W^O \in \mathbb{R}^{d\times d}$:
$$\MHA(X) = [\Att_1 || \cdots || \Att_h] W^O.$$

FFN is composed of two linear mappings with a ReLU activation in between.
It separately and identically operates on each $x_{i \in [1, n]}$,
$$\FFN(x_i) = \ReLU(0,x_i W_1+b_1)W_2+b_2,$$
where $W_1$, $W_2$, $b_1$, and $b_2$ are trainable matrix/vector-valued parameters.
Its output on $X$ is 
$\FFN(X) = [\FFN(x_1)^\top || \cdots || \FFN(x_n)^\top]$.
The residual connection for sub-layers is $X + \MHA(X) / \FFN (X)$.
The layer normalization $\LN(x_i)$
normalizes all $x_i$ entries to have zero mean and unit variance using an extra scale-then-shift step.

At the output of the final encoder, the hidden embedding matrix is reduced to a sequence feature in $\mathbb{R}^{1\times d}$.
Standard reduction methods include mean pooling~\cite{emnlp/ReimersG19} (computing $\sum^n_{i=1} x_i / n$) or taking the last embedding of a special token \texttt{[CLS]} for classification~\cite{naacl/DevlinCLT19}.

The pre-training of BERT is based on two self-supervised tasks: \emph{masked language model} (MLM) and next sentence prediction~\cite{naacl/DevlinCLT19}.
We adopt MLM: It randomly masks out some tokens, indexed by $\mathcal{I}$, in an input sequence $X$.
The objective is to predict those masked tokens using their context by minimizing the cross-entropy loss
\begin{align}
\label{eq:mlm}
    L_{\MLM} = - \sum_{i \in \mathcal{I}} \log \Pr[x_i|\hat{X};\theta], \textrm{~with~} \hat{X} = X \setminus \{x_i | i \in \mathcal{I} \},
\end{align} 
where $\theta$ denotes all the parameters of BERT transformer encoders.

\subsection{(Local) Differential Privacy}
DP~\cite{tcc/DworkMNS06} is a rigorous, quantifiable privacy notion.
It has two popular models, \emph{central} and \emph{local}.
In central DP, a \emph{trusted} data curator accesses the set $\D$ of all individuals' raw data and processes $\D$ by a randomized mechanism $\M$ with some random noise.
Formally:

\begin{definition}[Central DP]
\label{def:cdp}
For privacy parameters $\epsilon \geq 0$ and $0 \leq \delta \leq 1$, $\M$ fulfills $(\epsilon,\delta)$-DP if, for all neighboring datasets $\D$ and~$\D'$ (denoted by $\D \simeq \D'$) and any 
%possible 
subset $\O$ of the outputs of~$\M$, 
\begin{align*}
    \Pr[\M(\D) \in \O] \leq e^\epsilon \Pr[\M(\D') \in \O] + \delta.
\end{align*}
We call it $\epsilon$-DP or pure DP when $\delta = 0$.
\end{definition}

The neighboring notion is application-dependent (to be discussed in Section~\ref{sect:new_notion}).
Typically, it involves the ``replace-one'' relation: $\D'$ can be obtained from $\D$ by replacing a single individual's data point (\eg, a sequence-label pair).
CDP offers plausible deniability to any individual in a dataset.
In contrast, local DP (LDP)~\cite{focs/KasiviswanathanLNRS08} removes the trusted curator, allowing individuals to locally perturb their data using $\M$ before being sent to an untrusted aggregator for analytics.

\begin{definition}[Local DP]
\label{def:ldp}
For $\epsilon \geq 0, 0 \leq \delta \leq 1$, $\M$ is $(\epsilon,\delta)$-LDP if, for any two inputs $X, X'$ and any possible output subset $\O$ of $\M$, 
\begin{align*}
    \Pr[\M(X) \in \O] \leq e^\epsilon \Pr[\M(X') \in \O] + \delta.
\end{align*}
Similarly, we call it $\epsilon$-LDP when $\delta= 0$.
\end{definition}

\paragraph{Privacy Loss Random Variable (PLRV)}
For a specific pair of inputs $\D \simeq \D'$, the \emph{privacy loss} (or the ``actual $\epsilon$ value'')~\cite{icml/BalleW18} incurred by observing an output $O$ is the log-ratio of two probabilities:
$$\L_{\M,\D,\D'}(O) = \ln{\frac{\Pr[\M(\D) = O]}{\Pr[\M(\D') = O]}}.$$
When $O$ varies according to $\M(\D)$, we get the PLRV $\L_{\M, \D,\D'}$.
A helpful way to work with DP is to analyze tail
bounds on PLRVs~\cite{fttcs/DworkR14}, which we utilize to build our proposed mechanism in Section~\ref{sect:agm-aMGM}.

DP has two desirable properties: \emph{free post-processing} and \emph{composability}.
The former means that further computations on the outputs of an $(\epsilon,\delta)$-DP mechanism incur no extra privacy loss.
The latter allows us to build more complicated mechanisms atop simpler ones: sequentially (and adaptively) running an $(\epsilon,\delta)$-DP mechanism for $k$ times on the same input is at least $(k\epsilon, k\delta)$-DP.
The two properties also hold for LDP when considering a dataset has only one row.

An output perturbation mechanism $\M$ for a matrix-valued function $f: \D \rightarrow \mathbb{R}^{n \times d}$ is given by computing $f$ on the inputs and then adding random noise drawn from a random variable to its outputs.

\smallskip
\noindent \textbf{Gaussian Mechanism (GM).}
For $(\epsilon,\delta)$-DP, a typical instance of $\M$ is the classical GM~\cite{fttcs/DworkR14}, which adds noise $Z \in \mathbb{R}^{n \times d}$ with each entry i.i.d. drawn from a univariate Gaussian distribution $\N(0,\sigma^2)$.
The variance $\sigma^2 = 2 \ln (1.25/\delta)S^2_2(f) / \epsilon^2$ with the $L_2$-sensitivity:
$$S_2(f)= \sup_{\D \simeq \D'} ||f(\D)-f(\D')||_F,$$
where $||\cdot||_F$ denotes the matrix Frobenius norm~\cite{cu/HJ2012}.

Table~\ref{tbl:acronyms} summarizes the acronyms throughout this work.

\begin{table}[!t]
    \centering
    \caption{Acronyms (newly proposed ones are marked with $^*$)}
    \label{tbl:acronyms}
    \resizebox{\linewidth}{!}{
    \begin{tabular}{l||l}
        \toprule
        NLP & Natural Language Processing \\
        LM & Language Model \\
        BERT & Bidirectional Encoder Representations from Transformers \\
        MLM & Masked Language Modeling \\
        MHA & Multi-Head Attention \\
        FFN & Feed-Forward Network \\
        \midrule
        MIA & Membership Inference Attack \\
        DP-SGD & Differentially Private Stochastic Gradient Descent \\
        PLRV & Privacy Loss Random Variable \\
        \midrule
        (C)DP & (Central) Differential Privacy\\
        LDP & Local Differential Privacy \\
        Seq(L)DP$^*$ & Sequence (Local) Differential Privacy \\
        \midrule
        GM & Gaussian Mechanism \\
        RR & Randomized Response \\
        MVG & Matrix-Variate Gaussian (Mechanism) \\
        aGM & Analytic Gaussian Mechanism \\
        aMGM$^*$ & Analytic Matrix Gaussian Mechanism \\
        \bottomrule
    \end{tabular}
    }
\end{table}

%!TEX root = main.tex
\section{DP-Forward}\label{sect:dp-forward}
We study BERT-based pipelines as an example due to their superior performance in classification tasks.
DP-Forward can be readily applied to other (transformer-based) NLP or computer vision models that involve matrix-valued computation during the forward pass.

Suppose each user holds a sequence-label pair $(X, y)$ or only $X$ for fine-tuning or testing a pipeline at an \emph{untrusted} service provider.
Sharing redacted $X$ (with common PII removed) or its feature, a non-human-readable real-valued embedding matrix, is leaky~\cite{jots/Sweeney15,ccs/SongR20,sp/PanZJY20}.

For DP-Forward training, users perturb their embedding matrices locally to ensure (new notions of) LDP before being shared, and they should also perturb the corresponding labels if deemed sensitive (Section~\ref{sect:cmp_dpsgd}).
We explore different options for splitting pipelines into \emph{pre-noise} functions $f(\cdot)$ and \emph{post-noise} processing $p(\cdot)$ in Section~\ref{sect:splitting}:
Users can access $f(\cdot)$ to derive embedding matrices, perturbed by an output perturbation mechanism $\M$ (\eg, GM);
the service provider runs $p(\cdot)$ on noisy (labeled) embeddings for fine-tuning (Section~\ref{sect:fine-tune}) or pre-training (Section~\ref{sect:pre-train}).
The challenge lies in analyzing $S_2(f)$ for different pipeline parts, which we address by normalizing $f(\cdot)$.

DP-Forward can be naturally used to protect \emph{inference} sequences (Section~\ref{sect:inference}), unlike DP-SGD.
It exploits the free post-processing (\ie, inference works on noisy embeddings), incurring minimal changes to pipelines with the extra ``plug-and-play'' noise layer.

\subsection{Notions of Sequence (Local) DP}
\label{sect:new_notion}
Embeddings $f(X)$ encode semantic information of input sequences $X$, each of which has $n$ tokens (Section~\ref{sect:pre_BERT}).
Fine-tuning (or subsequent inference of) NLP pipelines essentially processes $f(X)$.
DP-Forward fine-tuning protects \emph{every} $X$ by an output perturbation mechanism $\M$ over $f(X)$, in contrast to DP-SGD, which perturbs \emph{aggregates} of gradients $f'(X,y)$ over $X$ and label~$y$.
Simply put, our $(\epsilon,\delta)$-LDP holds for $X$ while DP-SGD provides CDP for $(X, y)$.

Sequence-only protection is meaningful since sequences often convey (implicit) sensitive information (\eg, authorship), whereas labels (\eg, a single bit denoting positive/negative) can be public.
We defer to Section~\ref{sect:cmp_dpsgd} for achieving ``full'' LDP over $(X, y)$.
To bridge the gap between theoretical guarantees of DP-SGD and~DP-Forward, we first define sequence DP\footnote{One could generalize it to ``feature'' (or ``input'') DP, as DP-Forward also allows other types of features beyond embeddings (and its essence is input-only privacy).
To keep our focus on NLP, we use ``sequence'' here.
(PixelDP~\cite{sp/LecuyerAG0J19} treats pixels as image features.)
} (SeqDP) in the \emph{central} setting.

\begin{definition}[SeqDP]
\label{def:token_seqldp}
\!\!For $\epsilon \geq 0, 0 \leq \delta \leq 1$, $\M$ is $(\epsilon,\delta)$-SeqDP, if $\forall \D \simeq \D'$ that only differ in a sequence at some index $i$: $(X_i, y_i) \in \D$ and $(X'_i, y_i) \in \D', \forall X_i,X'_i$, and any possible output subset $\O$,
\begin{align*}
    \Pr[\M(\D) \in \O] \leq e^\epsilon \Pr[\M(\D') \in \O] + \delta.
\end{align*}
\end{definition}

\subsubsection{Label DP}
The recently proposed notion of label DP~\cite{nips/GhaziGKMZ21, nips/EsmaeiliMPST21} is originally studied in PAC learning~\cite{colt/ChaudhuriH11}.
It only protects \emph{labels} (not the corresponding inputs/images): $(\epsilon, \delta)$-DP is only w.r.t. labels.

Our SeqDP is ``more secure'' than or at least ``complements'' label DP, which has an inherent flaw~\cite{nips/BusaSV21}:
As labels typically rely on their sequences (but not vice versa), it is very likely to recover the true labels from the raw sequences, even if the labels are protected (by any label-DP mechanism).
The follow-up~\cite{aistats/WuZWG23} shows the impossibility of label protection under label DP even with arbitrarily small $(\epsilon, \delta)$ when models generalize.
Moreover, labels can be absent (\eg, inference or self-supervised learning), for which SeqDP upgrades to the standard $(\epsilon,\delta)$-DP, whereas label DP is simply inapplicable.

\subsubsection{Sequence Local DP (SeqLDP)} 
We further define SeqLDP, the \emph{local} counterpart of sequence DP.
Note that the above discussion of label DP in relation to SeqDP also carries over to SeqLDP.

\begin{definition}[SeqLDP]
\label{def:seqldp}
\!\!For $\epsilon \geq 0, 0 \leq \delta \leq~1$, $\M$ satisfies~$(\epsilon,\delta)$-SeqLDP, if $\forall X, X'$ with the same $y$, and any possible output subset~$\O$,
\begin{align*}
    \Pr[\M(X,y) \in \O] \leq e^\epsilon \Pr[\M(X',y) \in \O] + \delta.
\end{align*}
\end{definition}

In theory, SeqLDP remains a strong notion (like the standard LDP).
It is meant to be information-theoretic protection on~sequence and bounds the indistinguishability of \emph{any} $X, X'$ (differing by up to $n$ tokens), and hence governing the ``usefulness'' of noisy embeddings.

\subsubsection{Sequence-Level SeqLDP vs. Token-Level SeqLDP}\label{sect:weakening}
In practice, as a strong notion balancing seemingly conflicting requirements (ideal theoretical guarantees and empirical utility), attaining a meaningful range of $\epsilon$ for SeqLDP is a struggle.
Adding Gaussian noise to the outputs of $f(\cdot)$ for $(\epsilon, \delta)$-SeqLDP requires bounding the $L_2$-sensitivity $S_2(f), \forall X, X'$.
Our approach is to \emph{normalize} the outputs (with extra benefits elaborated in Section~\ref{sect:splitting}), similar to clipping~gradients in DP-SGD.
It generally works better when $f(\cdot)$ has more layers 
(at the same meaningful range of~$\epsilon$)
since fewer (trainable) parameters/layers of $p(\cdot)$ are ``affected'' by the noisy outputs.

Unfortunately, when $f(\cdot)$ includes the first few layer(s), \eg, only the input embedding layer is available to the users (say, for saving user-side storage and computation overheads),
it leads to poor utility.
As a comprehensive study, we resort to \emph{row-wise} normalization with the (composition of) Lipschitz constants~\cite{icml/KimPM21} to maintain utility for those cases.\footnote{One
might also resort to the weaker random DP~\cite{jpc/HallWR13} -- $(\epsilon, \delta)$-DP holds on all but a small $\gamma$-proportion of ``unlikely'' $\D \simeq \D'$ for an extra parameter $\gamma \in (0,1)$.
It is useful when the global sensitivity is hard to compute.
Exploring it is left as future work.}
In contrast to the general normalization, it aims
for weaker SeqLDP at the \emph{token} level (\cf event-level vs. user-level LDP~\cite{sp/ZhouWCFS22}), a finer granularity in the ``protection hierarchy,'' protecting any \emph{neighboring} sequences (vs. datasets) differing in any single token (vs. sequence).
Details are deferred to
\ifnum\fullFlag=1
Appendix~\ref{apdx:token-level}.
\else
our full version.
\fi

\subsection{Our Approach for Sequence LDP}
\label{sect:splitting}

\ifnum\fullFlag=1
DP-Forward in our paper (except Appendix~\ref{apdx:token-level}) 
applies the general normalization approach 
to any $f(\cdot)$ for 
\emph{sequence-level} (Seq)LDP.
\fi

Let $f(\cdot)$ be an arbitrarily deep forward pass, 
ranging from the first (input embedding) layer itself to all but the last (task) layer in a BERT-based pipeline (Figure~\ref{fig:encoder}).
Correspondingly, let $p(\cdot)$ be the remaining layers, ranging from the last task layers themselves to all but the first (input embedding) layer.
Every sequence $X$ becomes an embedding matrix $f(X) \in \mathbb{R}^{n \times d}$ at the output of layers in encoders or $\mathbb{R}^{1 \times d}$ before task layers (Section~\ref{sect:pre_BERT}).
To offer $(\epsilon, \delta)$-SeqLDP, we adopt a suitable output perturbation mechanism $\M$, such as GM, considering that a dataset has only one labeled sequence.

Since $\M$ can work on the output of \emph{any} hidden layer, estimating $S_2(f)$ is non-trivial.
Specifically, $\MHA$ itself, let alone more layers included, is not Lipschitz continuous, meaning its outputs can change arbitrarily for even slight input variation~\cite{icml/KimPM21}.
To address this, our approach is to normalize or clip the function outputs:
$$||f(\cdot)||_F = C \ \text{or} \ f(\cdot) / \max(1, \frac{||f(\cdot)||_F}{C})$$
as in DP-SGD~\cite{ccs/AbadiCGMMT016}, where $C$ is a tunable parameter.
We then have $S_2(f) = 2C$.
Such normalization makes task utility less ``sensitive'' to the choice of $C$ since signal and noise increase proportionally with~$C$, whereas the signal may be unchanged when $f(\cdot)$ is not~clipped.
It also has many other benefits, such as stabilizing training, avoiding overfitting, and accelerating convergence~\cite{iclr/AboagyeZYWZWYP22}.
Hence, we resort to normalization in our experiments.
One can then calibrate Gaussian noise $Z$ and derive $f(X)+Z$ for the post-noise layers $p(\cdot)$.

Note that we remove the residual connection 
when adding noise to the output of the \emph{first} MHA layer to avoid $p(\cdot)$ reaccessing $X$ (dashed arrow, Figure~\ref{fig:encoder}) to maintain free post-processing.
This may lead to instability (\eg, gradient vanishing)~\cite{icml/YuZC0L21}, but it can be mitigated by pre-training new BERT without such a residual connection to keep consistent with later fine-tuning/inference.

DP-Forward using GM suffers from the ``curse of dimensionality'' when $d$ is large (\eg, $768$ for BERT-Base).
To alleviate the curse, we can append two linear maps, $M_1, M_2 \in \mathbb{R}^{d \times d'}$ with $d' \ll d$, such that $f(\cdot)$ and $p(\cdot)$ respectively have $M_1$ and $M_2$.
Both maps are randomly initialized and updated like other weights using gradients.
The raw embedding matrix is first right multiplied by $M_1$, leading to $\mathbb{R}^{n \times d'}$ or $\mathbb{R}^{1 \times d'}$, before being normalized.
Our privacy guarantee will not be affected since $S_2(f)$ remains the same.
We then use $M_2$ to restore the dimensionality to be compatible with the raw pipeline; $M_2$ incurs~no extra privacy loss due to the free post-processing.
Nevertheless, it needs dedicated efforts to modify the pipeline; dimension-reduced embedding matrices may also lose useful information, degrading task utility.
We thus make $M_1$ and $M_2$ optional (see Section~\ref{exp:configuration}).

\subsection{DP-Forward Fine-tuning}
\label{sect:fine-tune}

Suppose we use a raw, public BERT checkpoint\footnote{Using noisy BERT for fine-tuning (and subsequent inference) is deferred to Section~\ref{sect:pre-train}.} for fine-tuning. 
In the forward pass of the $i$-th ($i \geq 1$) step, it offers the latest $f^{(i-1)}(\cdot)$ to a \emph{batch} of users, mimicking the regular mini-batch SGD.
$f^{(0)}$ is from the raw checkpoint.
Users are randomly chosen (without replacement), and their number is a fixed parameter.
Users in the batch individually compute their noisy embeddings $f^{(i-1)}(X)+Z$ to ensure SeqLDP (Theorem~\ref{theo:seq-ldp}).
They then send them with unperturbed labels $y$ to the service provider, who runs $p^{(i-1)}(\cdot)$ over $(f^{(i-1)}(X)+Z, y)$ to compute the batch loss; any post-processing of embeddings under SeqLDP incurs no extra privacy degradation on $X$.
$p^{(0)}$ here includes the rest raw BERT part and randomly initialized task layers. 

During the back-propagation, the service provider can update $p^{(i-1)}(\cdot)$ to $p^{(i)}(\cdot)$ via the gradient (derived from the loss and noisy embeddings) of the post-noise layers.
To avoid accessing users' raw $X$, it needs to freeze the pre-noise layers $f^{(i-1)}(\cdot)$ as $f^{(0)}$.
Parameter freezing is compatible with the more recent zero-shot or in-context learning paradigm~\cite{emnlp/MinLHALHZ22}.
It is useful when models are gigantic and full fine-tuning is expensive.
However, the more layers are frozen, the worse the utility might be (even in non-private settings).

There are two general ways to update $f^{(i-1)}(\cdot)$ securely:
i) We can assume an extra trusted party (as in DP-SGD), but it becomes central DP.
ii) Users can first derive the gradients for the layers inside $f^{(i-1)}(\cdot)$ locally on their $X$ and then resort to secure aggregation~\cite{ccs/BonawitzIKMMPRS17} for global updates at the service provider.
However, it is costly.
For better utility, we update $f^{(i-1)}(\cdot)$ in experiments, requiring us to consider privacy degradation across different \emph{epochs} due to the composability (as detailed below).
Dedicated approaches (that balance efficiency, privacy, and utility) are left as future work.

\begin{theorem}
\label{theo:seq-ldp}
Let $f(\cdot)$ be the pre-noise function (of BERT-based pipelines) and $\M$ be GM with $\epsilon \geq 0, 0 \leq \delta \leq 1$.
DP-Forward fine-tuning running $\M$ on normalized/clipped $f(\cdot)$ ensures $(\epsilon, \delta)$-SeqLDP.
\end{theorem}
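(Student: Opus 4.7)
The plan is to verify SeqLDP directly from Definition~\ref{def:seqldp} by reducing the analysis to a standard application of the Gaussian mechanism on a function with bounded $L_2$-sensitivity, and then closing everything downstream via free post-processing.

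First, I fix any two sequences $X, X'$ sharing a common label $y$ (as the definition requires) and spell out what the user actually releases in one forward pass: the pair $(\tilde E, y)$ with $\tilde E = f(X) + Z$, where $Z$ is the matrix of i.i.d.\ Gaussian noise drawn by $\M$. Because $y$ is identical under both hypotheses $X$ and $X'$, it contributes a factor of $1$ to the likelihood ratio and can be ignored, so the SeqLDP inequality collapses to a statement about the distribution of $\tilde E$ alone.

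Second, I bound $S_2(f)$. Under hard normalization $||f(\cdot)||_F = C$, the triangle inequality yields $||f(X) - f(X')||_F \le ||f(X)||_F + ||f(X')||_F = 2C$, and under the soft clip $f(\cdot)/\max(1, ||f(\cdot)||_F/C)$ the post-clip norms are still at most $C$, so the same $2C$ bound holds. This sidesteps the fact that intermediate sub-layers such as MHA are not Lipschitz~\cite{icml/KimPM21}: normalization caps the sensitivity regardless of the internal geometry of $f$, and this is the main obstacle the proof has to navigate---it is precisely what forces the normalization step and what distinguishes DP-Forward from naive approaches that would have to scale noise to an unbounded Lipschitz constant. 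With $S_2(f) \le 2C$ in hand, plugging into the GM variance $\sigma^2 = 2\ln(1.25/\delta)\, S_2(f)^2/\epsilon^2$ recalled in Section~\ref{sect:pre} gives that releasing $\tilde E$ satisfies $(\epsilon,\delta)$-DP; since the ``dataset'' here is a single labeled sequence held by one user, this DP statement is exactly $(\epsilon,\delta)$-SeqLDP.

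Third, I invoke free post-processing. The service provider's post-noise function $p(\cdot)$ operates only on $\tilde E$ and the public $y$, and so does the downstream gradient with respect to $p$'s parameters; since the pre-noise $f^{(i-1)}(\cdot)$ is frozen from the user's perspective during back-propagation (or updated only through a private channel, per the discussion preceding the theorem), nothing the server computes can inflate the privacy loss on $X$. Removing the residual connection at the perturbed layer is essential here, since otherwise $p$ would receive a noiseless copy of the pre-normalization signal and post-processing would not apply. Any subtlety regarding multiple epochs enters only through composition of these per-release guarantees, which is orthogonal to the single-step claim the theorem makes; I would remark on it at the end rather than fold it into the main argument.
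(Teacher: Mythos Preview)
Your proposal is correct and follows essentially the same approach as the paper's proof, which is a one-line sketch: ``The proof follows that of GM~\cite{fttcs/DworkR14}. The crux is that $S_2(f), \forall X, X'$ is given by the output normalization, independent of the inputs.'' You have simply made explicit the details the paper leaves implicit---the role of the shared label $y$, the $2C$ sensitivity bound via the triangle inequality, and the post-processing closure---all of which are consistent with the surrounding text in Section~\ref{sect:splitting}.
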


The proof follows that of GM~\cite{fttcs/DworkR14}.
The crux is that $S_2(f), \forall X, X'$ is given by the output normalization, independent of the inputs.

\smallskip
\noindent \textbf{Privacy Accounting.}
An epoch refers to an entire transit of the private training corpus.
Every $X$ is used once per epoch.
The number of epochs $k$ is a hyperparameter, which is typically small.
Repeated applications of GM over the same $X$ ask for estimating the overall privacy loss due to the composability (unless freezing $f$ for re-using $f(X)+Z$).
The well-known moments accountant~\cite{ccs/AbadiCGMMT016}~(or~its generalization to R\'{e}nyi DP~\cite{csfw/Mironov17}) only provides a~loose \emph{upper} bound, which is even inapplicable if unbounded moments exist.
Gaussian DP~\cite{corr/BuDLS19} proposes an accountant based on the central limit theorem.
Yet, it leads to significant underestimation by a \emph{lower} bound.~Instead, we resort to a recent numerical accountant~\cite{nips/GopiLW21}, which outperforms RDP or GDP by approximating the true overall $\epsilon$ to arbitrary accuracy.
It composes the privacy curve of a mechanism by truncating and discretizing PLRVs with their PDFs convoluted by FFT~\cite{nips/GopiLW21}.

\subsection{DP-Forward with Shuffling versus DP-SGD}
\label{sect:cmp_dpsgd}

DP-Forward ensures SeqLDP for fine-tuning, while DP-SGD offers \emph{central} DP (for sequence-label pairs).
To facilitate a fair comparison (on privacy-utility tradeoffs), we make two changes.
First, we also perturb the labels with a suitable mechanism for the standard LDP, \ie, extending the protection from sequence to sequence-label pairs.
Second, we use shuffling~\cite{soda/ErlingssonFMRTT19} to ``translate'' our (label-protected) DP-Forward with LDP to claim (example-level) CDP as DP-SGD.

\smallskip
\noindent \textbf{Discrete Labels Perturbation.}
For most NLP tasks, \eg, bi-/multi-nary classification in the GLUE benchmark~\cite{iclr/WangSMHLB19}, the size $|\mathbf{y}|$ of label space is often small.
A simple yet effective solution for discrete data is randomized response (RR)~\cite{jasa/Warner65} proposed decades ago!
Specifically, RR perturbs a true label $y$ to itself $\hat{y} = y$ with the probability 
$$\Pr[y = \hat y] = e^\epsilon/(e^\epsilon + |\mathbf{y}| -1),$$
or to $\forall \hat y \in \mathbf{y} \setminus y$ uniformly, where $\mathbf{y}$ denotes the label space.

When $|\mathbf{y}|$ is large, we can use prior to ``prune'' $\mathbf{y}$ to smaller $\mathbf{y}'$~\cite{nips/GhaziGKMZ21}.
The prior can be publicly available (\eg, auxiliary corpora similar to the users' data) or progressively refined from a uniform distribution via the multi-stage training~\cite{nips/GhaziGKMZ21}.
One can then estimate an optimal $|\mathbf{y}'|$ by maximizing the probability that the output is correct, \ie, $\Pr[y = \hat y]$.
With (prior-aided) RR~\cite{nips/GhaziGKMZ21}, we can achieve full LDP.

\begin{theorem}
\label{theo:LDP_fine-tuning}
Let $f(\cdot)$ be the pre-noise function (of BERT-based pipelines), $\M$ be GM with $\epsilon_1 \geq 0, 0 \leq \delta \leq 1$, and $\M_{RR}$ be (prior-aided) RR with $\epsilon_2 \geq 0$.
DP-Forward fine-tuning perturbing $f(X)$ and $y$ separately by $\M$ and $\M_{RR}$ ensures $(\epsilon_1+\epsilon_2, \delta)$-LDP.
\end{theorem}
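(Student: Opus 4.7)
The plan is to view the joint mechanism $\M^{*}(X,y) := (\M(f(X)), \M_{RR}(y))$ as the composition of two mechanisms that each read the whole pair $(X,y)$ but act on one coordinate only, and then invoke basic sequential composition. The $\delta$ budget stays at $\delta$ (not $\delta_1+\delta_2$ with $\delta_2>0$) because $\M_{RR}$ is pure $\epsilon_2$-LDP.

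First I would lift Theorem~\ref{theo:seq-ldp} from SeqLDP to full LDP on pairs for the GM component. SeqLDP is stated for pairs sharing a label, but $\M(f(X))$ does not depend on $y$, so the normalization $\|f(\cdot)\|_F=C$ gives $S_2(f)=2C$ uniformly over \emph{all} inputs. Hence for any $(X,y),(X',y')$ and any output set $O_1$,
\[
\Pr[\M(f(X))\in O_1]\le e^{\epsilon_1}\Pr[\M(f(X'))\in O_1]+\delta,
\]
so $(X,y)\mapsto \M(f(X))$ is $(\epsilon_1,\delta)$-LDP on pairs. Symmetrically, it is standard that (prior-aided) RR with parameter $\epsilon_2$ is $\epsilon_2$-LDP on labels; since $(X,y)\mapsto \M_{RR}(y)$ only looks at $y$, it is $(\epsilon_2,0)$-LDP on pairs.

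Then I would invoke basic sequential composition: if $\M_{RR}$ and $\M\circ f$ are executed with independent randomness, the joint mechanism satisfies $(\epsilon_1+\epsilon_2,\delta+0)$-LDP on $(X,y)$, which is the claim. The cleanest formal route, matching the PLRV language of Section~\ref{sect:pre}, is to note that the privacy loss random variable $\L_{\M^{*},(X,y),(X',y')}$ decomposes as the sum of the two independent PLRVs $\L_{\M,f(X),f(X')}$ and $\L_{\M_{RR},y,y'}$, after which standard tail-bound composition gives the additive $(\epsilon_1+\epsilon_2,\delta)$ bound.

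The main obstacle, and the only place needing real care rather than quotation, is ensuring that the $\delta$ does not inflate to $e^{\epsilon_2}\delta$ under a naive ``product-of-probabilities'' argument applied to rectangular output sets. The PLRV decomposition avoids this, but it relies on genuine independence of the randomness used by $\M$ and $\M_{RR}$; making that assumption explicit in the construction of $\M^{*}$ is what turns the two single-coordinate guarantees into the stated joint one.
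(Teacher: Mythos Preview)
Your proposal is correct and follows the same approach as the paper, which simply states that the proof follows from the basic composition theorem. Your write-up is more careful than the paper's one-line justification---in particular, your explicit lifting of the two single-coordinate guarantees to pairs, the observation that $\delta$ stays at $\delta$ because $\M_{RR}$ is pure LDP, and the remark about independent randomness are all details the paper leaves implicit---but the core argument is identical.
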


The proof follows from the basic composition theorem~\cite{fttcs/DworkR14}.

\smallskip
\noindent \textbf{Privacy Amplification by Shuffling.}
If noisy embedding-label pairs are also shuffled properly, DP-Forward can claim example-level CDP (as in DP-SGD), which ``amplifies'' LDP guarantees by $\Theta(\sqrt{N})$ for a total number of $N$ users (without extra noise addition)~\cite{soda/ErlingssonFMRTT19}.
We then show that DP-Forward qualitatively outperforms DP-SGD from the SNR perspective under a similar privacy regime.

Suppose we train for an epoch, and the normalization~factor is $C$.
For DP-SGD, the batch size is $b$; the subsampling probability and the number of training steps are respectively $b/N$ and $N/b$.
If each step is $(\epsilon,\delta)$-DP, the overall privacy loss is $(O(\epsilon \sqrt{b/N}), \delta)$-DP using the strong composition and privacy amplification by subsampling~\cite{ccs/AbadiCGMMT016}.

DP-Forward with shuffling can also be seen as composing~$N$ subsamplings, each a fraction of size $1$~\cite{corr/Thomas22}.
It is $(O(\epsilon \sqrt{1/N}), \delta)$-DP, which is ``amplified'' from $(\epsilon,\delta)$-LDP.
For an easier analysis of SNR, we omit $\epsilon_2$ of RR since the overall $\epsilon$ is dominated by composing subsampled Gaussian.
So, our Gaussian noise variance is $b \times$ smaller than DP-SGD's in each step;
the SNR of each entry in embeddings vs. the aggregation of $b$ gradients can be estimated as $O(C / \sqrt{nd})$ for DP-Forward vs. $O(C / \sqrt{d'})$ for DP-SGD, where $d'$ is the gradient dimension and is much larger than $nd$, the embedding-matrix size.

\subsection{DP-Forward Inference}
\label{sect:inference}

Given only fine-tuned pipeline parts $f(\cdot)$, users can derive the noisy embedding matrices of their \emph{test} sequences for inferences at the service provider while ensuring $(\epsilon, \delta)$-LDP.
Inference using noise aligned to the noisy fine-tuning is also beneficial for task accuracy.

Local inference (as in DP-SGD) without noise forces the service provider to reveal its \emph{entire} pipeline, losing its intellectual property and incurring more time and storage costs for both $f(\cdot)$ and $p(\cdot)$.

\begin{theorem}
\label{theo:inference_ldp}
Let $f(\cdot)$ be the fine-tuned pre-noise layers (of BERT-based pipelines) and $\M$ be GM with $\epsilon \geq 0, 0 \leq \delta \leq 1$.
DP-Forward inference running $\M$ on normalized/clipped $f(\cdot)$ ensures $(\epsilon, \delta)$-LDP.
\end{theorem}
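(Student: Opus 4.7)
The plan is to reduce Theorem~\ref{theo:inference_ldp} to the already-established guarantee of Theorem~\ref{theo:seq-ldp}, observing that during inference there is no label to worry about, so the sequence-only guarantee is exactly the standard $(\epsilon,\delta)$-LDP of Definition~\ref{def:ldp}. Concretely, I would argue that for any fixed, fine-tuned pre-noise function $f(\cdot)$, the mechanism $\M_\text{inf}(X) = f(X) + Z$ with $Z$ drawn by GM is a post-processed randomized map from the single input $X$ to $\mathbb{R}^{n\times d}$; LDP then asks exactly for indistinguishability between $\M_\text{inf}(X)$ and $\M_\text{inf}(X')$ for all pairs $X,X'$, with no ``same $y$'' side condition. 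So the only substantive content to verify is the sensitivity bound and the Gaussian calibration.

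First, I would bound $S_2(f)$ for the inference-time $f(\cdot)$. Because the user applies the normalization/clipping step $f(\cdot)/\max(1,\|f(\cdot)\|_F/C)$ described in Section~\ref{sect:splitting}, every output lies in the Frobenius ball of radius $C$. Hence, by the triangle inequality,
\begin{equation*}
S_2(f) \;=\; \sup_{X,X'} \|f(X)-f(X')\|_F \;\le\; 2C,
\end{equation*}
independent of the inputs and of whatever (Lipschitz or non-Lipschitz) sub-layers $f$ contains. This is the same argument sketched after Theorem~\ref{theo:seq-ldp}, and it carries over verbatim because fine-tuning only changes the parameters inside $f$, not the normalization step the user applies on top.

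Next, I would invoke the classical Gaussian mechanism: with $\sigma^2 = 2\ln(1.25/\delta)\,S_2(f)^2/\epsilon^2$ and i.i.d.\ entries of $Z$, adding $Z$ to $f(X)$ yields $(\epsilon,\delta)$-DP with respect to any neighboring pair of inputs whose Frobenius distance is at most $S_2(f)$. Since the inference mechanism takes a single user's $X$ as its entire input (the ``dataset'' of size one), and since $f$ is frozen at inference time so no composition across queries is invoked in the statement, this DP guarantee instantiates to $(\epsilon,\delta)$-LDP in the sense of Definition~\ref{def:ldp}.

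The main obstacle is really a modeling subtlety rather than a calculation: I need to justify that $f(\cdot)$ can be treated as public and deterministic from the user's perspective, so that observing $f(X)+Z$ is no more revealing than observing the output of a fixed randomized map on $X$. If $f$ is made public or downloaded by the user, this is immediate; if the service provider instead evaluates $f$ on a user-supplied $X$, one must either (i) trust the provider to evaluate honestly, or (ii) assume the user computes $f$ locally. I would state this assumption explicitly and note that under it, free post-processing at the provider side (running $p(\cdot)$ on the noisy embedding) adds no further privacy loss, completing the proof. Composition across multiple inference queries from the \emph{same} user is outside the theorem's scope and would be handled separately by the numerical accountant used in Section~\ref{sect:fine-tune}.
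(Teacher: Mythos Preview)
Your proposal is correct and follows essentially the same approach as the paper: the paper's proof is the two-sentence remark that ``the proof is inherited from GM'' and that ``LDP holds for test sequences since the labels are absent,'' which is exactly your reduction via the sensitivity bound $S_2(f)\le 2C$ from normalization, the Gaussian mechanism calibration, and the observation that without labels SeqLDP collapses to standard $(\epsilon,\delta)$-LDP. Your additional discussion of the modeling assumption that $f$ is fixed/public and of post-processing by $p(\cdot)$ is sound and more explicit than the paper, but not a different route.
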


The proof is inherited from GM~\cite{fttcs/DworkR14}.
Different from DP-Forward fine-tuning, LDP holds for test sequences since the labels are absent.

\subsection{DP-Forward Pre-training} 
\label{sect:pre-train}
Directly using the raw BERT might not ``match'' DP-Forward fine-tuning/inference, degrading task utility.
Pre-training BERT with DP-Forward on publicly available text (\eg, Wikipedia), besides the private user-shared data, can make future operations ``adaptive'' to noise.
It requires us to modify the raw MLM objective in Eq.~(\ref{eq:mlm}):
$$L_{\MLM}^* = - \sum_{i \in \mathcal{I}} \log \Pr[x_i|\M(f(\hat{X}));\theta^*],$$
where $\theta^*$ denotes the parameters of ``noisy'' BERT.
This endows the noisy BERT with some ``de-noising'' ability since the objective is to predict the raw masked tokens from noisy embeddings $\M(f(\hat{X}))$.~It does not really breach privacy due to the free post-processing; LDP is ensured for each sequence, as the pre-training is self-supervised (without labels).
Such noisy pre-training can also be outsourced to dedicated GPU clusters, enabling ``de-noising BERT as a service.''

De-noising as post-processing is not new, but most prior arts need prior knowledge, \eg, Bayesian prior.
aGM formulates it as an unusual estimation problem since a single noisy output is observed for each input, which can then be solved by appropriate estimators, \eg, the Bayesian one~\cite{icml/BalleW18}.
Another attempt~\cite{sp/LecuyerAG0J19} trains a separate noisy auto-encoder, which learns the identity function $f(X) = X$ stacked before an image classification network, to de-noise the noisy input.
It has limited applications for only noisy input embeddings and incurs extra changes when migrating it to an NLP pipeline.

%!TEX root = main.tex
\section{Optimizing Matrix Gaussian Noise}
\label{sect:amgm}

To instantiate $\M$ for $f(\cdot) \in \mathbb{R}^{n \times d}$ of DP-Forward, a natural question is whether the classical GM is optimal.
The answer is no.
Its privacy analysis applies a \emph{sufficient but not necessary} condition for $(\epsilon, \delta)$-DP while using Gaussian tail approximations, and its variance formula cannot extend to $\epsilon > 1$ for a single run (\eg, inference)~\cite{fttcs/DworkR14}.

Another candidate is the matrix-variate Gaussian (MVG) mechanism~\cite{ccs/ChanyaswadDPM18}, tailored for matrix-valued functions.
It exploits possibly \emph{non-i.i.d.} noise from a \emph{matrix} Gaussian distribution and outperforms GM in several usage cases~\cite{ccs/ChanyaswadDPM18}.
Yet, it is not optimal either,~with the root cause still being based on a sufficient DP condition (Section~\ref{sect:mvg}).
To improve it, we resort to a necessary and sufficient condition from aGM~\cite{icml/BalleW18} for calibrating the matrix Gaussian noise
(Section~\ref{sect:agm-aMGM}).

\subsection{Matrix-Variate Gaussian (MVG) Mechanism}
\label{sect:mvg}

In contrast to the classical GM, MVG adopts possibly \emph{non-i.i.d.} noise $Z \in \mathbb{R}^{n \times d}$ drawn from the zero-mean matrix Gaussian distribution $\MN_{n,d}(0,\Sigma,\Psi)$, where $\Sigma \in \mathbb{R}^{n \times n}$ and $\Psi \in \mathbb{R}^{d \times d}$ are the row- and column-wise covariance matrices.
Intuitively, it adds less noise to more ``important'' rows or columns for possible better utility.
%potentially preserving utility.

\begin{definition}[Matrix Gaussian Distribution]
\label{def:MGD}
The PDF for an $n \times d$ random variable $Z$ following $\MN_{n,d}(0,\Sigma,\Psi)$ has the form:
\begin{align}
    \Pr{(Z|0,\Sigma,\Psi)} = \frac{\exp\left(-\frac{1}{2}||U^{-1}ZV^{-\top}||^2_F\right)}{(2\pi)^{nd/2}|\Psi|^{d/2}|\Sigma|^{n/2}},
\end{align}
where $U \in \mathbb{R}^{n \times n}$ and $V \in \mathbb{R}^{d \times d}$ are invertible with $\Sigma = UU^\top$ and $\Psi = V V^\top$, and 
$|\cdot|$ denotes the matrix determinant~\cite{cu/HJ2012}.
\end{definition}

The definition is equivalent to the conventional form given by the matrix trace.
It generalizes the univariate Gaussian used in GM; $Z$ becomes i.i.d. when $\Sigma, \Psi$ are diagonal and equal-valued.
Below recites the main theorem of the MVG mechanism for $(\epsilon,\delta)$-DP.

\begin{theorem}[The MVG Mechanism with $(\epsilon,\delta)$-DP~\cite{ccs/ChanyaswadDPM18}]
\label{theo:MVG}
Let 
\begin{align*}
    \sigma(\Sigma^{-1}) & = [\sigma_1(\Sigma^{-1}), \ldots, \sigma_n(\Sigma^{-1})]^\top , \\ 
    \sigma(\Psi^{-1}) & = [\sigma_1(\Psi^{-1}), \ldots, \sigma_d(\Psi^{-1})]^\top
\end{align*}
be the vectors of (non-increasingly ordered) singular values of $\Sigma^{-1}$ and $\Psi^{-1}$, respectively.
The MVG mechanism using noise from the~matrix Gaussian distribution $\MN_{n,d}(0, \Sigma, \Psi)$ satisfies $(\epsilon,\delta)$-DP if 
\begin{align*}
    ||\sigma(\Sigma^{-1})||_2 \cdot ||\sigma(\Psi^{-1})||_2 \leq \frac{\left(-\beta +\sqrt{\beta^2 +8 \alpha \epsilon}\right)^2}{4\alpha^2},
\end{align*}
where $\alpha = [H_r +H_{r,1/2}]\gamma^2+2H_r\gamma S_2(f)$, $\beta = 2(nd)^{1/4}H_rS_2(f)\zeta(\delta)$, with $H_r$ (or $H_{r,1/2}$) being the generalized harmonic number of order $r$ (of $1/2$), $\gamma$ being $\sup_{\D}||f(\D)||_F$, and $\zeta(\delta)=2\sqrt{-nd\ln{\delta}}-2\ln{\delta}+nd$.
\end{theorem}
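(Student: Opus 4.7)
The plan is to analyze the privacy loss random variable (PLRV) directly, splitting it into a deterministic part and a noise-linear part and bounding each with singular-value inequalities before combining via a quadratic formula. Fix neighboring $\D \simeq \D'$, set $\Delta := f(\D) - f(\D')$ (so $\|\Delta\|_F \leq S_2(f)$), and plug the PDF of $\MN_{n,d}(0,\Sigma,\Psi)$ from Definition~\ref{def:MGD} into the log-ratio. Writing $Z = M - f(\D)$ for the observed output $M$, the PLRV takes the closed form
\[
\L_{\M,\D,\D'} \;=\; \tfrac{1}{2}\bigl[\|U^{-1}(Z+\Delta)V^{-\top}\|_F^2 - \|U^{-1}Z V^{-\top}\|_F^2\bigr].
\]
Using $\|A+B\|_F^2 - \|A\|_F^2 = \|B\|_F^2 + 2\langle A,B\rangle_F$, this splits cleanly into a deterministic quadratic $Q := \tfrac{1}{2}\|U^{-1}\Delta V^{-\top}\|_F^2$ and a noise-linear term $T := \langle U^{-1}\Delta V^{-\top}, U^{-1}Z V^{-\top}\rangle_F$.

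For $Q$, I would exploit that $\Sigma^{-1} = U^{-\top}U^{-1}$ and $\Psi^{-1} = V^{-\top}V^{-1}$, so the singular values of $\Sigma^{-1},\Psi^{-1}$ are the squared singular values of $U^{-1},V^{-1}$. Rewriting $2Q = \Tr(\Sigma^{-1}\Delta \Psi^{-1}\Delta^\top)$ by the cyclic property of trace and then applying Von Neumann's trace inequality $\Tr(AB)\leq \sum_i \sigma_i(A)\sigma_i(B)$ together with rearrangement/Cauchy--Schwarz on the sorted singular-value vectors produces a bound of the form $2Q \leq \alpha\cdot\|\sigma(\Sigma^{-1})\|_2\cdot\|\sigma(\Psi^{-1})\|_2$. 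The generalized harmonic numbers $H_r$ and $H_{r,1/2}$ enter as the compensating factors $\sum 1/i$ and $\sum 1/\sqrt{i}$ coming from these Cauchy--Schwarz steps, while $\|f(\D)\|_F \leq \gamma$ is used to absorb cross terms arising from expanding $\Delta = f(\D)-f(\D')$ inside the Frobenius product, yielding the $\gamma^2$ and $\gamma S_2(f)$ contributions in $\alpha$.

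For the linear term $T$, I would observe that, conditionally on $\D,\D'$, $T$ is a zero-mean scalar Gaussian (as a linear functional of matrix-Gaussian noise) whose variance is controlled by $\|\sigma(\Sigma^{-1})\|_2\|\sigma(\Psi^{-1})\|_2$ and $S_2(f)$. A standard Gaussian tail inequality combined with concentration of $\|Z\|_F^2$ around its mean $nd$ gives, with probability at least $1-\delta$, a bound of the form $T \leq \tfrac{1}{2}\beta\sqrt{\|\sigma(\Sigma^{-1})\|_2\|\sigma(\Psi^{-1})\|_2}$, with $\zeta(\delta) = 2\sqrt{-nd\ln\delta} - 2\ln\delta + nd$ encoding the sub-Gaussian tail together with the $\|Z\|_F^2$ deviation, and the $(nd)^{1/4}$ factor coming from matching dimensions between the Frobenius-scale threshold and the per-entry variance.

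Combining, with probability at least $1-\delta$ we get $\L \leq \tfrac{1}{2}\alpha x^2 + \tfrac{1}{2}\beta x$ where $x := \sqrt{\|\sigma(\Sigma^{-1})\|_2\|\sigma(\Psi^{-1})\|_2}$. Requiring $\L \leq \epsilon$ yields the quadratic $\alpha x^2 + \beta x - 2\epsilon \leq 0$; since $\alpha,\beta,x\geq 0$, the quadratic formula gives exactly $x^2 \leq \bigl(-\beta + \sqrt{\beta^2+8\alpha\epsilon}\bigr)/(2\alpha)$, which after squaring (or equivalently, writing the bound directly on $x^2$) reproduces the theorem statement; the residual $\delta$-failure mass from the tail bound on $T$ supplies the $\delta$ slack in $(\epsilon,\delta)$-DP. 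The main obstacle is the tightness of the singular-value factorization of $Q$: loose spectral-norm bounds on $\|U^{-1}\Delta V^{-\top}\|_F$ are easy, but obtaining the clean product $\|\sigma(\Sigma^{-1})\|_2\|\sigma(\Psi^{-1})\|_2$ with exactly the $H_r,H_{r,1/2},\gamma$ coefficients requires carefully ordering singular values and invoking rearrangement-style trace inequalities, since any looseness here would inflate the covariance requirement and make the mechanism vacuous in the high-dimensional regimes where MVG is supposed to help.
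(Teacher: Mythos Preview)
The paper does not prove this theorem; it is quoted verbatim from Chanyaswad \etal~\cite{ccs/ChanyaswadDPM18}, and the only commentary the paper offers is on the \emph{slacks} in that original proof: it uses the sufficient condition $\Pr[\L \geq \epsilon]\leq\delta$, applies the Laurent--Massart theorem to force $\L\leq\epsilon$ on a high-probability event, and relies on ``a loose matrix-trace-based privacy analysis.'' So there is no in-paper proof to match against, only this sketch of the original argument.

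Judged against that sketch, your proposal has a structural mismatch that would prevent you from landing on the stated constants. Your clean split $\L = Q + T$ with $Q=\tfrac12\|U^{-1}\Delta V^{-\top}\|_F^2$ is exactly the aMGM decomposition the paper develops later (Lemma~\ref{lemma_4}); crucially, $Q$ depends only on $\Delta$, so no amount of ``re-expanding $\Delta=f(\D)-f(\D')$'' will produce the $\gamma=\sup_\D\|f(\D)\|_F$ factors in $\alpha$. In the original MVG proof those factors arise because one does \emph{not} substitute $O=f(\D)+Z$: the trace form is expanded with $O$, $f(\D)$, $f(\D')$ kept separate, yielding terms like $\Tr[\Psi^{-1}f(\D)^\top\Sigma^{-1}f(\D)]$ and $\Tr[\Psi^{-1}O^\top\Sigma^{-1}\Delta]$ that are then bounded individually via Von Neumann/harmonic-number inequalities, which is where $\gamma^2$ and $\gamma S_2(f)$ enter. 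Similarly, your treatment of $T$ as a scalar Gaussian is again the aMGM route; the specific form $\zeta(\delta)=2\sqrt{-nd\ln\delta}-2\ln\delta+nd$ is the Laurent--Massart tail bound for $\|U^{-1}ZV^{-\top}\|_F^2\sim\chi^2_{nd}$, which enters only after one applies Cauchy--Schwarz $|T|\leq\|U^{-1}\Delta V^{-\top}\|_F\,\|U^{-1}ZV^{-\top}\|_F$ rather than using the Gaussianity of $T$ directly. In short, your outline would yield a valid (and tighter) $(\epsilon,\delta)$-DP guarantee closer to aMGM, but it will not reproduce the MVG theorem's $\alpha,\beta,\zeta(\delta)$ as written.
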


To illustrate how the MVG mechanism works, 
we quote an example~\cite{ccs/ChanyaswadDPM18}: performing regression using an identity query on a liver disorders dataset~\cite{prl/McDermottF16} with $6$ features and $248$ samples (\ie, $f \in \mathbb{R}^{248 \times 6}$).
MVG treats `ALT' and a teacher label as the two most indicative~features based on some prior, thus added with less noise~\cite{ccs/ChanyaswadDPM18}.
To report the best \emph{empirical} results, it tries different precision budget (or noise variance) allocation strategies so that the total budget (Theorem~\ref{theo:MVG}) is not overspent.
For example, it evenly allocates $\tau > 50 \%$ (a tunable parameter) of the budget to the two important features and the rest to the other four.
Compared to GM using i.i.d. Gaussian noise, MVG can improve root mean square error (RMSE) by up to $0.003$ at the same privacy level~\cite{ccs/ChanyaswadDPM18}.

\smallskip
\noindent \textbf{Sub-optimality of MVG.}
Theorem~\ref{theo:MVG} presents an upper bound~on the product of~$L_2$-norms of two singular-value vectors $\sigma(\Sigma^{-1})$ and $\sigma(\Psi^{-1})$, assuming $||f(\D)||_F$ is bounded for any $\D$ by a constant~$\gamma$.
The upper bound monotonically decreases with $\beta$ that depends on $nd$ and approaches $0$ as $nd \rightarrow \infty$, making the sums of noise variances large.
A similar situation exists in high privacy regimes~$\epsilon \rightarrow 0$.

At least two slacks caused the sub-optimality.
The first and foremost is due to a \emph{sufficient condition} for $(\epsilon,\delta)$-DP~\cite{fttcs/DworkR14}: $\Pr[\L_{\M,\D,\D'} \geq \epsilon] \leq \delta$, which is also used in the classical GM.
With the Laurent-Massart Theorem~\cite{as/LaurentM00}, MVG further transforms it to $\Pr[\L_{\M,\D,\D'} \leq \epsilon] = 1$ for a subset of all the possible outputs.
The second lies in a loose matrix-trace-based privacy analysis; a follow-up~\cite{tmc/YangXYWGLL23} derives a tighter bound from Definition~\ref{def:MGD} and a matrix-norm inequality.

\begin{algorithm}[t!]
\caption{$A(\epsilon, \delta)$: Derive the Upper Bound $\B$}
\label{alg:up_bound}
\LinesNumbered
\KwIn{Privacy parameters $\epsilon, \delta$}
\KwOut{$\B$}
Let $\delta_0 = \Phi(0) - e^\epsilon \Phi(-\sqrt{2\epsilon})$\label{line1}\;
\eIf{$\delta \geq \delta_0$}{
Re-write $g^+_\epsilon(v) = \Phi(\sqrt{\epsilon v})- e^\epsilon\Phi(-\sqrt{\epsilon (v+2)}$\;
Compute $v^* = \sup\{v \in \mathbb{R}_{\geq 0}: g^+_\epsilon(v) = \delta \}$\;
Let $\alpha = \sqrt{1+v^*/2}-\sqrt{v^*/2}$\;
}
{$g^-_\epsilon(u) = \Phi(-\sqrt{\epsilon u})- e^\epsilon\Phi(-\sqrt{\epsilon (u+2)})$\;
Compute $u^* = \inf \{u \in \mathbb{R}_{\geq 0}:g^-_\epsilon(u) = \delta \}$\;
Let $\alpha = \sqrt{1+u^*/2} + \sqrt{u^*/2}$\;
}
Return $\B = \sqrt{2\epsilon}/\alpha$
\end{algorithm}

\subsection{Analytic Matrix Gaussian Mechanism}
\label{sect:agm-aMGM}

To enhance MVG while still adding possibly non-i.i.d. noise $Z \sim \MN_{n,d}(0,\Sigma,\Psi)$, we put forth the analytic matrix Gaussian mechanism (aMGM) by exploiting a \emph{necessary and sufficient} condition~for $(\epsilon,\delta)$-DP, which is formulated using two PLRVs by the analytic~GM (aGM)~\cite{icml/BalleW18}.
It is non-trivial\footnote{A recent pre-print~\cite{corr/YangXLLW21} also studied using matrix Gaussian distribution.
The~proof~of \cite[Lemma~$4$]{corr/YangXLLW21}, pivotal for our Theorem~\ref{theo:dp_amgm}, is problematic.
\ifnum\fullFlag=1
We prove it in Appendix~\ref{apdx_sectB}.
\else
We prove it in our full~version.
\fi}
since we now need to work with two covariance matrices $\Sigma$ and $\Psi$ instead of a single variance $\sigma^2$ in aGM.

\begin{theorem}[\cite{icml/BalleW18}]
\label{suff_and_nece}
A mechanism $\M$ is $(\epsilon,\delta)$-DP iff, $\forall \D \simeq \D'$,
\begin{align}
\label{eq:suff_and_nece}
    \Pr[\L_{\M,\D,\D'} \geq \epsilon] - e^\epsilon \Pr[\L_{\M,\D',\D} \leq -\epsilon] \leq \delta.
\end{align}
\end{theorem}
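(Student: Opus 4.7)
The plan is to show the ``only if'' direction by applying the definition of $(\epsilon,\delta)$-DP to a single, cleverly chosen output set, and the ``if'' direction by showing that this set is in fact the worst case over all measurable $\O$. Let $\mu_\D$ and $\mu_{\D'}$ denote the laws of $\M(\D)$ and $\M(\D')$ with densities $p_\D, p_{\D'}$ with respect to a common dominating measure (\eg, $\mu_\D + \mu_{\D'}$). Rewriting Definition~\ref{def:cdp} equivalently, $\M$ is $(\epsilon,\delta)$-DP iff, for all $\D \simeq \D'$,
\begin{equation*}
\sup_{\O} \bigl(\mu_\D(\O) - e^\epsilon \mu_{\D'}(\O)\bigr) \;\leq\; \delta.
\end{equation*}

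The first step is to identify this supremum. Writing $\mu_\D(\O) - e^\epsilon \mu_{\D'}(\O) = \int_\O (p_\D - e^\epsilon p_{\D'})\,d(\mu_\D+\mu_{\D'})$, the integrand is nonnegative exactly on
\begin{equation*}
\O^* \;=\; \bigl\{o : p_\D(o) \geq e^\epsilon\,p_{\D'}(o)\bigr\} \;=\; \bigl\{o : \L_{\M,\D,\D'}(o) \geq \epsilon\bigr\},
\end{equation*}
so $\O^*$ attains the supremum by a Neyman--Pearson-style argument (augmenting $\O^*$ with points outside it can only shrink the objective, and removing points inside $\O^*$ can only shrink it as well). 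Points where $p_{\D'}(o) = 0$ but $p_\D(o) > 0$ are handled by the convention $\L_{\M,\D,\D'}(o) = +\infty$, which correctly places them in $\O^*$.

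The second step is to rewrite $\mu_\D(\O^*)$ and $\mu_{\D'}(\O^*)$ in PLRV form. By the definition of the PLRV as the distribution of $\L_{\M,\D,\D'}(O)$ with $O \sim \mu_\D$, we immediately obtain $\mu_\D(\O^*) = \Pr[\L_{\M,\D,\D'} \geq \epsilon]$. For the second term, observe the pointwise identity $\L_{\M,\D,\D'}(o) = -\L_{\M,\D',\D}(o)$ wherever both are finite, so $\O^* = \{o : \L_{\M,\D',\D}(o) \leq -\epsilon\}$ (up to a $\mu_{\D'}$-null set of $+\infty$ values of $\L_{\M,\D,\D'}$, on which $p_{\D'} = 0$ anyway). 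Since $\L_{\M,\D',\D}$ as an RV is by definition $\L_{\M,\D',\D}(O)$ for $O \sim \mu_{\D'}$, we obtain $\mu_{\D'}(\O^*) = \Pr[\L_{\M,\D',\D} \leq -\epsilon]$. Substituting both expressions yields precisely the inequality~(\ref{eq:suff_and_nece}) as the supremum expression, which establishes both implications.

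The main technical obstacle is the careful measure-theoretic bookkeeping when $\mu_\D \not\ll \mu_{\D'}$: one must verify that the contribution of the singular part of $\mu_\D$ is correctly accounted for by the convention $\L_{\M,\D,\D'} = +\infty$ (which lies inside $\{\L \geq \epsilon\}$), and that the corresponding $\mu_{\D'}$-measure of this region is zero (so it does not appear in $\Pr[\L_{\M,\D',\D} \leq -\epsilon]$). A secondary subtlety is that the boundary set $\{\L_{\M,\D,\D'} = \epsilon\}$ is benign: including or excluding it in $\O^*$ changes $\mu_\D(\O^*) - e^\epsilon \mu_{\D'}(\O^*)$ by exactly zero, so strict versus non-strict inequality in the PLRV tail bounds is immaterial, which is why both $\geq \epsilon$ and $\leq -\epsilon$ appear with the same (non-strict) orientation in the statement.
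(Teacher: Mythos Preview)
Your proof is correct and follows the standard Neyman--Pearson argument that also appears in the cited source~\cite{icml/BalleW18}. Note, however, that the present paper does not give its own proof of this theorem: it is quoted as a known result from~\cite{icml/BalleW18}, and the paper only remarks that the sufficient condition $\Pr[\L_{\M,\D,\D'} \geq \epsilon] \leq \delta$ follows trivially since $\Pr[\L_{\M,\D',\D} \leq -\epsilon] \geq 0$. There is therefore nothing in the paper to compare your argument against beyond that one-line observation; your write-up simply supplies the full proof the paper omits, and it does so along the same lines as the original reference.
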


It directly implies the sufficient condition due to $\Pr[\L_{\M,\D',\D} \leq -\epsilon] \geq 0$.
We next show that $\L_{\M,\D,\D'}$ or $\L_{\M,\D',\D}$ of aMGM is also Gaussian, a similar result has been proven in aGM~\cite[Lemma~3]{icml/BalleW18}.

\begin{lemma}
\label{lemma_4}
The PLRVs of our aMGM follow a distribution $\N(\eta,2\eta)$ with $\eta = \frac{||U^{-1}\Delta V^{-\top}||^2_F}{2}$, where $\Delta = f(\D) - f(\D')$.
\end{lemma}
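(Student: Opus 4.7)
}
The plan is to directly manipulate the log-ratio of two matrix-Gaussian densities and reduce it, via a whitening change of variables, to an affine function of i.i.d. standard normals, whose distribution is trivially Gaussian. First, I would fix $\D \simeq \D'$ and observe that $\M(\D)=f(\D)+Z$ and $\M(\D')=f(\D')+Z$ share the same normalizing constant $(2\pi)^{nd/2}|\Psi|^{d/2}|\Sigma|^{n/2}$ in the PDF given by Definition~\ref{def:MGD}. Consequently, for a realization $O=f(\D)+W$ with $W\sim\MN_{n,d}(0,\Sigma,\Psi)$, the normalizers cancel in
\begin{align*}
\L_{\M,\D,\D'}(O) = \tfrac{1}{2}\bigl\|U^{-1}(O-f(\D'))V^{-\top}\bigr\|_F^2 - \tfrac{1}{2}\bigl\|U^{-1}(O-f(\D))V^{-\top}\bigr\|_F^2.
\end{align*}

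Next, I would substitute $O-f(\D')=W+\Delta$ and let $A=U^{-1}WV^{-\top}$, $B=U^{-1}\Delta V^{-\top}$. Expanding $\|A+B\|_F^2 - \|A\|_F^2 = 2\langle A,B\rangle + \|B\|_F^2$ (with $\langle\cdot,\cdot\rangle$ the trace/Frobenius inner product) yields the clean decomposition
\begin{align*}
\L_{\M,\D,\D'} = \langle A, B\rangle + \tfrac{1}{2}\|B\|_F^2,
\end{align*}
so the distribution of the PLRV is entirely determined by that of the random matrix $A$, while $\tfrac{1}{2}\|B\|_F^2 = \eta$ is the deterministic shift that will become the mean.

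The crux, which I anticipate being the main obstacle, is rigorously establishing that $A=U^{-1}WV^{-\top}$ has i.i.d.\ standard normal entries whenever $W\sim\MN_{n,d}(0,\Sigma,\Psi)$ with $\Sigma=UU^\top$, $\Psi=VV^\top$. I would handle this via the vectorization identity $\vect(U^{-1}WV^{-\top})=(V^{-1}\otimes U^{-1})\vect(W)$ together with the equivalent multivariate-normal characterization $\vect(W)\sim\N(0,\Psi\otimes\Sigma)$; then $\vect(A)\sim\N(0,(V^{-1}\otimes U^{-1})(VV^\top\otimes UU^\top)(V^{-\top}\otimes U^{-\top}))=\N(0,I_{nd})$ by the Kronecker mixed-product rule. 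Alternatively, a change-of-variables computation on the matrix-Gaussian PDF of Definition~\ref{def:MGD} with Jacobian $|\Sigma|^{-d/2}|\Psi|^{-n/2}$ yields the same conclusion.

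Finally, with $A$ having i.i.d.\ $\N(0,1)$ entries, the inner product $\langle A,B\rangle=\sum_{i,j}A_{ij}B_{ij}$ is a linear combination of independent standard normals and thus distributes as $\N(0,\|B\|_F^2)=\N(0,2\eta)$. Shifting by $\eta$ gives $\L_{\M,\D,\D'}\sim\N(\eta,2\eta)$, as claimed; the symmetric case $\L_{\M,\D',\D}$ follows by exchanging the roles of $\D$ and $\D'$, which only changes the sign of $\Delta$ and leaves $\eta=\tfrac{1}{2}\|U^{-1}\Delta V^{-\top}\|_F^2$ unchanged.
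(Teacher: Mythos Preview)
Your proposal is correct and follows essentially the same route as the paper's proof: both expand the log-ratio of matrix-Gaussian densities into the constant $\eta=\tfrac{1}{2}\|U^{-1}\Delta V^{-\top}\|_F^2$ plus the Frobenius inner product of the whitened noise $U^{-1}ZV^{-\top}$ with $U^{-1}\Delta V^{-\top}$, and then argue this inner product is $\N(0,2\eta)$ because the whitened noise has i.i.d.\ $\N(0,1)$ entries. The only cosmetic difference is that the paper invokes its Lemma~\ref{lemma_6} (the affine-transformation sampling lemma, read in reverse) for the whitening step, whereas you supply a direct vectorization/Kronecker-product argument---both are standard and equivalent.
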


With Lemma~\ref{lemma_4}, we can then specialize the left-hand side of Eq.~(\ref{eq:suff_and_nece}).
Particularly, we use the Gaussian cumulative density function~(CDF)
$$\Phi(t) = \Pr[\N(0,1) \leq t] = \frac{1}{\sqrt{2\pi}} \int^t_{-\infty} e^{-y^2/2} dy$$
to explicitly express the two probabilities (see Lemma~\ref{lemma_5}) instead of approximating them by the tail bounds of a Gaussian distribution.

\begin{lemma}
\label{lemma_5}
For any $\D \simeq \D'$, let $\Delta' = U^{-1}\Delta V^{-\top}$ with $\Delta = f(\D) - f(\D')$.
The following holds for any $\epsilon \geq 0$:
\begin{align*}
    \Pr[\L_{\M,\D,\D'} \geq \epsilon] & = \Phi\left(\frac{||\Delta'||_F}{2} - \frac{\epsilon}{||\Delta'||_F}\right), \\
    \Pr[\L_{\M,\D',\D} \leq -\epsilon] & = \Phi\left(-\frac{||\Delta'||_F}{2} - \frac{\epsilon}{||\Delta'||_F}\right).
\end{align*}
\end{lemma}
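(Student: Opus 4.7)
\medskip

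\noindent\textbf{Proof Proposal for Lemma~\ref{lemma_5}.} The plan is to leverage Lemma~\ref{lemma_4} directly: since the PLRV is already characterized as a one-dimensional Gaussian, both tail probabilities reduce to standardizing that Gaussian and reading off values of $\Phi$. No covariance-matrix manipulation is needed beyond recognizing $\|\Delta'\|_F$ as the natural scale.

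First, I would invoke Lemma~\ref{lemma_4} to write $\L_{\M,\D,\D'} \sim \N(\eta, 2\eta)$ with $\eta = \|\Delta'\|_F^2/2$, so that $\sqrt{2\eta} = \|\Delta'\|_F$. Then for a standard normal $Z \sim \N(0,1)$, I would express $\L_{\M,\D,\D'} \stackrel{d}{=} \eta + \sqrt{2\eta}\,Z$ and compute
\begin{align*}
\Pr[\L_{\M,\D,\D'} \geq \epsilon]
 & = \Pr\!\left[Z \geq \frac{\epsilon - \eta}{\sqrt{2\eta}}\right]
  = \Phi\!\left(\frac{\eta - \epsilon}{\sqrt{2\eta}}\right).
\end{align*}
Substituting $\eta = \|\Delta'\|_F^2/2$ yields $(\eta-\epsilon)/\sqrt{2\eta} = \|\Delta'\|_F/2 - \epsilon/\|\Delta'\|_F$, which is precisely the first claim.

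For the second equality, I would first observe the symmetry: swapping $\D$ and $\D'$ only replaces $\Delta = f(\D)-f(\D')$ by $-\Delta$, and since $\|U^{-1}(-\Delta)V^{-\top}\|_F = \|\Delta'\|_F$, Lemma~\ref{lemma_4} gives the \emph{same} parameter $\eta$, \ie, $\L_{\M,\D',\D} \sim \N(\eta, 2\eta)$ as well. Writing $\L_{\M,\D',\D} \stackrel{d}{=} \eta + \sqrt{2\eta}\,Z$ again, I would then compute
\begin{align*}
\Pr[\L_{\M,\D',\D} \leq -\epsilon]
 & = \Pr\!\left[Z \leq \frac{-\epsilon - \eta}{\sqrt{2\eta}}\right]
  = \Phi\!\left(-\frac{\|\Delta'\|_F}{2} - \frac{\epsilon}{\|\Delta'\|_F}\right),
\end{align*}
using symmetry of $\Phi$ only implicitly (in fact the argument is already correctly signed).

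The only subtlety I anticipate is the symmetry step for the second identity: one must be careful that Lemma~\ref{lemma_4}, stated for a generic neighboring pair, applies equally to the ordered pair $(\D',\D)$ with the \emph{same} $\eta$, which hinges on $\|\Delta\|_F$ being invariant under sign flip (and on $U,V$ being chosen independently of the ordering, as is standard). Once that is noted, the rest is a one-line standardization and the degenerate case $\|\Delta'\|_F = 0$ (where $\eta = 0$ and both probabilities trivially equal $\Phi(\pm \infty)$ with the convention $\epsilon/0 = +\infty$) can be handled as a limit.
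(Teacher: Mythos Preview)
Your proposal is correct and follows essentially the same route as the paper's own proof: invoke Lemma~\ref{lemma_4} to obtain $\L_{\M,\D,\D'}\sim\N(\eta,2\eta)$, standardize via $\eta+\sqrt{2\eta}\,Z$, and read off the two probabilities as values of $\Phi$. Your explicit remark that swapping $\D,\D'$ preserves $\eta$ (since $\|{-\Delta'}\|_F=\|\Delta'\|_F$) and your handling of the degenerate case $\|\Delta'\|_F=0$ are both slightly more careful than the paper, which simply says ``a similar argument applied to $\L_{\M,\D',\D}$.''
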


We can further re-write the left-hand side of Eq.~(\ref{eq:suff_and_nece}) as $g(||\Delta'||_F)$:
\begin{align}
\label{privacy_curve} \Phi\left(\frac{||\Delta'||_F}{2}-\frac{\epsilon}{||\Delta'||_F}\right) - e^\epsilon \Phi\left(-\frac{||\Delta'||_F}{2}-\frac{\epsilon}{||\Delta'||_F}\right),
\end{align}
a function of $\Delta$ and $(\Sigma , \Psi)$; it is defined w.r.t. $\Delta$ and $\sigma^2$ for aGM~\cite{icml/BalleW18}.~To satisfy Theorem~\ref{suff_and_nece}, we require 
$g(||\Delta'||_F) \leq \delta, \forall \D \simeq \D'$.
Since $g(\cdot)$ is \emph{monotonically increasing}~\cite[Lemma~7]{icml/BalleW18}, we first find the upper bound $\B$ of $||\Delta'||_F$ as the ``solution'' to $g(||\Delta'||_F) = \delta$ and then determine $U,V$ (hence $\Sigma,\Psi$) based on $\B$ and $\Delta$ with $S_2(f) = \sup_{\D \simeq \D'}||\Delta||_F$.

\subsubsection{Computing the upper bound $\B$}
One could derive an analytic expression for $\B$ using the tail bounds of $\Phi(t)$, which is sub-optimal due to the slack in the tail bounds.
Instead, we adapt a ``numerical solver,'' as detailed in Alg.~\ref{alg:up_bound}, for $\B$ since $\Phi(t)$ can also be represented by $(1+\erf(t/\sqrt{2}))/2$, where $\erf$ is the standard error function.\footnote{Its efficient implementation to extremely high accuracy is supported in most statistical and numerical software packages, \eg, Python math library.}

For the first term of Eq.~(\ref{privacy_curve}), its input $||\Delta'||_F/2-\epsilon/||\Delta'||_F$ changes sign at $||\Delta'||_F = \sqrt{2\epsilon}$, while the other term's input $-||\Delta'||_F/2-\epsilon/||\Delta'||_F$ is always negative.
Therefore, we only 
consider $||\Delta'||_F = \sqrt{2\epsilon}/\alpha$ under two cases $0 < \alpha \leq 1$ and $\alpha > 1$ for a variable $\alpha$.

When $\alpha = 1$, $\delta_0 = g(\sqrt{2\epsilon})$ in line~$1$.
If $\delta \geq \delta_0$ (or $0 < \alpha \leq 1$), we can use $v = (1/\alpha - \alpha)^2/2$ to re-write $g(\cdot)$ as $g^+_\epsilon(v)$ (line~3).
For $\alpha>1$, we can use $u = (\alpha - 1/\alpha)^2/2$ to re-write $g(\cdot)$ as $g^-_\epsilon(u)$ (line~7).
In either case, given the ``oracle'' computing $\Phi(t)$ via $\erf$, we derive $u^*$ or $v^*$ using Newton's method, recover $\alpha$, and return $\B = \sqrt{2\epsilon}/\alpha$.

\subsubsection{Determining the covariance matrices $\Sigma = UU^\top$ and $\Psi = VV^\top$}

\ifnum\fullFlag=1
With Lemma~\ref{lemma_3},
\else
With~\cite[Lemma~$4$]{corr/YangXLLW21},
\fi
and let $\sigma_i(\cdot)$ be the $i^{\text{th}}$ singular value; we have 
\begin{align}
\label{eq:from_lemma3}
||U^{-1} \Delta V^{-\top}||^2_F \leq \sum^{\min \{n, d\}}_{i=1}\sigma^2_i(U^{-1})\sigma^2_i(\Delta)\sigma^2_i(V^{-\top}).
\end{align}
%It is known that
Since
$\sigma_i(U^{-1}) = 1/\sigma_{n-i+1}(U)$ and $\sigma_i(V^{-\top}) = 1 / \sigma_{d-i+1}(V)$ with $i \in [1,r]$,
we transform the right-hand side of Eq.~(\ref{eq:from_lemma3}) to
\begin{align*}
%\sum^r_{i=1}\sigma^2_i(U^{-1})\sigma^2_i(\Delta)\sigma^2_i(V^{-\top}) & =
\sum^r_{i=1}\frac{\sigma^2_i(\Delta)}{\sigma^2_{n-i+1}(U)\sigma^2_{d-i+1}(V)} 
\leq \frac{\sum^r_{i=1} \sigma^2_i(\Delta)}{\sigma^2_{n}(U)\sigma^2_{d}(V)}
= \frac{||\Delta||^2_F}{\sigma^2_{n}(U)\sigma^2_{d}(V)},
\end{align*}
\ifnum\fullFlag=1
where the inequality follows from Theorem~\ref{theo:SVD}, \ie, $\sigma_1(\cdot) \geq \cdots \geq \sigma_r(\cdot)$, and the last equality is directly from Lemma~\ref{lemma_2}.
\else
where the inequality follows from $\sigma_1(\cdot) \geq \cdots \geq \sigma_r(\cdot)$ and the last equality is directly from Lemma~$5$ in our full version.
\fi

Given $\B \geq ||U^{-1} \Delta V^{-\top}||_F $, it suffices to let $||\Delta||_F / \sigma_{n}(U)\sigma_{d}(V) \leq \B$ with $ \Delta = f(\D)- f(\D'),\forall \D \simeq \D'$.
Recall that $S_2(f)$ is the upper bound on $||\Delta||_F, \forall \D \simeq \D'$, we now reach the main theorem.

\begin{theorem}
\label{theo:dp_amgm}
Our aMGM satisfies $(\epsilon,\delta)$-DP, iff
$$\frac{S_2(f)}{\B} \leq \sigma_n(U) \sigma_d(V),$$
where $\B=A(\epsilon, \delta)$ as in Alg.~\ref{alg:up_bound}, $S_2(f)$ is the $L_2$-sensitivity, $\sigma_n(U)$ and $\sigma_d(V)$ are respectively the smallest singular values of $U$ and $V$.
\end{theorem}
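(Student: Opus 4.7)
The plan is to chain together three ingredients already assembled in Section~\ref{sect:agm-aMGM}: the necessary-and-sufficient DP characterization in Theorem~\ref{suff_and_nece}, the explicit PLRV expressions in Lemmas~\ref{lemma_4} and~\ref{lemma_5}, and the matrix singular-value inequality~(\ref{eq:from_lemma3}). Together these collapse the DP condition into a single scalar inequality on the smallest singular values of $U$ and $V$.

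First I would invoke Theorem~\ref{suff_and_nece}: aMGM is $(\epsilon,\delta)$-DP iff, for every $\D\simeq\D'$, the left-hand side of Eq.~(\ref{eq:suff_and_nece}) is at most $\delta$. Substituting the Gaussian CDF expressions from Lemma~\ref{lemma_5} rewrites that left-hand side precisely as $g(\|\Delta'\|_F)$, the function displayed in Eq.~(\ref{privacy_curve}), where $\Delta'=U^{-1}\Delta V^{-\top}$ and $\Delta=f(\D)-f(\D')$. Hence the DP condition becomes $g(\|\Delta'\|_F)\le\delta$ for every neighbouring pair $\D\simeq\D'$.

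Next, because $g$ is monotonically increasing (as quoted from Balle--Wang) and Alg.~\ref{alg:up_bound} returns the unique $\B=A(\epsilon,\delta)$ satisfying $g(\B)=\delta$, the DP condition is equivalent to $\sup_{\D\simeq\D'}\|U^{-1}\Delta V^{-\top}\|_F\le\B$. I would then apply inequality~(\ref{eq:from_lemma3}) with the identities $\sigma_i(U^{-1})=1/\sigma_{n-i+1}(U)$ and $\sigma_i(V^{-\top})=1/\sigma_{d-i+1}(V)$, and the ordering $\sigma_1(\Delta)\ge\cdots\ge\sigma_r(\Delta)$, to bound $\|\Delta'\|_F^2\le\|\Delta\|_F^2/(\sigma_n^2(U)\sigma_d^2(V))$. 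Taking the supremum over $\D\simeq\D'$ and using $\sup\|\Delta\|_F=S_2(f)$ gives $\sup\|\Delta'\|_F\le S_2(f)/(\sigma_n(U)\sigma_d(V))$, so the stated inequality $S_2(f)/\B\le\sigma_n(U)\sigma_d(V)$ is \emph{sufficient} for DP.

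The main obstacle is the converse (``only if'') direction, which requires the singular-value chain to be tight. I would handle it by arguing that in the worst case the difference $\Delta$ with $\|\Delta\|_F=S_2(f)$ can be chosen so that its top singular direction aligns with the smallest-singular-value directions of $U^{-1}$ and $V^{-\top}$: because the normalization/clipping adopted in Section~\ref{sect:splitting} makes the set of realizable $\Delta$ essentially a Frobenius ball of radius $S_2(f)$, equality in~(\ref{eq:from_lemma3}) is attainable by selecting $\Delta$ as a rank-one matrix $s\,u_n v_d^{\top}$ with $s=S_2(f)$ and $u_n,v_d$ the singular vectors of $U,V$ associated with $\sigma_n(U),\sigma_d(V)$. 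Combined with the monotone equivalence $g(\cdot)\le\delta\iff\|\Delta'\|_F\le\B$, this forces the supremum to equal $S_2(f)/(\sigma_n(U)\sigma_d(V))$, delivering both directions of the iff and completing the proof.
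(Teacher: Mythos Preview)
Your sufficiency argument is essentially the paper's: Theorem~\ref{suff_and_nece} and Lemma~\ref{lemma_5} reduce DP to $g(\|\Delta'\|_F)\le\delta$ for all neighbours, monotonicity of $g$ together with Alg.~\ref{alg:up_bound} turns this into $\|\Delta'\|_F\le\B$, and the singular-value chain from Eq.~(\ref{eq:from_lemma3}) then shows that the stated condition on $\sigma_n(U)\sigma_d(V)$ suffices. The paper's appendix proof does exactly this (its ``inverse direction'').

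Where you diverge is the necessity (``only if'') direction. The paper dispatches it in one sentence (``directly follows from all the derivations in Section~\ref{sect:agm-aMGM}''), which is really just a pointer back to the one-sided inequality chain and does not establish tightness. You go further with a concrete rank-one construction $\Delta=S_2(f)\,u_n v_d^{\top}$, which would indeed make $\|U^{-1}\Delta V^{-\top}\|_F=S_2(f)/(\sigma_n(U)\sigma_d(V))$. The gap is realizability: output normalization only forces $\|f(X)\|_F=C$; it does not guarantee that some neighbouring pair $\D\simeq\D'$ produces $f(\D)-f(\D')$ equal to a \emph{prescribed} rank-one matrix aligned with the extremal singular directions of $U$ and $V$. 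Without an additional assumption on the range of $f$ (e.g., that normalized outputs cover the Frobenius sphere), the ``only if'' direction is not established---by you or, strictly speaking, by the paper. In short, your sufficiency direction is solid and matches the paper; your necessity argument is more explicit than the paper's but still leaves a genuine gap at the realizability step.
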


Theorem~\ref{theo:dp_amgm} only constrains the lower bound on the product of $\sigma_n(U)$ and $\sigma_d(V)$, the two \emph{smallest} singular values; it offers infinite choices for all the others with the design space for $(\Sigma, \Psi)$ even larger than that of MVG (Theorem~\ref{theo:MVG}).
More importantly, the lower bound is independent of $nd$, which can lead to orders-of-magnitude variance reduction than MVG, confirmed by our experiments in Section~\ref{sect:exp}.
For $\epsilon \rightarrow 0$, we can still derive a valid $\B$ from $2\Phi^{-1}((1 + \delta)/2)$.

To determine $\Sigma, \Psi$, another implicit constraint is to keep smaller noise for better utility.
Let us first consider $\Sigma = U U^\top$.
Since it~is positive definite, we can also decompose it into $W_\Sigma \Lambda_\Sigma W^\top_\Sigma$; we then have $U = W_\Sigma \Lambda^{1/2}_\Sigma$, where $\Lambda^{1/2}_\Sigma = \{\sigma_i(U)\}^n_{i=1}$ specifies the row-wise noise magnitudes.
Assuming that the smallest overall noise will yield the best utility, we let all the singular values be the smallest: $\sigma_1(U) = \cdots = \sigma_n(U)$.
As $W_\Sigma$ can be any unitary matrix, we simply use the standard basis, resulting in $U = \sigma_n(U) \cdot I_n$ for an $n \times n$ identity matrix $I_n$ and hence the final $\Sigma$.
Similarly, we can pick $\Psi = V V^\top$ with $V = \sigma_d(V) \cdot I_d $, where $I_d$ is a $d \times d$ identity matrix.

% \smallskip
% \noindent \textbf
\subsubsection{Drawing the noise $Z$}
With $\Sigma$ and $\Psi$, the last step is to draw~$Z$.
Pragmatically, we adopt the affine transformation below.

\begin{lemma}[\cite{ccs/ChanyaswadDPM18}]
\label{lemma_6}
Let $Z' \in \mathbb{R}^{n \times d}$ be a random variable with each entry i.i.d. drawn from the standard normal distribution.
The transformed $Z = U Z' V^\top$ follows $\MN_{n,d}(0,UU^\top,VV^\top)$.
\end{lemma}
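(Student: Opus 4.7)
The plan is to verify directly that the affine-transformed random variable $Z = UZ'V^\top$ has the density prescribed by Definition~\ref{def:MGD} with parameters $(0,UU^\top,VV^\top)$. I would proceed by a change of variables argument, pushing forward the standard normal density of $Z'$ through the invertible linear map $Z' \mapsto UZ'V^\top$. (Here invertibility of $U$ and $V$ is given, which is essential.) An equivalent and slightly cleaner route uses vectorization: since $Z' \in \mathbb{R}^{n \times d}$ has i.i.d.\ standard normal entries, $\vect(Z')$ is a standard Gaussian vector in $\mathbb{R}^{nd}$, and an affine image of a Gaussian is Gaussian, so it suffices to identify the mean and covariance of $\vect(Z)$.

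First, I would invoke the Kronecker identity $\vect(AXB) = (B^\top \otimes A)\vect(X)$ to write $\vect(Z) = (V \otimes U)\vect(Z')$. This immediately yields $\mathbb{E}[\vect(Z)] = 0$ and
\begin{align*}
\mathrm{Cov}(\vect(Z)) = (V \otimes U)(V \otimes U)^\top = (VV^\top) \otimes (UU^\top) = \Psi \otimes \Sigma,
\end{align*}
which is the defining covariance structure of the matrix Gaussian distribution $\MN_{n,d}(0,\Sigma,\Psi)$. Second, I would cross-check this against the PDF form in Definition~\ref{def:MGD} by running the change of variables directly: the Jacobian of $Z' \mapsto Z$ equals $\det(V \otimes U) = \det(V)^n \det(U)^d$, which combined with $\det(\Sigma) = \det(U)^2$ and $\det(\Psi) = \det(V)^2$ recovers the normalizing constant, and the exponent $-\tfrac12\|Z'\|_F^2$ becomes $-\tfrac12\|U^{-1}ZV^{-\top}\|_F^2$ after substitution.

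The main point requiring care is bookkeeping of conventions: which side of the Kronecker product is $V$ versus $U$, and correspondingly which exponent of $\det(\Sigma)$ versus $\det(\Psi)$ appears in the normalizer. I expect the only real obstacle is ensuring these are tracked consistently with Definition~\ref{def:MGD}; beyond that, the lemma follows immediately from standard properties of the Gaussian under linear maps. Since the same fact is used in~\cite{ccs/ChanyaswadDPM18}, I would state the result and defer the routine verification, noting that the transformation is exactly the standard sampling recipe for matrix Gaussians.
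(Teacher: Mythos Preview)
The paper does not supply its own proof of Lemma~\ref{lemma_6}; it is stated with a citation to~\cite{ccs/ChanyaswadDPM18} and used as a black-box sampling recipe. Your proposal is therefore more than the paper provides: the vectorization route via $\vect(UZ'V^\top) = (V \otimes U)\vect(Z')$ together with the covariance computation $(V \otimes U)(V \otimes U)^\top = \Psi \otimes \Sigma$ is the standard and correct argument, and your Jacobian cross-check against the density in Definition~\ref{def:MGD} is sound. Your caution about the exponent bookkeeping is well placed (note that the normalizer in Definition~\ref{def:MGD} as printed has $|\Psi|^{d/2}|\Sigma|^{n/2}$, which you should match carefully against the standard convention $|\Psi|^{n/2}|\Sigma|^{d/2}$ when writing this out), but there is no gap in your reasoning.
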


Hence, we can first sample $nd$ i.i.d.~values from $\N(0,1)$ to form~$Z'$, then employ the transformation $UZ'V^\top$ such~that
\begin{align}
\label{eq:draw_Z}
    Z \sim \frac{S_2(f)}{\B} \MN_{n,d}(0, I_n, I_d).
\end{align}

\paragraph{Discussion.}
When instantiating DP-Forward using aMGM, we set $\sigma_1(U) = \cdots = \sigma_n(U)$ and $\sigma_1(V) = \cdots = \sigma_d(V)$ such that the row- and column-wise noises are the smallest, and our pilot experiments show this yields optimal task utility; aMGM actually ``degenerates'' to aGM with i.i.d. noise.
Nevertheless, aMGM also allows non-i.i.d. noise like MVG: 
By tuning the corresponding singular values larger, we can add more noise to the rows/columns that negatively impact the utility.
It might be helpful when $f(\cdot)$ (\eg, linear regression~on a small liver dataset~\cite{ccs/ChanyaswadDPM18}) is simple or $p(\cdot)$ does not ``mix up'' noisy rows/columns.
In contrast to our empirical approach (like MVG), one could theoretically formulate the allocation of singular values as optimization problems that maximize different utility functions tailored to applications.
It might outperform our uniform treatment but takes more dedicated efforts, which we leave as future work.

%!TEX root = main.tex
\section{Experiments}
\label{sect:exp}
\subsection{Experimental Setup}
\label{exp:setup}

\begin{figure*}[!t]
	\centering
	\includegraphics[width=\linewidth]{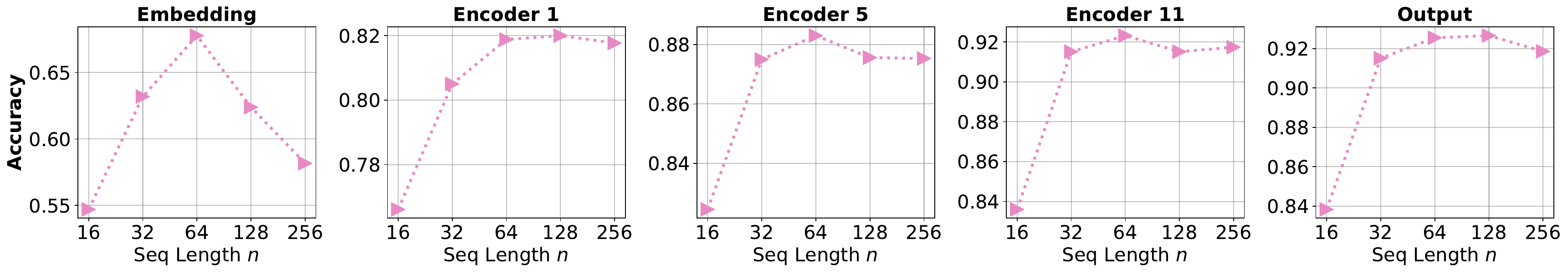}
	\vspace{-20pt}
	\caption{\mbox{SST-2} accuracy when tuning $n$ with noise added at different positions ($\epsilon = 16$ for SeqLDP)}
	\vspace{-8pt}
	\label{fig:embedding_paratuning}
\end{figure*}

\begin{figure*}[!t]
	\centering
	\includegraphics[width=\linewidth]{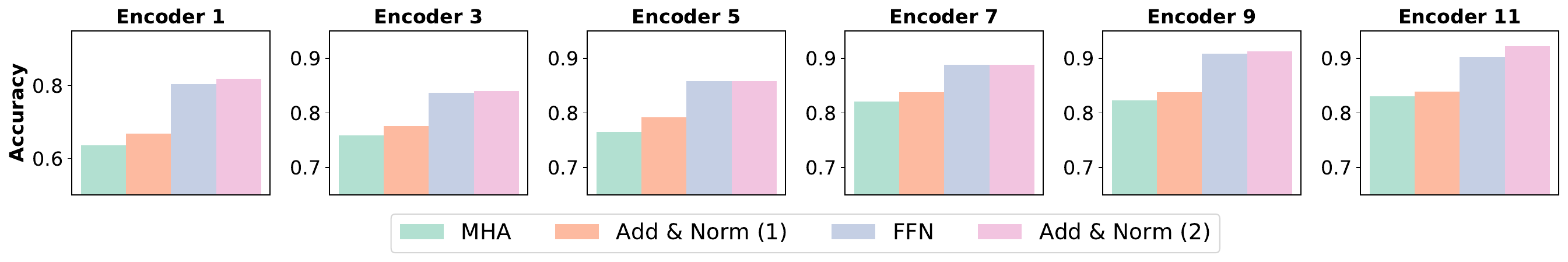}
	\vspace{-20pt}
	\caption{\mbox{SST-2} accuracy with noise added to the outputs of different \emph{sub-layers} in six encoders ($\epsilon = 16$ for SeqLDP)}
	\label{fig:inside_encoders}
\end{figure*}

We use three typical datasets/tasks that are widely used in NLP/DP literature~\cite{icml/YuZC0L21,iclr/LiTLH22,iclr/YuNBGIKKL22,acl/YueDWLSC21} and GLUE benchmark~\cite{iclr/WangSMHLB19}: 
i) Stanford sentiment treebank (\mbox{SST-2}), 
ii) Internet movie database (IMDb)~\cite{acl/MaasDPHNP11} for binary sentiment classification of single- and multi-sentence movie reviews, 
and
iii) Quora question pairs (QQP) for semantic equivalence test over question pairs on Quora.com.
Their test sets do not have any labels;
we use the original dev sets as the test sets.
Table~\ref{tbl:task_stat} summarizes their characteristics.
They all carry privacy risks; \eg, stylistic features of posts may leak the author's identity.
We use task \emph{accuracy} (w.r.t. the ground truth labels) as the utility~metric.

\begin{table}[!t]
\centering
\caption{Statistics of the downstream task datasets}
\begin{tabular}{l||r|r|r}
\toprule
 & \mbox{SST-2}~\cite{iclr/WangSMHLB19} & IMDb~\cite{acl/MaasDPHNP11} & QQP~\cite{iclr/WangSMHLB19} \\ 
\midrule 
\#train samples & $67,349$ & $25,000$ & $363,846$ \\ 
\#test samples & $872$ & $25,000$ & $40,320$ \\ 
\bottomrule 
\end{tabular}
\label{tbl:task_stat}
\end{table}

\smallskip
\noindent\textbf{Baselines.}
We instantiate $\M$ in DP-Forward by the classical GM, MVG~\cite{ccs/ChanyaswadDPM18}, and aMGM.
If not specified, all the results are based on aMGM.
For MVG, we adopt its \emph{unimodal} type, applicable to asymmetric functions like pre-noise layers $f(\cdot)$.
Specifically, we make the row-wise noise directional and assign the same precision budget to each row, assuming that tokens share the same importance.

By default, we report the accuracy of DP-Forward inferences on tasks fine-tuned using DP-Forward (with ${\sim}2$pp gains compared to the case of ``DP-Forward fine-tuning + non-private inference'').
We also realize DP-SGD fine-tuning with the latest Opacus~\cite{opacus} but do not add any noise to its inference.
Another baseline is non-private (in both fine-tuning and inference).

\smallskip
\noindent\textbf{Implementation.}
We run experiments on a cluster with Tesla~P$100$ GPUs.
We implement all the mechanisms and baselines in Python.
We use a raw BERT checkpoint \texttt{bert-base-uncased}~\cite{bert-base-uncased}, available in the Huggingface transformers library, for fine-tuning (Section~\ref{sect:fine-tune}) or further pre-train it over WikiCorpus (Section~\ref{sect:pre-train}).

For the hyperparameters used throughout our experiments, we set the number of training epochs $k = 3$, learning rate $\eta = 2\mathrm{e}{-5}$, batch size $b = 32$, and normalization/clipping factor $C = 1$.
We keep others (\eg, no weight decay, no learning rate decay) default as literature~\cite{iclr/WangSMHLB19}.
The privacy parameter $\delta$ is fixed as $1\mathrm{e}{-5}$~\cite{iclr/YuNBGIKKL22}.

\subsection{Configuring Matrix Dimensions}
\label{exp:configuration}

The sequence length $n$ is variable.
While the hidden dimensionality $d$ is tied as $768$ for BERT-Base, we can resort to two linear maps for ``mediating'' it (see Section~\ref{sect:splitting}).
Since we normalize embedding matrices of size $n \times d$ to have a fixed norm $C$, each entry's signal magnitude relies on $(n, d)$.
In contrast, the noise variance is the same given~$C$ and fixed privacy parameters.
The signal-to-noise ratios (SNRs) affecting accuracy can be configured based on $(n, d)$.

Figure~\ref{fig:embedding_paratuning} shows the evaluation accuracy of \mbox{SST-2} fine-tuned using DP-Forward with $n$ tuning from $16$ to $256$.
We study adding aMGM noise at five hidden layers' outputs.
The results indicate that the best accuracy is often achieved at $n = 64$ or $128$, so we opt for $n = 128$ (which is sufficient for most sequences) in subsequent~experiments.

We fine-tuned \mbox{SST-2} on noisy output embeddings under different choices of $\epsilon$ and reduced $d$.
Table~\ref{tbl:result_n} summarizes the results.
Reducing $d$ leads to larger SNRs (under fixed $C$ and $n$) but may also lose useful information, degrading accuracy.
For the same $\epsilon$, most accuracy variations are within $2$pp under different 
choices of $d$.
Balancing everything, we use the raw $d = 768$ in later experiments such that \emph{no} extra changes (including two linear maps) are made to pipelines.

\begin{table}[!t]
\centering
\caption{\mbox{SST-2} accuracy with different $d$ and $\epsilon$ for SeqLDP}
\begin{tabular}{r||r|r|r|r|r}
\toprule
\diagbox{$d$}{$\epsilon$} & $2$ & $4$ & $8$ & $12$ & $16$ \\
\midrule
$16$ & $0.6801$ & $0.7851$ & $0.8833$ & $\mathbf{0.9232}$ & $0.9266$ \\
$64$ & $0.6766$ & $0.7752$ & $0.8727$ & $0.9037$ & $0.9209$ \\
$128$ & $0.6732$ & $\mathbf{0.7856}$ & $0.8807$ & $0.9128$ & $0.9232$ \\
$256$ & $\mathbf{0.6835}$ & $0.7695$ & $\mathbf{0.8965}$ & $0.9186$ & $\mathbf{0.9249}$ \\
$512$ & $0.6411$ & $0.7626$ & $0.8831$ & $0.9128$ & $0.9243$ \\
$768$ & $0.6686$ & $0.7741$ & $0.8739$ & $0.9128$ & $0.9186$ \\
\bottomrule
\end{tabular}
\label{tbl:result_n}
\end{table}

\begin{table}[t]
\centering
\caption{Accuracy on output embeddings under SeqLDP}
\label{tab:cmp_dp_mech.}
\resizebox{\linewidth}{!}{
\begin{tabular}{l|l||c|c|c|c|c}
\toprule
Task & Mech. & $\epsilon = 0.5$ & $\epsilon = 1$ & $\epsilon = 2$ & $\epsilon = 4$ & $\epsilon = 8$ \\
\midrule
\multirow{3}{*}{\mbox{SST-2}} & GM & $0.5424$ & $0.5757$ & $0.6537$ & $0.7466$ & $0.8624$ \\ 
\cline{2-7}
 & aMGM & $\mathbf{0.5516}$ & $\mathbf{0.6021}$ & $\mathbf{0.6686}$ & $\mathbf{0.7741}$ & $\mathbf{0.8739}$ \\
 \cline{2-7}
 & MVG & \multicolumn{5}{c}{${\sim}0.5$} \\
\midrule
\multirow{3}{*}{\mbox{IMDb}} & GM & $0.5244$ & $0.5498$ & $0.6016$ & $0.6902$ & $0.8002$ \\ 
\cline{2-7}
 & aMGM & $\mathbf{0.5353}$ & $\mathbf{0.5676}$ & $\mathbf{0.6224}$ & $\mathbf{0.7093}$ & $\mathbf{0.8109}$ \\
 \cline{2-7}
 & MVG & \multicolumn{5}{c}{${\sim}0.5$} \\
\midrule
\multirow{3}{*}{\mbox{QQP}} & GM & $0.6304$ & $0.6321$ & $0.6571$ & $0.7458$ & $0.8638$ \\ 
\cline{2-7}
 & aMGM & $\mathbf{0.6312}$ & $\mathbf{0.6348}$ & $\mathbf{0.6747}$ & $\mathbf{0.7685}$ & $\mathbf{0.8653}$ \\
 \cline{2-7}
 & MVG & \multicolumn{5}{c}{${\sim}0.5$} \\
\bottomrule 
\end{tabular}
}
\end{table}

\begin{table*}[t]
\centering
\caption{Accuracy of task models fine-tuned using DP-Forward and DP-SGD under (example-level) CDP}
\label{tab:cmp_sgd_forward}
\vspace{-3pt}
\begin{tabular}{ll||ccc|ccc|ccc}
\toprule
\multicolumn{1}{l|}{\multirow{2}{*}{Method}} & \multirow{2}{*}{\begin{tabular}[c]{@{}l@{}}Noise \\ position\end{tabular}} & \multicolumn{3}{c|}{SST2} & \multicolumn{3}{c|}{IMDb} & \multicolumn{3}{c}{QQP} \\ \cline{3-11} 
\multicolumn{1}{l|}{} & & $\epsilon \approx 1$ & $\epsilon \approx3$ & $\epsilon \approx 8$ & $\epsilon \approx 1$ & $\epsilon \approx 3$ & $\epsilon \approx 8$ & $\epsilon \approx 1$ & $\epsilon \approx 3$ & $\epsilon \approx 8$ \\
\midrule
\multicolumn{1}{l|}{\multirow{8}{*}{DP-Forward}} & Embedding & $0.6055$ & $0.6146$ & $0.6278$ & $0.5$ & $0.5$ & $0.5$ & $0.6534$ & $0.6589$ & $0.6594$ \\ \cline{2-11} 
\multicolumn{1}{l|}{} & Encoder 1 & $0.7971$ & $0.8096$ & $0.8073$ & $0.5000$ & $0.5016$ & $0.5022$ & $0.7857$ & $0.7885$ & $0.7906$ \\
	\multicolumn{1}{l|}{} & Encoder 3 & $0.8096$ & $0.8394$ & $0.8463$ & $0.7525$ & $0.7545$ & $0.7549$ & $\mathbf{0.8513}$ & $\mathbf{0.8585}$ & $\mathbf{0.8607}$ \\
\multicolumn{1}{l|}{} & Encoder 5 & $0.8544$ & $0.8658$ & $0.8716$ & $0.7642$ & $0.7709$ & $0.7719$ & $\mathbf{0.8698}$ & $\mathbf{0.8765}$ & $\mathbf{0.8806}$ \\
\multicolumn{1}{l|}{} & Encoder 7 & $0.8624$ & $\mathbf{0.8819}$ & $\mathbf{0.8872}$ & $0.7765$ & $\mathbf{0.7883}$ & $\mathbf{0.7924}$ & $\mathbf{0.8840}$ & $\mathbf{0.8887}$ & $\mathbf{0.8926}$ \\
\multicolumn{1}{l|}{} & Encoder 9 & $\mathbf{0.8945}$ & $\mathbf{0.8979}$ & $\mathbf{0.9002}$ & $\mathbf{0.7995}$ & $\mathbf{0.8105}$ & $\mathbf{0.8181}$ & $\mathbf{0.8895}$ & $\mathbf{0.8941}$ & $\mathbf{0.8955}$ \\
\multicolumn{1}{l|}{} & Encoder 11 & $\mathbf{0.8819}$ & $\mathbf{0.8968}$ & $\mathbf{0.8985}$ & $\mathbf{0.8042}$ & $\mathbf{0.8187}$ & $\mathbf{0.8265}$ & $\mathbf{0.8952}$ & $\mathbf{0.8997}$ & $\mathbf{0.9007}$ \\ \cline{2-11} 
\multicolumn{1}{l|}{} & Output & $\mathbf{0.8865}$ & $\mathbf{0.9009}$ & $\mathbf{0.9055}$ & $\mathbf{0.8096}$ & $\mathbf{0.8160}$ & $\mathbf{0.8270}$ & $\mathbf{0.8987}$ & $\mathbf{0.8994}$ & $\mathbf{0.9038}$ \\ 
\hline
\multicolumn{1}{l|}{DP-SGD} & Gradient & $0.8650$ & $0.8713$ & $0.8759$ & $0.7779$ & $0.7826$ &$0.7903$ & $0.8219$ & $0.8345$ & $0.8433$ \\
\hline
\multicolumn{2}{l||}{Non-private baseline} & \multicolumn{3}{c|}{$0.9178$} & \multicolumn{3}{c|}{$0.8378$} & \multicolumn{3}{c}{$0.9019$} \\
\bottomrule
\end{tabular}
\vspace{-7pt}
\end{table*}

\subsection{Fine-tuning with Sequence LDP}
\label{sect:different_mech}
Our approach also supports perturbing \emph{sub-layer} outputs during fine-tuning.
We study six encoders as an example, with the results shown in Figure~\ref{fig:inside_encoders}.
Overall, DP-Forward performs better with deeper encoders since fewer parameters are directly affected by noise during fine-tuning. 
Another observation is that perturbing different sub-layer outputs, even inside the same encoder, may result in huge accuracy variation; \eg, using noisy outputs of the last sub-layer in Encoder~$1$ can bring ${\sim}20$pp gains over those of the first sub-layer.

We next evaluate the privacy-accuracy tradeoffs under different~$\epsilon$ and compare the instantiations using the classical GM, MVG~\cite{ccs/ChanyaswadDPM18}, and aMGM. 
Note that we still compute the GM variance as $\sigma^2 = 2 \ln (1.25/\delta)S^2_2(f) / \epsilon^2$ for empirical evaluation, albeit it cannot extend to $\epsilon > 1$ for a single run to ensure theoretical DP guarantees.

For the GM- and aMGM-based instantiations, Table~\ref{tab:cmp_dp_mech.} shows all three tasks' accuracy increases with $\epsilon$.
Ours has better accuracy than the GM-based one due to the smaller noise produced by aMGM in all choices of $\epsilon$. 
Although the noise variance gap (between GM and aMGM) widens as $\epsilon$ decreases, one cannot fine-tune effective models in a high privacy regime $\epsilon < 1$.
The MVG-based one behaves like random guessing for all three tasks since its noise variance is proportional to $n\cdot d$, which is even much larger than the classical GM for high-dimensional settings (see Section~\ref{sect:mvg}).
For instance, under the same parameter setting (\eg, $n = 128, d = 768$, and $\epsilon = 8$), MVG produces noise with the variance orders-of-magnitude larger than aMGM (\eg, ${>}10^8$ vs. ${\sim}0.6$), even assuming $\sup||f(\cdot)||_F = 1$.

We remark that the used local $\epsilon$ value is not large.
Most classical LDP works that deem such $\epsilon$ lies in a low privacy regime are for statistical analytics.
In great contrast, we aim at fine-tuning large LM-based pipelines with high-dimensional signals and limited training data, which is much more complicated.
Many prior works~\cite{wsdm/FeyisetanBDD20, cikm/QuKY0BN21, acl/YueDWLSC21, icdm/FeyisetanDD19} use a larger $\epsilon$ to ensure even a weaker token-level LDP variant, while others~\cite{acl/MeehanMC22} categorizes $\epsilon < 10$ and $10 \leq \epsilon < 20$ as strong and moderate privacy respectively\footnote{Such choices
can be ``reduced'' to smaller ones under the shuffling model (Section~\ref{exp:cmp}), \cf. U.S. census discloses demographic data at \emph{central} $\epsilon = 11.14$~\cite{corr/UScensus22}.} for sequence-level LDP like ours.
More importantly, they provide effective protection against various privacy threats, as detailed in Section~\ref{sect:attacks}.

\subsection{DP-Forward versus DP-SGD}
\label{exp:cmp}
\emph{Fairness of comparisons on privacy-accuracy tradeoffs.}
As elaborated in Section~\ref{sect:cmp_dpsgd}, we can adopt RR~\cite{jasa/Warner65} to perturb the labels and then report central $\epsilon$ values for DP-Forward, amplified by shuffling using the following parameters, ensuring that comparisons are fair under (example-level) CDP.
For DP-SGD, the subsampling probability is $b / N$, with $b=32$ and the dataset size $N$; the number of fine-tuning steps is $T = k \cdot N /b$ with $k = 3$.
For DP-Forward, the subsampling and non-flipping probabilities are respectively $1/N$ (with $T= k \cdot N$) and $0.9$; we still process $b$ noisy embeddings as a batch.
For both, we use aGM~\cite{icml/BalleW18}, the degenerated version of aMGM (Section~\ref{sect:agm-aMGM}), and the same accountant~\cite{nips/GopiLW21} to report \emph{approximated} overall~$\epsilon$ values\footnote{They are dominated by composing subsampled Gaussian, \eg, composing subsampled RR only consumes $0.03$ for \mbox{SST-2}, which is even overestimated by AutoDP.}.

We study eight instances of DP-Forward, including perturbing the outputs of the input embedding layer, six different encoders, and BERT.
Their accuracies on all three tasks under three privacy levels, plus those of DP-SGD and the non-private baseline, are shown in Table~\ref{tab:cmp_sgd_forward}.
About half or more of our instances have better accuracy than DP-SGD for each task; the largest accuracy gain is ${\sim}7.7$pp for QQP.
The noisy output embeddings often lead to the best accuracy for all tasks, even comparable to the non-private baseline, due to the dimension reduction at the last encoder output (Section~\ref{sect:pre_BERT}).

Recent DP-SGD variants~\cite{icml/YuZC0L21,iclr/YuNBGIKKL22} improve DP-SGD~\cite{ccs/AbadiCGMMT016} by perturbing \emph{partial gradient} entries using additional tricks (\eg, low-rank adaption).
They report the best accuracy of $92.5\%$ and~$85.7\%$ on \mbox{SST-2} and QQP, respectively, with $2.3$pp and $6.2$pp drops from~the non-private baselines at central $\epsilon = 6.7$~\cite[Table~$4$]{iclr/YuNBGIKKL22}.
DP-Forward with label privacy, incurring ${<}1.7$pp accuracy drops on the two tasks at $\epsilon \approx 3$, can still beat them, albeit their fine-tuning is based on RoBERTa-base, a robustly optimized BERT approach, which by itself outperforms BERT due to larger training set, longer training time, and better techniques (\eg, dynamic masking in MLM).

Figure~\ref{fig:cmp_dpsgd} shows the efficiency comparisons on fine-tuning \mbox{SST-2}.
The time and storage overheads of our approach (for all possible instances) are almost the same as the non-private baseline and ${\sim}3\times$ smaller than DP-SGD.
It is because we allow batch processing as in the normal fine-tuning -- no need to handle per-example~gradients.
Meanwhile, our normalization and noise sampling/addition are also faster since the size of embeddings is smaller than that of gradients.

\subsection{Noisy Pre-training}
Pre-training BERT using DP-Forward, aligned with the noisy fine-tuning, does help accuracy.
We use \mbox{SST-2} as an example and perturb the input embedding matrices.
We continue pre-training BERT over English WikiCorpus, the $2006$ dump with about $600$M words, for an epoch.
Table~\ref{tbl:noisypretrain} shows that we can obtain $1{-}2$pp accuracy gains for most choices of $\epsilon$, compared to fine-tuning on the original BERT.

Efficiency-wise, DP-Forward pre-training also consumes much fewer resources; \eg, an existing work~\cite{emnlp/AnilG00M22} pre-trains BERT-Large (with $340$ million parameters) using DP-SGD on Google TPUs,~which requires sufficient memory for handling batch sizes of millions.

\begin{figure}
	\centering
	\includegraphics[width=\linewidth]{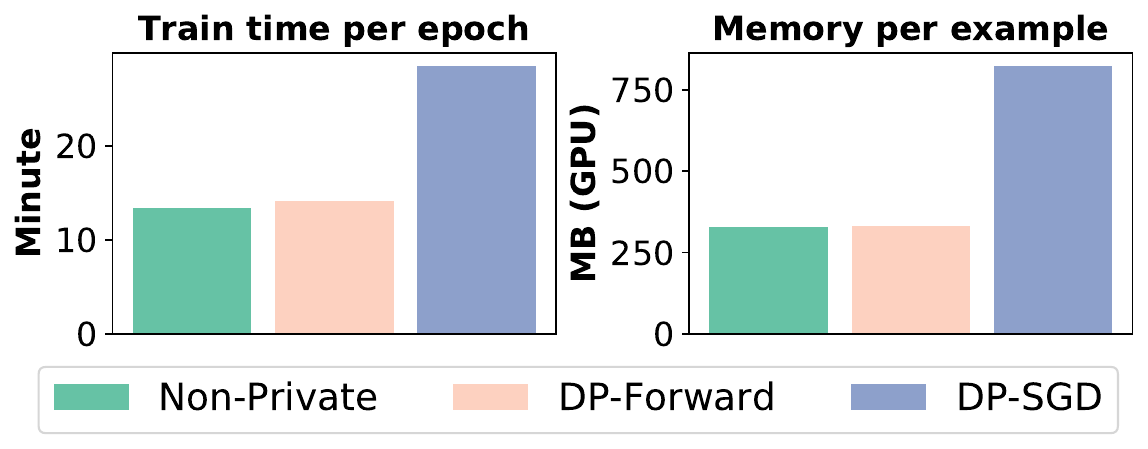}
	\vspace{-20pt}
	\caption{Efficiency comparison for the case of \mbox{SST-2}}
	\label{fig:cmp_dpsgd}
\end{figure}

\begin{table}[!t]
\centering
\vspace{1pt}
\caption{\mbox{SST-2} accuracy gain with pre-training under SeqLDP}
\vspace{-2pt}
\label{tbl:noisypretrain}
\begin{tabular}{r||r|r|r}
\toprule
$\epsilon$ & Raw BERT & Noisy BERT & $\Delta_{\text{acc}}$ \\
\midrule
$2$	 & $0.5501$ & $0.5665$	& $0.0164$ \\
$4$	 & $0.5950$ & $0.5999$	& $0.0049$ \\ 
$8$	 & $0.6550$ & $0.6708$	& $0.0158$ \\
$16$ & $0.7345$ & $0.7450$	& $0.0105$ \\
\bottomrule
\end{tabular}
\vspace{-3pt}
\end{table}
%!TEX root = main.tex

\section{Defense against Privacy Threats}
\label{sect:attacks}
Following the recent taxonomy~\cite{ccs/SongR20}, we study MIAs and two new threats of sequence leakage from their \emph{embeddings}: embedding inversion and attribute inference.
We moderately adapt them to suit our context, \eg, upgrading MIAs~\cite{ccs/SongR20} to sequence-level.

\subsection{Threat Models}
For MIAs, we follow prior arts~\cite{sp/ShokriSSS17,csfw/YeomGFJ18,ccs/SongR20} to consider an adversary with only \emph{black-box} access to an \emph{entire} (DP-SGD/DP-Forward-trained) pipeline: It can query the prediction results (\eg, each-class probability) of target sequences but cannot access the pipeline weights and architecture; the hidden embeddings are not revealed.

Despite ``different'' objectives of inverting or inferring (Section~\ref{sect:inversion} or~\ref{sect:sensitive_attribute}) from embeddings, we consider both threats involving a general adversary with \emph{black-box} access to the trained pipeline part~$f()$.
It can get the inference-time (clear/noisy) \emph{embeddings} of target sequences~\cite{ccs/SongR20}.
Besides public prior knowledge, it can collect a \emph{limited auxiliary} dataset $\D_{\aux}$, sharing similar attributes to the targets~\cite{ccs/SongR20}.

DP-SGD only offers CDP for training data and does not protect inference-time input.\footnote{One might add the same noise to it as DP-Forward inference, which indeed mitigates the new threats.
However, perturbing \emph{gradients} in training, inherently ``mismatches'' from perturbing \emph{embeddings} in inference, deteriorating task performance significantly, \eg, \mbox{SST-2} accuracy will be reduced to $0.7786$ (with a ${\sim}10$pp drop) at central $\epsilon \approx 8$.}
What follows intends to empirically confirm a major merit of DP-Forward in protecting against stronger adversaries and threats to both training- and inference-time inputs.

\subsection{Membership Inference Attacks}
\label{sect:MIA}

\smallskip
\noindent\textbf{Attack Objective.}
MIAs predict whether a data point is in the~training set~\cite{sp/ShokriSSS17}.
They often exploit the disparity in model behavior between training data and unseen data, \ie, poor model generalization due to overfitting~\cite{csfw/YeomGFJ18}.
Inferring membership at the token/word level, \eg, a sliding window of tokens~\cite{ccs/SongR20}, is not interesting.
We consider more realistic MIAs on \emph{entire} sequences, which can be extended for more devastating attacks, such as extracting verbatim pre-training sequences via black-box access to \mbox{GPT-2}~\cite{uss/CarliniTWJHLRBS21}.

Prior arts~\cite{csfw/YeomGFJ18,uss/SongM21} suggest that threshold-based MIAs using only prediction confidence~\cite{csfw/YeomGFJ18} or entropy~\cite{uss/SongM21} with proper assumptions are comparable to the more sophisticated one~\cite{sp/ShokriSSS17} based on shadow training.
Adapting the confidence-based MIA to our context exploits that a pipeline is fine-tuned by minimizing its prediction loss: The confidence/probability of predicting a training sequence as its true label should be close to~$1$.
The adversary can then infer a candidate sequence $X^*$ as a member when the confidence for the predicted label $l$ output by pipeline~$\mathcal{F}$ is larger than a pre-set threshold $\tau$:
$$\mathds{1}\{\Pr[\mathcal{F}(X^*) = l] \geq \tau \},$$
where $\mathds{1}\{\cdot\}$ is the indicator function.
We simply use a fixed~$\tau$ for all possible labels in our evaluation, albeit it can be label-dependent.

The second MIA we use is based on the prediction output (\ie, a vector of probabilities) of a training sequence tends to be a one-hot vector, \ie, its entropy should be close to $0$.
Similarly, the adversary can infer $X^*$ as a member when its prediction entropy falls below a preset threshold $\tau$; otherwise, it is not deemed a member:
$$\mathds{1}\{- \sum_i \Pr[\mathcal{F}(X^*)= l_i] \log(\Pr[\mathcal{F}(X^*)= l_i]) \leq \tau \},$$
for all possible labels $\{l_i\}$.
%Such an entropy-based attack has a flaw:
Note that a totally wrong prediction with probability ${\sim}1$ also leads to entropy approaching~$0$.
We can address it by encoding 
the information of the ground-truth label of $X^*$~\cite{uss/SongM21}.

\smallskip
\noindent\textbf{Numerical Results.}
As in~\cite{icml/YuZC0L21}, all the test examples 
%(\texttt{not\_in\_training}) 
and a random subset of the training examples (as many as the test ones) are evenly split into two subsets (each has half of the training/test examples), one for finding the optimal $\tau$, and the other for reporting the attack success rates.
Given that the training and test examples likely share the same distribution, we randomly drop/replace tokens in the test examples to enlarge the prediction difference to make MIAs easier.
% \todo{Given fact A, so we do B for C
% or Given fact A, it is okay to do B for C}

We evaluated the adapted confidence- and entropy-based MIAs on SST-2 fine-tuned by the non-private baseline, DP-Forward, and DP-SGD.
For DP-Forward, we investigate five instances, perturbing input embeddings, three encoders' outputs, and output embeddings.
Table~\ref{tbl:seq_MIA} presents the results, where success rates within $0.49${--}$0.51$ are shown in bold.
Both DP-Forward and DP-SGD can mitigate MIAs effectively.
For all choices of $\epsilon$, the two MIAs' success rates on DP-Forward are reduced to ${\sim}0.5$ (like random guessing) for deeper layers, outperforming DP-SGD by ${>}6$pp at the same privacy level.

\begin{table}[!t]
\centering
\caption{Success rates of two MIAs with (translated) central~$\epsilon$}
\label{tbl:seq_MIA}
\vspace{-2pt}
\begin{tabular}{c|l||cc}
\toprule
\multirow{2}{*}{$\epsilon$} & \multirow{2}{*}{Method} 
& \multicolumn{2}{c}{Attack Success Rate} \\ \cline{3-4} 
 & %& 
 & Entropy & Confidence \\
 \midrule
 $\infty$ & Non-private baseline
 %& $0.918$ 
 & $0.659$ & $0.645$ \\
 \midrule
\multirow{6}{*}{1} & DP-SGD 
%& $0.883$ 
& $0.567$ & $0.561$ \\ \cline{2-4} 
 & DP-Forward (Embedding) 
%& $0.636$ 
& $0.586$ & $0.576$ \\ 
 & DP-Forward (Encoder 1) 
%& $0.792$ 
& $0.535$ & $0.537$ \\ 
 & DP-Forward (Encoder 7) 
 %& $\mathbf{0.866}$ 
 & $\mathbf{0.494}$ & $\mathbf{0.506}$ \\ 
 & DP-Forward (Encoder 11) 
 %& $\mathbf{0.901}$ 
 & $\mathbf{0.506}$ & $\mathbf{0.494}$ \\ 
 & DP-Forward (Output) 
 %& $\mathbf{0.903}$ 
 & $\mathbf{0.508}$ & $\mathbf{0.502}$ \\
 \midrule
\multirow{6}{*}{3} & DP-SGD 
%& $0.885$ 
& $0.584$ & $0.576$ \\ \cline{2-4} 
 & DP-Forward (Embedding) 
%& $0.655$ 
& $0.584$ & $0.576$ \\ 
 & DP-Forward (Encoder 1) 
%& $0.822$ 
& $0.543$ & $0.530$ \\ 
 & DP-Forward (Encoder 7) 
%& $0.881$ 
& $\mathbf{0.510}$ & $\mathbf{0.507}$ \\ 
 & DP-Forward (Encoder 11) 
 %& $\mathbf{0.913}$ 
 & $0.512$ & $\mathbf{0.500}$ \\ 
 & DP-Forward (Output) 
 %& $\mathbf{0.921}$ 
 & $\mathbf{0.503}$ & $\mathbf{0.499}$ \\
 \midrule
\multirow{6}{*}{8} & DP-SGD 
%& $0.888$ 
& $0.580$ & $0.580$ \\ \cline{2-4} 
 & DP-Forward (Embedding) 
%& $0.678$ 
& $0.597$ & $0.576$ \\
 & DP-Forward (Encoder 1) 
%& $0.819$ 
& $\mathbf{0.510}$ & $0.536$ \\ 
 & DP-Forward (Encoder 7) 
%& $\mathbf{0.888}$ 
& $\mathbf{0.510}$ & $\mathbf{0.504}$ \\ 
 & DP-Forward (Encoder 11) 
 %& $\mathbf{0.923}$ 
 & $0.520$ & $\mathbf{0.513}$ \\ 
 & DP-Forward (Output) 
 %& $\mathbf{0.926}$ 
 & $\mathbf{0.506}$ & $\mathbf{0.490}$ \\
 \bottomrule
\end{tabular}
\vspace{-6pt}
\end{table}

\subsection{Embedding Inversion Attacks}
\label{sect:inversion}
\noindent\textbf{Attack Objective.}
These attacks aim at recovering the raw text as (unordered) tokens $\{x_i\}_{i \in [n]} \subseteq X$ from embeddings, highlighting the risk of directly sharing (without noise) even only text embeddings (for training/inference).
They have been employed to reconstruct specific patterns, \eg, identity codes and gene segments~\cite{sp/PanZJY20}.

We first propose a simple \emph{token-wise} inversion attack to invert (noisy) token embeddings output by the input embedding layer~$\phi(\cdot)$ that maps every token in $\V$ to $\mathbb{R}^d$~\cite{corr/WuSCLNMKCGMKSJL16}.
It can be formulated as:
$$\forall i \in [n]: \min_{x_i^* \in \V} ||\phi(x^*_i)-(\phi(x_i)+z_i)||_2,$$
where $z_i$ is the $i^{\text{th}}$ row of noise $Z$ from $\M$ (omitted for DP-SGD or the non-private baseline).
It returns $x^*_i$ with its embedding closest to the observed one of $x_i$ via a nearest-neighbor search over~$\V$.

A token's hidden embedding from deeper layers encodes more ``abstract'' contextual information of the entire sequence it belongs to; the token-wise inversion may be less accurate.
We thus require a more general attack~\cite{ccs/SongR20}.
It first maps the observed (noisy) embedding back to a lower-layer one using a linear least square model~$M$ and then selects $n$ tokens as $X^*$ to minimize the $L_2$-distance between the lower-layer representation of $X^*$ and the one from $M$:
$$\min_{X^* \in \V^n}||\zeta(X^*)-M\left(f(X)+Z\right)||_2,$$ 
where $\zeta(\cdot)$ is a lower-layer representation function than $f(\cdot)$.

The above minimization is over $|\V|^n$, larger than the token-wise candidate space.
To determine $X^*$, we first relax the token selection at each position $i \in [n]$ using a continuous vector in $\mathbb{R}^{|\V|}$, which is then input (with another temperature parameter) to a softmax function to model the probabilities of selecting each token in $\V$.
We further derive the token embedding $x_i^*$ by multiplying the relaxed vector (with each entry as a weight) and the original embedding matrix.
Finally, we solve it by a gradient-based method~\cite{ccs/SongR20}.

\smallskip
\noindent\textbf{Numerical Results.}
The gradient-based attack reports the highest recall (or precision) on inverting the lowest-layer (clear) embeddings~\cite[Figure~2]{ccs/SongR20}.
To show that DP-Forward can mitigate such ``strongest'' inversion, we implement both (nearest-neighbor and gradient-based) attacks to invert input embeddings, with the public BERT embedding lookup table as prior.
We also report their success rates as \emph{recall} -- the ratios of correct recoveries over the raw targets.

Table~\ref{tbl:embedding_inversion} shows that DP-Forward can reduce their success rates to a relatively low level, most are within $0.2$.
However, DP-SGD fails in defense.
The results corroborate our claim: DP-Forward directly adds noise to embeddings, thus mitigating embedding inversion, whereas DP-SGD only perturbs gradients, offering no protection for the (clear) inference-time embeddings of test sequences.

\begin{table}[!t]
\centering
\caption{Success rates of two inversion attacks on (the lowest-layer) input embeddings with (translated) central~$\epsilon \approx 8$}
\label{tbl:embedding_inversion}
\vspace{-2pt}
\begin{tabular}{l||r|r|r||r|r|r}
\toprule
& \multicolumn{3}{c||}{Nearest Neighbor} & \multicolumn{3}{c}{Gradient-based} \\
\cline{2-7}
& SST-2 & IMDb & QQP & SST-2 & IMDb & QQP \\
\midrule
\hspace{-8pt} Non-private \!\!\!& 
$1$\!\! & 
$1$\!\! & 
$1$\!\! & 
$1$\!\! & 
$1$\!\! & 
$1$\!\! \\
\hspace{-8pt} DP-SGD \!\!\!& 
$1$\!\! & 
$1$\!\! & 
$1$\!\! & 
$.9991$\!\! & 
$.9982$\!\! & 
$1$\!\! \\
\hspace{-8pt} DP-Forward\!\! & 
\!\!$\mathbf{.1811}$\!\! & 
\!\!$\mathbf{.1420}$\!\! & 
\!\!$\mathbf{.2457}$\!\! &
\!\!$\mathbf{.1622}$\!\! &
\!\!$\mathbf{.1241}$\!\! &
\!\!$\mathbf{.2226}$\!\! \\
\bottomrule
\end{tabular}
%\vspace{-8pt}
\end{table}

\subsection{Sensitive Attribute Inference Attacks}
\label{sect:sensitive_attribute}
\noindent\textbf{Attack Objective.}
Instead of recovering exact tokens, one can try to infer sensitive attributes about target sequences from their embeddings.
The attributes are often statistically unrelated to the training/inference objective but inherent in sequences, \eg,~\mbox{stylometry}, implying the text's authorship for sentiment analysis~\cite{uss/ShettySF18}.
We~are not interested in any global property of an entire corpus~\cite{ccs/GanjuWYGB18}.

\begin{table}[!t]
\centering
\caption{Success rates of a (neural-network-based) sensitive attribute inference attack with (translated) central~$\epsilon \approx 8$}
\label{tbl:sensitive_attributes}
\vspace{-2pt}
\resizebox{\linewidth}{!}{%
\begin{tabular}{l||r|r|r|r||r}
\toprule
 & action & comedy & drama & horror & Overall\! \\
 \midrule
\hspace{-4pt}Non-private & $0.727$ & $0.858$ & $0.516$ & $0.439$ & $0.687$\! \\ 
\midrule
\hspace{-4pt}DP-SGD &$0.664$ & $0.733$ & $0.253$ & $0.324$ & $0.536$\! \\
\midrule
\hspace{-4pt}DP-Forward (Embedding)\!\! & $0.998$ &$\mathbf{0}$\! &$\mathbf{0}$ &$\mathbf{0.009}$ &$\mathbf{0.276}$\! \\ 
\hspace{-4pt}DP-Forward (Encoder 1)\!\! & $1.0$ &$\mathbf{0}$\! &$\mathbf{0}$ &$\mathbf{0}$ &$\mathbf{0.276}$\! \\ 
\hspace{-4pt}DP-Forward (Encoder 7)\!\! & $1.0$ &$\mathbf{0}$\! &$\mathbf{0}$ &$\mathbf{0}$ &$\mathbf{0.276}$\! \\ 
\hspace{-4pt}DP-Forward (Encoder 11)\!\! & $1.0$ &$\mathbf{0}$\! &$\mathbf{0}$ &$\mathbf{0}$ &$\mathbf{0.276}$\! \\ 
\hspace{-4pt}DP-Forward (Output)\!\! & $0.998$ &$\mathbf{0}$\! &$\mathbf{0}$ &$\mathbf{0.009}$ &$\mathbf{0.276}$\! \\
\bottomrule
\end{tabular}
}
\vspace{-8pt}
\end{table}

We assume $\D_{\aux}$ has sequences labeled with sensitive attributes.
The adversary can query $f(\cdot)$ for the noisy (or clear) embeddings:
$$\D_{\aux} = \{\left(f(X_i)+Z_i,s_i\right)\},\ \forall s_i \in \mathcal{S},$$
where $\mathcal{S}$ is the set of all possible sensitive attributes of interest, say, authorship.
It does not care about non-sensitive attributes.

To infer sensitive attributes, the adversary first trains a classifier on $\D_{\aux}$ via supervised learning and then uses it for an observed noisy (or clear) embedding $f(X^*)+Z$ to predict $s^* \in \mathcal{S}$ of~$X^*$.

\smallskip
\noindent\textbf{Numerical Results.} 
We train a three-layer neural network with a linear head as the classifier to infer the film genre (\eg, `horror') as a sensitive attribute from a movie review using its output embedding.
We employ IMDb with $20$k examples ($90\%$ for training and $10\%$ for validation) as $\D_\aux$, and \mbox{SST-2} contributes $3.3$k examples for testing the classifier.
The attack success rates are measured using \emph{recall}.

We investigate five DP-Forward instances.
Table~\ref{tbl:sensitive_attributes} shows that they ``reduce'' the classifier to majority-class prediction, which returns the majority class (`action') on all inputs.
In contrast, DP-SGD only reduces success rates moderately compared to the non-private baseline.
It is because the embeddings from DP-SGD-trained/noisy models still ``lose'' some useful information (\cf, accuracy drops~of DP-SGD inference on embeddings without noise).
The results confirm DP-Forward is more effective in thwarting attribute inference.

%!TEX root = main.tex
\section{Related Work}

\subsection{Privacy Threats on LMs and Embeddings}
An active line of research~\cite{sp/PanZJY20,ccs/SongR20,ccs/BeguelinWTRPOKB20,uss/CarliniTWJHLRBS21} discloses severe privacy risks in modern LMs (even used as black-box query ``oracles'') concerning their (hidden/output) text embeddings.
Song and Raghunathan~\cite{ccs/SongR20} build a taxonomy of attacks that covers a broader scope than a parallel work~\cite{sp/PanZJY20}.
These attacks include embedding inversion (which can partially recover raw texts), membership inference (establishing the \emph{is-in} relation between a target and private training data), and inferring sensitive attributes like text authorship from embeddings.
A common defense for them is adversarial training, \eg,~\cite{emnlp/ElazarG18}.

Others~\cite{ccs/BeguelinWTRPOKB20,uss/CarliniTWJHLRBS21} study the ``memorization'' of training data in LMs (a.k.a. membership inference attack).
In particular, Carlini~\etal~\cite{uss/CarliniTWJHLRBS21} define $k$-eidetic memorization, where a string is extractable or memorized if it appears in at most $k$ examples.
Their black-box attacks on GPT-2~\cite{report/RadfordNSS18} can extract verbatim training texts even when $k = 1$ (\eg, a name that only appears once is still extractable).
A smaller~$k$ means a higher privacy risk.
Beguelin~\etal~\cite{ccs/BeguelinWTRPOKB20} define differential score and rank as two new metrics for analyzing the update leakage, enabling the recovery of new text used to update LMs.
Incorporating DP to address memorization is a promising~solution.

\subsection{Input (Text/Feature) Perturbation for LDP}
SynTF~\cite{sigir/WeggenmannK18} synthesizes
term-frequency (feature) vectors under LDP, which have limited applications compared to sentence embeddings or text itself.
Feyisetan~\etal~\cite{icdm/FeyisetanDD19,wsdm/FeyisetanBDD20} resort to metric-LDP~\cite{csfw/Alvim0PP18}, a relaxed variant of LDP with a distance metric~(\eg, Euclidean or Hyperbolic), which allows the indistinguishability of outputs to grow proportionally to the inputs' distance.
They first add noise to the outputs of a non-contextualized token embedding model (\eg, GLoVe~\cite{emnlp/PenningtonSM14}), which are then projected back to ``sanitized'' text using the nearest neighbor search as post-processing.
In contrast, Yue~\etal~\cite{acl/YueDWLSC21} 
sanitize text by directly sampling token-wise replacements, avoiding adding noise to high-dimensional embeddings.
All these works only achieve (variants of) token-level metric-LDP.

To offer \emph{sequence-level} protection, recent studies~\cite{emnlp/LyuHL20, acl/MeehanMC22} apply Laplace or exponential mechanism to perturb (the average of) sentence embeddings extracted by an LM (\eg, BERT~\cite{naacl/DevlinCLT19}).
Both ensure pure LDP (homogeneously protecting any entire sequence), which may be too stringent and impact utility.
In contrast, heterogeneous protection~\cite{wsdm/FeyisetanBDD20,acl/YueDWLSC21} can strategically manage the privacy demands across inputs.
Du~\etal~\cite{www/DuYCS23} achieve metric-LDP (by Purkayastha and planar Laplace mechanisms) at the sequence level (unlike token-level in prior arts~\cite{wsdm/FeyisetanBDD20,acl/YueDWLSC21}).
To further boost the utility, they mitigate the dimensional curse via a random-projection-like approach.
They also perturb sensitive sequence labels for enhanced privacy.
Nevertheless, perturbing different hidden (rather than token or sentence) embeddings inside LM-based NLP pipelines remains unexplored.

\subsection{DP-SGD (Variants) in Training LMs}
\label{sect:related_dp-sgd}
An early attempt~\cite{iclr/McMahanRT018} uses DP-SGD to train long short-term memory LMs in the federated learning setting.
By configuring hyperparameters properly (\eg, setting the batch size to millions), one can even pre-train BERT-Large, an LM with ${\sim}340$M parameters, using DP-SGD/Adam while achieving acceptable (MLM) accuracy~\cite{emnlp/AnilG00M22}.

Using the vanilla DP-SGD in pre-training/fine-tuning large LMs leads to significant efficiency and accuracy drops due to the ``curse of dimensionality.''
Yu~\etal~\cite{icml/YuZC0L21} propose reparametrized gradient perturbation:
It first reparameterizes/decomposes \emph{each} high-rank weight matrix into two low-rank (gradient-carrier) ones with a residual matrix and then only perturbs the two low-rank gradients to alleviate the dimensional curse.
The noisy low-rank gradients are finally projected back to update the raw high-rank weights.

Applying reparameterization to every weight in each update is still costly and may introduce instability (\eg, noises are ``zoomed~up'' during the projection).
Instead, the follow-up~\cite{iclr/YuNBGIKKL22} builds atop the recent success of parameter-efficient fine-tuning (\eg, LoRA~\cite{iclr/HuSWALWWC22}, Adapter~\cite{icml/HoulsbyGJMLGAG19}, and Compacter~\cite{nips/MahabadiHR21}): It perturbs the gradients of a much smaller number of additional ``plug-in'' parameters.
However, Li~\etal~\cite{iclr/LiTLH22} empirically show that parameter-efficient fine-tuning is not necessarily better than the full one;
they propose ghost clipping, a memory-saving technique (``orthogonal'' to dimension reduction), to use DP-SGD in full fine-tuning without instantiating per-example gradients.
Despite efficiency/accuracy gains, all these works still only protect training data by perturbing (smaller) gradients.

\subsection{DP Mechanisms for Matrix Functions}
Gaussian and Laplace mechanisms are typically for scalar-/vector-valued functions~\cite{fttcs/DworkR14}.
Vectorizing the outputs and adding i.i.d.~noise could generalize them for matrix-valued functions, but the structural information of matrix functions is not exploited.
The MVG mechanism~\cite{ccs/ChanyaswadDPM18} is thus devised, which draws directional or {non-i.i.d.} noise
from a \emph{matrix} Gaussian distribution.
It injects less noise into more ``informative'' output directions for better utility, with only a constraint on the sum of the singular values (determining the noise magnitude) of two covariance matrices.
Such a constraint is only a sufficient condition for $(\epsilon,\delta)$-DP, which is improved by the follow-up~\cite{tmc/YangXYWGLL23} with a tighter bound on the singular values.

There also exist mechanisms dedicated to \emph{restricted} matrix-valued functions.
The matrix mechanism~\cite{vldb/LiMHMR15} considers a collection of linear counting queries represented by $Wx$ for query matrix $W$ and input vector $x$.
It still resorts to additive Laplace/Gaussian noise but with an extra transformation solving the min-variance estimation to the noisy $Wx$.
Another very recent study~\cite{pvldb/Ji0YAYS21} focuses on matrix-valued queries with only binary (matrix) outputs.
It then devises an exclusive-or (xor) mechanism xor-ing the outputs with noise attributed to a matrix-valued Bernoulli distribution.

\section{Conclusion}
Pre-trained LMs became pivotal in NLP.
Alarmingly, fine-tuning corpora or inference-time inputs face various privacy attacks.
The popular DP-SGD only provides limited protection for training data by adding noise to gradients.
Raw tokens or sensitive attributes of training/inference data can be inverted or inferred from embeddings in forward-pass computation.
Vanilla DP-SGD also imposes high GPU memory and computational burdens but cannot be batched.

We propose DP-Forward, which directly adds noise to embedding matrices derived from the raw training/inference data in the forward pass.
Its core is the analytic matrix Gaussian mechanism, a general-purpose tool that owns independent interests.
It draws optimal matrix-valued noise from a matrix Gaussian distribution in a dedicated way using a necessary and sufficient condition for~DP.

Perturbing embeddings at various positions across multiple layers yields at least two benefits.
DP-Forward users are only required to download pipeline parts for deriving noisy embeddings, which is more storage- and time-efficient than deriving noisy gradients.
Together with our prior attempts~\cite{acl/YueDWLSC21,www/DuYCS23} at sanitizing input text tokens and output sentence embeddings, we provide a full suite of forward-pass signal sanitization options for users only to share their sanitized data for LM-as-a-Service APIs while protecting~privacy.

Beyond the theoretical contribution of two local DP notions and the experimental comparisons with baselines 
(\eg, GM, MVG, and DP-SGD) across three typical NLP tasks,
we investigate the hyperparameter configuration for reproducible validations of DP-Forward's potential in terms of efficiency, accuracy, and its ability to withstand diverse against diverse attacks.

Altogether, our new perspective leads to a better approach to privacy-aware deep neural network training, challenging the traditional wisdom focusing on gradients.
As a new paradigm for local DP in fine-tuning and inference, our work paves the way for a myriad of possibilities for new machine-learning privacy research~\cite{sp/NgC23}, \eg, generalization to transformer-based computer vision tasks.

\begin{acks}
We are grateful to the anonymous reviewers for their comments and Ashwin Machanavajjhala for his comments on a related Ph.D. thesis.
Sherman Chow is supported in part by the General Research Funds (CUHK 14209918, 14210319, 14210621), Research Grants Council, University Grants Committee, Hong Kong.
Authors at OSU are sponsored in part by NSF IIS $\#1815674$, NSF CAREER $\#1942980$, and Ohio Supercomputer Center \cite{OhioSupercomputerCenter1987}.
Tianhao Wang is supported in part by National Science Foundation (NSF) with grants CNS-$2220433$ and OAC-$2319988$.
\end{acks}

\bibliographystyle{ACM-Reference-Format}
\balance
\bibliography{reference}

\ifnum\fullFlag=1
\appendix
%!TEX root = main.tex

\section{Token-level DP-Forward}
\label{apdx:token-level}

\subsection{Definition and Related Notions}

\begin{definition}[Token-level SeqLDP]
\label{def:token-ldp}
\!\!For $\epsilon \geq 0, 0 \leq \delta \leq 1$, $\M$ fulfills token-level $(\epsilon,\delta)$-SeqLDP, if $\forall X \simeq X'$ that differ in any single token but with the same $y$, and any possible output subset $\O$, 
\begin{align*}
    \Pr[\M(X,y) \in \O] \leq e^\epsilon \Pr[\M(X',y) \in \O] + \delta.
\end{align*}
\end{definition}

Despite a token-level notion, 
our experiments (Appendix~\ref{apdx:exp}) show that
when $f(\cdot)$ is only the input embedding layer, 
our token-level SeqLDP designs can also effectively mitigate MIAs on \emph{entire} sequences,
with up to $20$pp accuracy gains at the same choices of $\epsilon$.
It is not necessarily weaker than sequence-level CDP (as offered by DP-SGD).
One might doubt its usefulness since two neighboring sequences may be too similar.
Nevertheless, there are cases where a sentence, \eg, ``How's it going'' may not matter in a bigger unit (paragraph/essay) of the training data either.
Moreover, a token (\eg, yes/no) can play a crucial role, \eg, in named entity recognition~\cite{tkde/LiSHL22}.
Our LDP guarantee is for \emph{any} such two sequences, covering the wide spectrum between ``too similar'' and radically different cases.

Note that weakening privacy notions by itself is not our goal\footnote{As a related example,
in image classification, PixelDP~\cite{sp/LecuyerAG0J19} has been proposed for a DP notion defined upon pixels.
Its motivation is robustness to adversarial examples.}.
Protection at the token level has been studied under metric-DP~\cite{wsdm/FeyisetanBDD20,cikm/QuKY0BN21}, a relaxation of LDP.
They require even much larger~$\epsilon$, say,~$175$.
Our goal of studying token-level SeqLDP is to narrow the gap~between theory and practice, \ie, 
provable privacy notions tailored to the protection targets
(the first few layers vs. the whole pipeline).

\subsection{Two Token-level SeqLDP Designs}
\label{sect:token-level}

For token-level SeqLDP, we need to bound a ``new'' $S_2(f), \forall X \simeq X'$, which should be tight and smaller than the one over $\forall X, X'$, hence producing smaller noise for better utility at meaningful \emph{token-level}~$\epsilon$.
It is still non-trivial since $f(\cdot)$, except for being the input embedding layer, may differ in every entry for even $X \simeq X'$.
One could also normalize the entire $f(\cdot)$ for $S_2(f),\forall X \simeq X'$, which ``degenerates'' to the token-level SeqLDP.
Instead, we tailor two designs to estimate a tighter $S_2(f)$ than the ``general'' one for only the input embedding layer and the first two layers, respectively.
Specifically, we employ \emph{row-wise} normalization and the Lipschitz continuity~\cite{icml/KimPM21}.

\smallskip
\noindent \textbf{After the Input Embedding Layer.}
When $f(\cdot)$ is only the input embedding layer, we work on each row $x_i$ independently: $||x_i||_2 = C, \forall i \in [n]$, where $||\cdot||_2$ is vector $2$-norm.
As a token only affects one row, we have $S_2(f) = C$, independent of whether the embedding layer will be updated.
Again with $\B$, we can draw noise~$Z \in \mathbb{R}^{n \times d}$.

\smallskip
\noindent \textbf{In the First MHA Sub-layer.}
The second option could be adding~$Z$ right after the first MHA sub-layer: $\MHA(X)+Z$, where $\MHA(\cdot)$~is the concatenation of $\Att_{i}(\cdot), i \in [h]$.
Yet, it is non-trivial to estimate $S_2(f)$ of $\MHA(\cdot)$ as $\Att_i(\cdot)$, let alone $\MHA(\cdot)$, is not Lipschitz~\cite{icml/KimPM21}.

\begin{definition}[Lipschitz Continuity]
Given two metric spaces $(\mathcal{X},d_{\mathcal{X}})$ and $(\mathcal{Y},d_{\mathcal{Y}})$, a function $f: \mathcal{X} \rightarrow \mathcal{Y}$ is Lipschitz continuous ($K$-Lipschitz) if there exists a constant $K \geq 0$,
$$d_\mathcal{Y}(f(X),f(X')) \leq K d_\mathcal{X}(X,X'), \forall X, X' \in \mathcal{X}.$$
The smallest $K$ is the Lipschitz constant, denoted by $\lip(f)$.
\end{definition}

We consider that $\mathcal{X}$ is the space of \emph{row-wise} normalized matrices in $\mathbb{R}^{n \times d}$, $\mathcal{Y}$ is the output space $\mathbb{R}^{n \times d(/h)}$ of $\Att_{i \in [h]}(\cdot)$ or $\MHA(\cdot)$, and $d_{\mathcal{X}} = d_{\mathcal{Y}} = ||\cdot||_F$ (or $p$-norm $||\cdot||_p$).
$\lip(f)$ generalizes $S_2(f)$ since it focuses on \emph{any} two inputs rather than just neighboring ones, allowing us to estimate an upper bound for $S_2(f)$ given $\lip(f)$.

The non-Lipschitz continuity stems from the non-linear Softmax activation, which takes pairwise dot products as input~\cite{icml/KimPM21}.
To make MHA Lipschitz, one might apply pairwise $L_2$-distances (hence called $L_2$-MHA)~\cite{icml/KimPM21} or add a normalization step called LipschitzNorm~\cite{icml/DasoulasSV21} in $\softmax(\cdot)$.
Unfortunately, estimating $\lip(f)$ of $L_2$-MHA needs to solve an intractable optimization problem, and LipschitzNorm is ill-suited for the high-dimensional BERT attention.

Instead of adding $Z$ to the outputs of $\MHA(\cdot)$ or $\Att_i(\cdot)$, we~can shift $f(\cdot)$ inside $\softmax(\cdot)$, where estimating $\lip(f)$ or $S_2(f)$ is feasible, \eg, the linear maps used to derive $Q,K,V$ matrices.
Considering a linear map $f(x) = xW$ with $W \in \mathbb{R}^{d \times d/h}$ and $x \in \mathbb{R}^{d}$, the $2$-norm $\lip_2(f)$ is the largest singular value $\sigma_{\max}(W)$~\cite{icml/KimPM21}.
When generalizing $f(\cdot)$ for any two matrices $X \simeq X'$, we can estimate 
$$S_2(f)= \sup ||f(X)-f(X')||_F =||f(x)||_2 \leq C \cdot \sigma_{\max}(W).$$
We can now respectively derive the noisy $Q, K, V$ matrices for $p(\cdot)$.
The first step is to draw noise $Z^{Q^*,K^*,V^*} \in \mathbb{R}^{n \times d}$ given $W^{Q^*,K^*,V^*} \in \mathbb{R}^{d \times d}$.
It requires us to either estimate $\sigma_{\max}(W^{Q^*, K^*, V^*})$ on the fly via power iteration or fix the linear maps in each forward pass of fine-tuning or inference.
We then compute $XW^{Q^*,K^*,V^*}+Z^{Q^*,K^*,V^*}$, which are reshaped into $3h$ matrices of size $n \times d/h$ for $\Att_{i \in [h]}(\cdot)$.

\begin{theorem}
\label{theo:token-ldp}
The two instances (with row-wise normalization) for fine-tuning or inference fulfill token-level $(\epsilon,\delta)$-(Seq)LDP.
\end{theorem}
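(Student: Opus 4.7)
The plan is to reduce both instances to the calibrated Gaussian/aMGM template already established in the paper: once the $L_2$-sensitivity $S_2(f)$ of the pre-noise function is bounded under the token-level neighboring relation $X \simeq X'$, the $(\epsilon,\delta)$-(Seq)LDP guarantee follows from Theorem~\ref{theo:dp_amgm} (or the classical GM) exactly as in the proof of Theorem~\ref{theo:seq-ldp}, and the remaining post-noise layers $p(\cdot)$ cannot weaken the guarantee by free post-processing. The only new ingredient is recomputing $S_2(f)$ under the stronger restriction that $X$ and $X'$ differ in a \emph{single} token rather than being arbitrary.

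For the first instance, $f(\cdot)$ is the input embedding layer composed with the row-wise normalization $\|x_i\|_2 = C$. Since $X$ and $X'$ agree in every token except one, their normalized embedding matrices agree on every row $j \neq i$, so $f(X) - f(X')$ has exactly one nonzero row whose $L_2$-norm is controlled by the normalization bound $C$. Hence $S_2(f)$ depends only on $C$, independent of $n$, $d$, and of whether the embedding weights are later updated (since the normalization enforces the bound row-by-row after any linear reparametrization). Plugging this $S_2(f)$ into aMGM/GM yields the claim for instance~1.

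For the second instance, I would view the joint map $f(X) = (XW^Q, XW^K, XW^V)$ as the pre-noise function and exploit its linearity. Since $X - X'$ is supported on a single normalized row, $(X - X')W^{\cdot}$ is a rank-one matrix whose Frobenius norm is at most $C \cdot \sigma_{\max}(W^{\cdot})$. I would concatenate the three maps and treat $(XW^Q \,\|\, XW^K \,\|\, XW^V) \in \mathbb{R}^{n \times 3d}$ as a single matrix-valued release, so that a single application of aMGM (rather than basic composition over three sub-mechanisms) produces the noisy $(Q,K,V)$ at a tight sensitivity. Everything downstream — $\softmax$, $\Att_i$, the FFN, subsequent encoders and the task head — operates only on noisy $(Q,K,V)$ and never reaccesses the clean $X$ (exactly the rationale used to drop the residual link in Section~\ref{sect:splitting}), so free post-processing closes the argument.

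The main obstacle lies in the second instance: $W^Q, W^K, W^V$ are trainable, so $\sigma_{\max}(W^{\cdot})$ drifts across fine-tuning steps, and any sensitivity bound must hold at the \emph{actual} weights used to produce the noisy signal at the moment of noise injection. The two paths the paper hints at are to recompute $\sigma_{\max}$ on the fly via power iteration (which consumes no privacy budget because the weights are held server-side, or are otherwise public to the user running $f(\cdot)$ locally) or to fix the maps across the forward pass; the proof must certify that whichever path is taken, the value fed into the noise calibration is a valid upper bound on $\|f(X) - f(X')\|_F$ at that step. A secondary subtlety is handling the $h$ heads together with $(Q,K,V)$ without incurring an $h$- or $3$-fold composition loss, which the joint-release view above is designed to sidestep.
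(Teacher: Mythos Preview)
Your proposal is correct and follows the paper's approach: the paper's proof is literally a two-sentence remark that it is ``equivalent to our approach for (Seq)LDP'' once $S_2(f)$ is computed properly under the token-level neighboring relation $X \simeq X'$, which is exactly your plan of reducing to Theorem~\ref{theo:seq-ldp}/Theorem~\ref{theo:dp_amgm} plus free post-processing. Your joint-release treatment of $(Q,K,V)$ in the second instance is a mild tightening relative to the paper's text, which draws three separate noise matrices $Z^{Q^*}, Z^{K^*}, Z^{V^*}$ without explicitly addressing their composition, but the core reduction is identical.
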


The proof is equivalent to our approach for (Seq)LDP.
One just needs to compute $S_2(f), \forall X \simeq X'$ properly, and we did.

\paragraph{Discussion.}
For minimal changes to the pipeline, we adopt the~raw WordPiece~\cite{corr/WuSCLNMKCGMKSJL16}, which splits text into sub-words; using word-level tokenization yields word-level (Seq)LDP.
Our notion can also extend to \emph{phrase-level} (Seq)LDP by directly using the group privacy~\cite{fttcs/DworkR14} or dedicatedly computing the $L_2$-sensitivity smaller than $c \cdot S_2(f)$ for two sequences differing in (consecutive) $c$ tokens.
Typically, $c$ is small since a few tokens are enough for most sensitive information.
One could also add noise deeper in a pipeline using~$S_2(f_1 \circ f_2) \leq S_2(f_1) \cdot S_2(f_2)$, where $f_1 \circ f_2$ is function composition $f_1(f_2(\cdot))$.
We then need to estimate $S_2(f)$ of each (component~of) sub-layer.
For example, $\FFN(\cdot)$ has two linear maps $W_1$ and $W_2$ with $\ReLU(\cdot)$ in between, where $S_2(f)$ of $\ReLU(\cdot)$ is~$1$.
For $W_{1,2}$, its $S_2(f)$ is bounded by $\sqrt{d}C\sigma_{\max}(W_{1,2})$ since %$W_{1,2}$ may differ in every row and
$||\cdot||_F \leq \sqrt{d}||\cdot||_2$ with $d$ as the rank.
We can also estimate $S_2(f)$ of $\LN(\cdot)$ from its Lipschitz constant~\cite{icml/KimPM21}.
When $f(\cdot)$ is composed of more layers, we can only get a looser estimation on the final $S_2(f)$.
Hence, our general recommendation is to add noise early when estimating a tight $S_2(f)$ is feasible.

\begin{figure*}[!t]
    \centering
    \includegraphics[width=\linewidth]{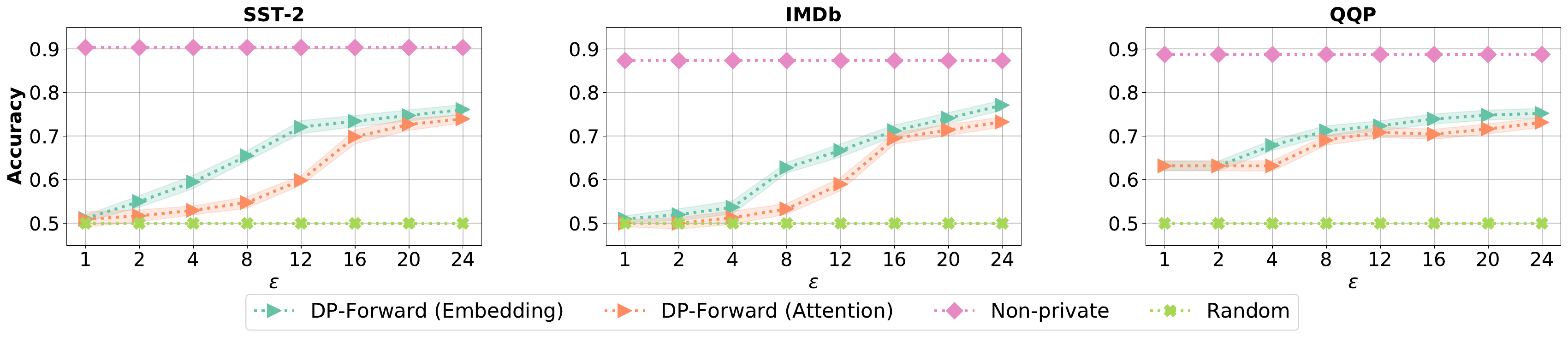}
    \vspace{-20pt}
    \caption{Privacy-accuracy tradeoff of token-level SeqLDP instances when tuning local $\epsilon$}
    \label{fig:eps_results}
    \vspace{-8pt}
\end{figure*}
% with the non-private and MVG baselines

\subsection{More Experiment Results}
\label{apdx:exp}
We also study the privacy-accuracy tradeoff on all three tasks for our two token-level SeqLDP designs when tuning local $\epsilon$.
The results are compared with the non-private baseline and fine-tuning using MVG noise.
Figure~\ref{fig:eps_results} shows task accuracy increases with $\epsilon$.
Perturbing input embeddings for token-level (vs. sequence-level) SeqLDP can achieve remarkable accuracy gain, \eg, ${\sim}0.7$ vs. $0.5$ for IMDb.

We evaluate the two MIAs on \mbox{SST-2} fine-tuned by our two token-level SeqLDP instances.
Table~\ref{tbl:token_MIA} shows the results, with success rates within $0.48{-}0.52$ (like random guessing) bolded.
Even if the~provable guarantee is at the token level, our instances can notably reduce the success rates of the confidence-based attack by ${\sim}14$pp and the entropy-based one by ${\sim}11$pp, compared to the non-private baseline.

\begin{table}[!t]
\centering
\begin{tabular}{c|l|cc}
\hline
\multirow{2}{*}{\begin{tabular}[c]{@{}c@{}} Local \\ $\epsilon$ \end{tabular}} & \multirow{2}{*}{Method} 
%& \multirow{2}{*}{\begin{tabular}[c]{@{}c@{}}Task \\ Accuracy\end{tabular}} 
& \multicolumn{2}{c}{Attack Success Rate} \\ \cline{3-4} 
 %& 
 &  & Entropy & Confidence \\ \hline
 $\infty$ & Non-private baseline 
 %& $0.918$ 
 & $0.659$ & $0.645$ \\ \hline \hline
\multirow{2}{*}{8} & DP-Forward (Embedding) 
%& $0.655$ 
& $0.536$ & $\mathbf{0.503}$ \\
 & DP-Forward (Attention) 
 %& $0.547$ 
 & $0.545$ & $\mathbf{0.516}$ \\ \hline
\multirow{2}{*}{16} & DP-Forward (Embedding) 
%& $0.734$ 
& $0.542$ & $\mathbf{0.509}$ \\
 & DP-Forward (Attention) 
%& $0.698$ 
& $0.552$ & $\mathbf{0.519}$ \\ \hline
\multirow{2}{*}{24} & DP-Forward (Embedding) 
%& $0.761$ 
& $0.552$ & $\mathbf{0.516}$ \\ 
 & DP-Forward (Attention) 
 %& $0.739$ 
 & $0.559$ & $0.523$ \\ \hline
\end{tabular}
\vspace{5pt}
\caption{Success Rates of the two (sequence-level) MIAs on our token-level SeqLDP instances}
\vspace{-8pt}
\label{tbl:token_MIA}
\end{table}

\section{Relevant Matrix Algebra}
\label{apdx:matrix}

\begin{proposition}
The PDF defined in Eq.~(\ref{def:MGD}) and the matrix-trace-based one used in MVG~\cite{ccs/ChanyaswadDPM18} are equivalent.
\end{proposition}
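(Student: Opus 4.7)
The plan is to show that the only place where the two PDF forms differ in appearance is inside the exponent, and that the two exponents are equal by elementary matrix algebra, leaving the normalizing constants identical. I would start by recalling the conventional trace-based density of $\MN_{n,d}(0,\Sigma,\Psi)$:
\[
\Pr(Z\mid 0,\Sigma,\Psi) \;=\; \frac{\exp\!\bigl(-\tfrac{1}{2}\Tr(\Sigma^{-1}Z\,\Psi^{-1}Z^{\top})\bigr)}{(2\pi)^{nd/2}|\Psi|^{d/2}|\Sigma|^{n/2}},
\]
and then argue that the numerator in Definition~\ref{def:MGD} agrees with this exponent term by term. The denominators already match verbatim.

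The key step is to rewrite the Frobenius norm as a trace using $\|A\|_F^2 = \Tr(A^{\top}A)$. Setting $A = U^{-1}ZV^{-\top}$, I would expand
\[
\|U^{-1}ZV^{-\top}\|_F^2 \;=\; \Tr\bigl((U^{-1}ZV^{-\top})^{\top}(U^{-1}ZV^{-\top})\bigr) \;=\; \Tr\bigl(V^{-1}Z^{\top}U^{-\top}U^{-1}ZV^{-\top}\bigr),
\]
then apply the cyclic property of the trace to move $V^{-\top}$ to the front, obtaining
\[
\Tr\bigl(V^{-\top}V^{-1}Z^{\top}U^{-\top}U^{-1}Z\bigr) \;=\; \Tr\bigl((VV^{\top})^{-1}Z^{\top}(UU^{\top})^{-1}Z\bigr) \;=\; \Tr\bigl(\Psi^{-1}Z^{\top}\Sigma^{-1}Z\bigr).
\]
One more application of the cyclic property yields $\Tr(\Sigma^{-1}Z\Psi^{-1}Z^{\top})$, matching the conventional form. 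Here I use only the defining relations $\Sigma = UU^{\top}$ and $\Psi = VV^{\top}$ (which imply $\Sigma^{-1} = U^{-\top}U^{-1}$ and $\Psi^{-1} = V^{-\top}V^{-1}$ since $U, V$ are invertible) together with $(AB)^{\top} = B^{\top}A^{\top}$ and $(A^{\top})^{-1} = (A^{-1})^{\top}$.

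There is no real obstacle here; the statement is essentially a change of notation. The only point requiring a moment's care is verifying that the decomposition $\Sigma = UU^{\top}$ (and similarly for $\Psi$) genuinely yields invertible $U, V$ so that $U^{-1}$ and $V^{-\top}$ are well defined. This follows because $\Sigma$ and $\Psi$ are required to be positive definite covariance matrices, so any factorization (e.g., Cholesky or the symmetric square root) produces an invertible factor. With that observation in place, the two PDFs agree pointwise on $\mathbb{R}^{n \times d}$, which establishes the proposition.
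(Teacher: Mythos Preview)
Your proposal is correct and follows essentially the same route as the paper: both reduce the Frobenius norm to a trace via $\|A\|_F^2=\Tr(A^\top A)$ and then use trace invariance (you invoke the cyclic property directly, while the paper phrases it as $\Tr(A)=\Tr(V^{-\top}AV^\top)$ via matrix similarity) together with $\Sigma=UU^\top$, $\Psi=VV^\top$ to recover the conventional trace-based exponent. The arguments are equivalent, and your added remark about the invertibility of $U,V$ is a welcome clarification.
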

\begin{proof}
For the numerator part in Eq.~(\ref{def:MGD}), we have
\begin{align*}
    & ~||U^{-1}(Z-M)V^{-\top}||^2_F \\
    = & ~\Tr[V^{-1}(Z-M)^\top U^{-\top}U^{-1}(Z-M)V^{-\top}] \\
    = & ~\Tr[V^{-1}(Z-M)^\top\Sigma^{-1}(Z-M)V^{-\top}],
\end{align*}
where $\Tr(\cdot)$ denotes the matrix trace.
Denote 
$$A = V^{-1}(Z-M)^\top\Sigma^{-1}(Z-M)V^{-\top}.$$
We compute
    $$B = V^{-\top}AV^\top = \Psi^{-1}(Z-M)^\top\Sigma^{-1}(Z-M),$$
which is a similar matrix of $A$,
and hence $\Tr(A) = \Tr(B)$.
So, the two PDFs are equivalent since
$$ ||U^{-1}(Z-M)V^{-\top}||^2_F = \Tr(B).\eqno\mbox{\qedhere}$$
\end{proof}

\begin{theorem}[Singular Value Decomposition or SVD~\cite{cu/HJ2012}] 
\label{theo:SVD}
A matrix $A\in \mathbb{R}^{n \times d}$ can be decomposed as $W_1 \Lambda W_2^\top$, where $W_1 \in \mathbb{R}^{n \times n} $ and $ W_2 \in \mathbb{R}^{d \times d}$ are unitary, and $\Lambda$ is an $n \times d$ diagonal matrix whose diagonal entries are ordered singular values of $A$, denoted by $\sigma_1(A) \geq \ldots \geq \sigma_r(A) \geq 0$ (or simply $\sigma(A)$) with $r = \min \{n,d\}$.
\end{theorem}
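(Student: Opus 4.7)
The plan is to prove the existence of the SVD by reducing it to the spectral theorem for symmetric positive semidefinite matrices, applied to $A^\top A$. First I would observe that $A^\top A \in \mathbb{R}^{d \times d}$ is symmetric and positive semidefinite, so by the real spectral theorem it admits an orthonormal eigendecomposition $A^\top A = W_2 D W_2^\top$, where $W_2 \in \mathbb{R}^{d \times d}$ is orthogonal (hence unitary over $\mathbb{R}$) and $D = \mathrm{diag}(\lambda_1,\dots,\lambda_d)$ with $\lambda_1 \geq \cdots \geq \lambda_d \geq 0$. I would then define the singular values $\sigma_i(A) = \sqrt{\lambda_i}$ for $i \in [1, r]$ with $r = \min\{n,d\}$, and let $\Lambda \in \mathbb{R}^{n \times d}$ be the rectangular diagonal matrix whose $(i,i)$ entry is $\sigma_i(A)$.

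Next I would construct $W_1$ explicitly. Let $W_2 = [w_2^{(1)} \,|\, \cdots \,|\, w_2^{(d)}]$ and let $k$ be the number of strictly positive singular values (equivalently $k = \mathrm{rank}(A)$). For $i \in [1,k]$, define
\begin{equation*}
    w_1^{(i)} \;=\; \frac{1}{\sigma_i(A)}\, A w_2^{(i)} \;\in\; \mathbb{R}^{n}.
\end{equation*}
I would then verify orthonormality of $\{w_1^{(i)}\}_{i=1}^{k}$ by computing $\langle w_1^{(i)}, w_1^{(j)}\rangle = (\sigma_i \sigma_j)^{-1} (w_2^{(i)})^\top A^\top A w_2^{(j)} = (\sigma_i \sigma_j)^{-1} \lambda_j \,\delta_{ij}$, which equals $\delta_{ij}$ since $\lambda_j = \sigma_j^2$. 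Finally, I would extend $\{w_1^{(1)},\dots,w_1^{(k)}\}$ to a full orthonormal basis $\{w_1^{(1)},\dots,w_1^{(n)}\}$ of $\mathbb{R}^n$ via Gram--Schmidt on any completion, and assemble $W_1 = [w_1^{(1)} \,|\, \cdots \,|\, w_1^{(n)}]$, which is unitary by construction.

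To conclude, I would verify the identity $A = W_1 \Lambda W_2^\top$ by checking $A W_2 = W_1 \Lambda$ column by column: for $i \leq k$, $A w_2^{(i)} = \sigma_i(A)\, w_1^{(i)}$ by definition of $w_1^{(i)}$; for $i > k$, $\lambda_i = 0$ implies $\|A w_2^{(i)}\|_2^2 = (w_2^{(i)})^\top A^\top A w_2^{(i)} = 0$, so $A w_2^{(i)} = 0$, matching the zero column of $\Lambda$. Since $W_2$ is unitary, right-multiplying by $W_2^\top$ yields the claim.

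The main obstacle is the rank-deficient case: when $k < r$ some singular values vanish and the defining formula $w_1^{(i)} = A w_2^{(i)}/\sigma_i(A)$ breaks down. I would handle this precisely by using the zero-eigenvalue calculation above to show the corresponding columns of $A W_2$ must vanish, so an \emph{arbitrary} orthonormal completion of the $w_1^{(i)}$ suffices; the existence of such a completion follows from Gram--Schmidt applied to any basis of the orthogonal complement. A minor secondary subtlety is ordering: I would invoke the spectral theorem in a form that already sorts eigenvalues in nonincreasing order (or permute the columns of $W_2$ and entries of $D$ accordingly) so that the resulting $\sigma_i(A)$ appear in the claimed order $\sigma_1(A) \geq \cdots \geq \sigma_r(A) \geq 0$.
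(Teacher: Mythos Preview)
Your proof is correct and follows the standard spectral-theorem route for establishing the SVD. However, the paper does not actually prove this theorem: it is stated as a cited result from Horn and Johnson's textbook~\cite{cu/HJ2012} and used without proof as background in Appendix~\ref{apdx:matrix}. So there is no approach to compare against; you have supplied a valid proof where the paper simply invokes a reference.
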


\begin{lemma}
\label{lemma_1}
Given a matrix $A \in \mathbb{R}^{n \times d}$ and two orthogonal matrices $W_1 \in \mathbb{R}^{n \times n}, W_2 \in \mathbb{R}^{d \times d}$, we have $||A||_F = ||W_1 A||_F = ||A W_2||_F$; $||\cdot||_F$ is immune to the pre- and post-orthogonal transformation.
\end{lemma}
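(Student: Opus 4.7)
The plan is to prove both equalities by rewriting the Frobenius norm via the matrix trace, namely $\|A\|_F^2 = \Tr(A^\top A) = \Tr(A A^\top)$, and then exploiting orthogonality ($W_1^\top W_1 = I_n$, $W_2 W_2^\top = I_d$) together with the cyclic property of the trace. This is the cleanest route because it avoids unpacking entries of $A$ and works uniformly for rectangular $A$.

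First, for the pre-orthogonal transformation, I would write
\[
\|W_1 A\|_F^2 = \Tr\bigl((W_1 A)^\top (W_1 A)\bigr) = \Tr(A^\top W_1^\top W_1 A) = \Tr(A^\top A) = \|A\|_F^2,
\]
invoking $W_1^\top W_1 = I_n$ at the third step. Taking square roots (both sides are nonnegative) gives $\|W_1 A\|_F = \|A\|_F$.

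Next, for the post-orthogonal transformation, I would compute
\[
\|A W_2\|_F^2 = \Tr\bigl((A W_2)^\top (A W_2)\bigr) = \Tr(W_2^\top A^\top A W_2) = \Tr(A^\top A \, W_2 W_2^\top) = \Tr(A^\top A) = \|A\|_F^2,
\]
where I use the cyclic property $\Tr(BC) = \Tr(CB)$ (with $B = W_2^\top$ and $C = A^\top A W_2$) and $W_2 W_2^\top = I_d$. Again taking square roots yields $\|A W_2\|_F = \|A\|_F$, completing the proof.

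There is no real obstacle here; the only tiny subtlety is remembering to apply the cyclic property of the trace in the post-multiplication case (since the orthogonal factor sits on the right of $A$, not adjacent to its own transpose), but that is an elementary identity. The argument relies only on $W_1, W_2$ being orthogonal and requires no assumption on $A$ beyond its being real-valued and of compatible size.
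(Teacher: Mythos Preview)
Your proof is correct and follows essentially the same route as the paper: both use $\|M\|_F^2 = \Tr(M^\top M)$ together with orthogonality to cancel the $W_i$ factors. The paper only spells out the $W_1 A$ case and says the $A W_2$ case is similar, whereas you additionally invoke the cyclic property of the trace to handle the right-hand factor explicitly; this is a harmless elaboration rather than a different approach.
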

\begin{proof}
We first prove that $||A||_F = ||W_1 A||_F$ by
$$||W_1 A||^2_F = \Tr(A^\top W_1^\top W_1 A) = \Tr(A^T A) = ||A||^2_F,$$ and similarly we can prove that $||A||_F = ||A W_2||_F$.
\end{proof}

\begin{lemma}
\label{lemma_2}
For $A \in \mathbb{R}^{n \times d}$, $||A||^2_F = \sum^r_{i=1} \sigma^2_i(A)$, where $\sigma_i(A)$ is the $i^{\text{th}}$ singular value of $A$ and $r=\min\{n,d\}$.
\end{lemma}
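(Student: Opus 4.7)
\textbf{Proof proposal for Lemma~\ref{lemma_2}.} The plan is to reduce the Frobenius norm of $A$ to that of its diagonal singular-value matrix via the SVD, and then read off the sum of squared singular values directly. Concretely, I would first invoke Theorem~\ref{theo:SVD} to write $A = W_1 \Lambda W_2^\top$, with $W_1 \in \mathbb{R}^{n \times n}$ and $W_2 \in \mathbb{R}^{d \times d}$ orthogonal and $\Lambda \in \mathbb{R}^{n \times d}$ diagonal with entries $\sigma_1(A) \geq \cdots \geq \sigma_r(A) \geq 0$ along the diagonal ($r = \min\{n,d\}$) and zeros elsewhere.

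Next, I would apply Lemma~\ref{lemma_1} twice to peel off the orthogonal factors. Using the left-invariance with the orthogonal matrix $W_1^\top$, we get $\|A\|_F = \|W_1^\top A\|_F = \|\Lambda W_2^\top\|_F$, and then using right-invariance with $W_2$, we get $\|\Lambda W_2^\top\|_F = \|\Lambda W_2^\top W_2\|_F = \|\Lambda\|_F$. Hence $\|A\|_F^2 = \|\Lambda\|_F^2$.

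Finally, since $\Lambda$ is diagonal with entries $\sigma_i(A)$ for $i \in [1,r]$ (and zeros in the off-diagonal and non-square padding positions), its squared Frobenius norm is simply $\sum_{i=1}^{r} \sigma_i^2(A)$ by definition of $\|\cdot\|_F$ as the sum of squared entries. Combining gives the claim.

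There is no real obstacle here: the result is a direct consequence of Theorem~\ref{theo:SVD} and Lemma~\ref{lemma_1}. The only minor care needed is to handle both the $n \leq d$ and $n > d$ cases uniformly; this is automatic once one observes that the non-diagonal entries of the rectangular $\Lambda$ are all zero, so only the $r$ singular values contribute to $\|\Lambda\|_F^2$.
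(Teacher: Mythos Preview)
Your proposal is correct and follows essentially the same approach as the paper: apply the SVD (Theorem~\ref{theo:SVD}) and use the orthogonal invariance of $\|\cdot\|_F$ (Lemma~\ref{lemma_1}) to reduce $\|A\|_F^2$ to $\|\Lambda\|_F^2 = \sum_{i=1}^{r}\sigma_i^2(A)$. The paper's proof is just a one-line version of exactly this argument.
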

\begin{proof}
The SVD of $A$ is $W_1 \Lambda W_2^\top$.
By Lemma~\ref{lemma_1}, we have
$$||A||^2_F = ||W_1 \Lambda W_2^\top||^2_F = ||\Lambda||^2_F = \sum^r_{i=1} \sigma^2_i(A).
\eqno \mbox{\qedhere}
$$\end{proof}

\begin{lemma}[Lemma~$4$~\cite{corr/YangXLLW21}]
\label{lemma_3}
Given matrices $A \in \mathbb{R}^{n \times n}, B \in \mathbb{R}^{n \times d}, C \in \mathbb{R}^{d \times d}$, we have $||ABC||^2_F \leq \sum^r_{i=1}\sigma^2_i(A)\sigma^2_i(B)\sigma^2_i(C)$ where
$\sigma_i(\cdot)$ is the $i^{\text{th}}$ singular value, and $r=\min\{n,d\}$.
\end{lemma}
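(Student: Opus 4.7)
My plan is to reduce the target to a sum-of-squared-singular-values inequality via Lemma~2 and then chain two invocations of the classical Horn (Gelfand--Naimark) inequality, closing with a standard majorization step. By Lemma~2, $||ABC||_F^2 = \sum_{i=1}^{r}\sigma_i^2(ABC)$ with $r = \min(n,d)$, so it suffices to show
\begin{equation*}
\sum_{i=1}^{r}\sigma_i^2(ABC) \;\le\; \sum_{i=1}^{r}\sigma_i^2(A)\,\sigma_i^2(B)\,\sigma_i^2(C).
\end{equation*}
A key preliminary observation is that the termwise bound $\sigma_i(ABC) \le \sigma_i(A)\sigma_i(B)\sigma_i(C)$ is \emph{false} in general (a small diagonal example with misaligned orderings refutes it), so the inequality has to be treated collectively rather than index by index.

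Next, I would invoke Horn's multiplicative inequality for singular values of a product: for any conformable $X, Y$ and every $1 \le k \le \min(n,d)$,
\begin{equation*}
\prod_{i=1}^{k}\sigma_i(XY) \;\le\; \prod_{i=1}^{k}\sigma_i(X)\,\sigma_i(Y),
\end{equation*}
that is, $\sigma(XY)$ is weakly log-majorized by the Hadamard product $\sigma(X)\odot\sigma(Y)$. Applying this once to $(AB, C)$ and once to $(A, B)$ and multiplying the resulting non-negative bounds yields the chain
\begin{equation*}
\prod_{i=1}^{k}\sigma_i(ABC) \;\le\; \prod_{i=1}^{k}\sigma_i(A)\,\sigma_i(B)\,\sigma_i(C), \qquad 1 \le k \le r.
\end{equation*}
The last step lifts this weak log-majorization to weak majorization of the \emph{squared} singular values via the standard fact that if two sorted non-negative sequences satisfy $\prod_{i\le k}x_i \le \prod_{i\le k}y_i$ for every $k$, then $\sum_{i\le k}\phi(x_i) \le \sum_{i\le k}\phi(y_i)$ for every non-decreasing $\phi$ with $\phi\circ\exp$ convex. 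Taking $\phi(t)=t^2$ (so $e^{2t}$ is convex increasing) at $k=r$ and combining with Lemma~2 finishes the argument.

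The step I expect to be the main obstacle is the final implication from weak log-majorization to weak majorization of squares. It is a classical fact (Bhatia, \emph{Matrix Analysis}, Ch.~II; Marshall--Olkin), but since the paper explicitly flags the original proof in~\cite{corr/YangXLLW21} as problematic, I would prefer a short self-contained derivation over a citation. Concretely, writing $\pi_k := \prod_{i\le k}\sigma_i^2(ABC)\big/\prod_{i\le k}\sigma_i^2(A)\sigma_i^2(B)\sigma_i^2(C) \le 1$ and applying summation by parts over $k$ reduces the statement to a Chebyshev-/rearrangement-type inequality on already-sorted decreasing sequences. With that tool in hand, the two Horn applications together with Lemma~2 give the claim immediately.
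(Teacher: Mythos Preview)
Your argument is correct and takes a genuinely different route from the paper's proof. You reduce the claim to weak log-majorization of singular values via two applications of Horn's product inequality and then lift to sums of squares using the standard ``$\phi\circ\exp$ convex increasing'' transfer principle from majorization theory. The paper instead works entirely by hand: it strips off the outer orthogonal factors via SVD and Lemma~\ref{lemma_1} to reduce to $\|\Lambda_A W \Lambda_B W' \Lambda_C\|_F^2$, expands this as $\sum_{i,j}\sigma_i^2(A)\sigma_j^2(C)\beta_{ij}^2$, then uses an Abel-summation rewriting $\sigma_i^2(A)=\sum_{t\ge i}\xi_t$, $\sigma_j^2(C)=\sum_{s\ge j}\eta_s$ to reduce everything to the submatrix inequality $\sum_{i\le t,\,j\le s}\beta_{ij}^2 \le \sum_{k\le\min(t,s)}\sigma_k^2(B)$, which it closes with the singular-value interlacing property for submatrices. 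What your approach buys is brevity and immediate generalization to any unitarily invariant norm (or any $\phi$ with $\phi\circ\exp$ convex increasing); what the paper's approach buys is complete self-containment with no appeal to Horn's theorem or to the log-majorization $\Rightarrow$ majorization lemma, which matters here because the authors are explicitly repairing a flawed proof in~\cite{corr/YangXLLW21} and want every step visible. Your closing sketch (summation by parts plus a Chebyshev-type step) is the right instinct, but if you want the proof to stand on its own in this paper's spirit you should spell out that step fully rather than gesture at Bhatia; otherwise a clean citation suffices.
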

\label{apdx_sect3}

\begin{proof}
With SVD, $A,B,C$ are decomposed as 
$$A = W_{A_1} \Lambda_A W^\top_{A_2}, \ B = W_{B_1} \Lambda_B W^\top_{B_2}, \ C=W_{C_1} \Lambda_C W^\top_{C_2}.$$
Based on Lemma~\ref{lemma_2}, we have 
\begin{align*}
||ABC||^2_F & = ||W_{A_1} \Lambda_A W^\top_{A_2} W_{B_1} \Lambda_B W^\top_{B_2} W_{C_1} \Lambda_C W^\top_{C_2} ||^2_F \\
 & = ||\Lambda_A W \Lambda_B W' \Lambda_C||^2_F,
\end{align*}
where $W = W^\top_{A_2} W_{B_1} = (w_{ij})_{n \times n}$ and $W' = W^\top_{B_2} W_{C_1} = (w'_{ij})_{d \times d}$ are still two unitary matrices.
We further have 
\begin{align*}
    ||ABC||^2_F & = \sum^n_{i=1}\sum^d_{j=1}\sigma^2_i(A)\sigma^2_j(C)[\sum^r_{k=1}\sigma_k(B)w_{ik}w'_{kj}]^2 \\
    & = \sum^n_{i=1}\sum^d_{j=1}\sigma^2_i(A)\sigma^2_j(C) \beta^2_{ij},
\end{align*}
where $\beta_{ij}= \sum^r_{k=1}\sigma_k(B)w_{ik}w'_{kj}$.
Hence, we need to show 
\begin{align}
\label{lemma3_proof}
    \sum^n_{i=1}\sum^d_{j=1}\sigma^2_i(A)\sigma^2_j(C) \beta^2_{ij} \leq \sum^r_{i=1}\sigma^2_i(A)\sigma^2_i(B)\sigma^2_i(C).
\end{align}
Following the strategy in~\cite{corr/YangXLLW21} (\emph{cf.} Eq.~(29), (30)), we rewrite $\sigma^2_i(A)$ and $\sigma^2_j(C)$ using non-negative values $\xi_t$ and $\eta_s$ s.t.
$$\sigma^2_i(A) = \sum^n_{t=i}\xi_t, \ t \in [1,n]; \ \sigma^2_j(C) = \sum^d_{s=j}\eta_s, \ s \in [1, d].$$
For $i \in [1, n], j \in [1,d]$, we denote $\gamma_{ij} = \sigma_i(B)$, if $i=j$; $\gamma_{ij} = 0$, otherwise.
Then, we transform the Eq.~(\ref{lemma3_proof}) as 
\begin{align*}
   & \sum^r_{i=1}\sigma^2_i(A)\sigma^2_i(B)\sigma^2_i(C) - \sum^n_{i=1}\sum^d_{j=1}\sigma^2_i(A)\sigma^2_j(C) \beta^2_{ij} \\
   = & \sum^n_{i=1}\sum^d_{j=1}(\gamma^2_{ij}-\beta^2_{ij})\sigma^2_i(A)\sigma^2_j(C) \\
   = & \sum^n_{i=1}\sum^d_{j=1}(\gamma^2_{ij}-\beta^2_{ij})\sum^n_{t=i}\xi_t \sum^d_{s=j}\eta_s \\
   = & \sum^n_{t=1}\sum^d_{s=1} \xi_t \eta_s \sum^t_{i=1}\sum^s_{j=1}(\gamma^2_{ij}-\beta^2_{ij}).
\end{align*}
Since $\xi_t ,\eta_s$ are non-negative, we only need to show 
\begin{align}
\label{lemma3_eq2}
    \sum^t_{i=1}\sum^s_{j=1}(\gamma^2_{ij}-\beta^2_{ij}) \geq 0.
\end{align}
However, the original proof~\cite{corr/YangXLLW21} has two issues: i) $t > s$ is not considered, and ii) the commutative law of matrix multiplication in Eq.~(35) does not hold as $E(t)$ in Eq.~(34) is not a standard diagonal matrix.
To address them, we have 
$$\sum^t_{i=1}\sum^s_{j=1}\gamma^2_{ij} = \sum^{\min\{t,s\}}_{k=1}\sigma^2_k(B).$$
We then denote a sub-matrix $B^*=(\beta_{ij})$ for $i\in [1,t], j \in [1,s]$ of $W\Lambda_BW'$.
With SVD of $B^*$, we have 
$$\sum^t_{i=1}\sum^s_{j=1}\beta^2_{ij} = ||B^*||^2_F = \sum^{\min\{t,s\}}_{k=1}\sigma^2_k(B^*) \leq \sum^{\min\{t,s\}}_{k=1}\sigma^2_k(B).$$
The last inequality is due to $\sigma_k(B^*) \leq \sigma_k(B)$ for $\forall k \in [1, r]$~\cite{cu/HJ2012}.
So, Inequality~(\ref{lemma3_eq2}) holds.
\end{proof}

\section{Proofs for Our Analytic Matrix Gaussian Mechanism}
\label{apdx_sectB}
This section proof Lemma~\ref{lemma_4}, Lemma~\ref{lemma_5}, and Theorem~\ref{theo:dp_amgm} in Section~\ref{sect:agm-aMGM}.

\begin{proof}[Proof of Lemma~\ref{lemma_4}]
Recall that $\M(f(\D))=f(\D)+Z$ with $Z \sim \MN_{n,d}(0,\Sigma,\Psi)$, the probability of $\M(f(\D)) = O$ is 
$$\Pr[\M(f(\D) = O] =\frac{\exp(-\frac{1}{2}||U^{-1}(O -f(\D))V^{-\top}||^2_F)}{(2\pi)^{nd/2}|\Psi|^{d/2}|\Sigma|^{n/2}}.$$
Similarly, we can compute $\Pr[\M(f(\D'))= O]$.
By plugging them into $\L_{\M,\D,\D'}(O)$, and let $\Delta = f(\D)-f(\D')$,
\begin{align*}
    & \L_{\M,\D,\D'}(O) = \ln \frac{\exp(-\frac{1}{2}||U^{-1}(O-f(\D))V^{-\top}||^2_F)}{\exp(-\frac{1}{2}||U^{-1}(O-f(\D'))V^{-\top}||^2_F)} \\
    = & ~\frac{1}{2}||U^{-1}(Z+\Delta)V^{-\top}||^2_F -\frac{1}{2}||U^{-1} Z V^{-\top}||^2_F \\
    = & ~\frac{1}{2}||U^{-1} \Delta V^{-\top}||^2_F + \langle \vect(U^{-1} \Delta V^{-\top}), \vect(U^{-1} Z V^{-\top}) \rangle,
\end{align*}
where $\vect(\cdot)$ is the vectorization of a matrix and $\langle \cdot, \cdot \rangle$ denotes the inner product.
For easy presentation, we denote $Z' = U^{-1} Z V^{-\top}$ and $\Delta' = U^{-1} \Delta V^{-\top}$, and then we re-write 
$$\L_{\M,\D,\D'}(O) = \frac{1}{2}||\Delta'||^2_F+\langle \vect(\Delta'), \vect(Z')\rangle.$$
Given Lemma~\ref{lemma_6}, $Z' \sim \MN_{n,d}(0, I_n, I_d)$ with each entry i.i.d.~drawn from $\N(0,1)$.
$\langle \vect(\Delta'), \vect(Z')\rangle$ 
%thus computes the 
is thus the $\Delta'$-weighted
sum of $nd$ {i.i.d.} Gaussian random variables,
%(scaled by the corresponding entries of $\Delta'$), 
which is a Gaussian variable\footnote{\href{https://en.wikipedia.org/wiki/Sum_of_normally_distributed_random_variables}{en.wikipedia.org/wiki/Sum\_of\_normally\_distributed\_random\_variables}} $\N(0,||\Delta'||^2_F)$ too.
So $\L_{\M,\D,\D'} \sim \N(\eta,2\eta)$, $\eta = \frac{1}{2}||\Delta'||^2_F$.
\end{proof}

\begin{proof}[Proof of Lemma~\ref{lemma_5}]
With Lemma~\ref{lemma_4} and the CDF, we have
\begin{align*}
    & ~\Pr[\L_{\M,\D,\D'} \geq \epsilon] = \Pr[\N(\eta, 2\eta) \geq \epsilon] \\
    = & ~\Pr[\N(0,1) \geq \frac{\epsilon-\eta}{\sqrt{2\eta}}]
    = \Pr[\N(0,1) \leq \frac{\eta - \epsilon}{\sqrt{2\eta}}] \\
    = & ~\Phi(\frac{||\Delta'||_F}{2} - \frac{\epsilon}{||\Delta'||_F}),
\end{align*}
where we used $\N(\eta, 2\eta) = \eta + \N(0,1)/\sqrt{2\eta}$ and the symmetry of the standard normal distribution $\Pr[\N(0,1)\geq t] = \Pr[\N(0,1)\leq -t]$.

A similar argument applied to $\L_{\M,\D',\D}$ yields
$$\Pr[\L_{\M,\D',\D} \leq -\epsilon] = \Phi(-\frac{||\Delta'||_F}{2} - \frac{\epsilon}{||\Delta'||_F}).\eqno \mbox{\qedhere}
$$
\end{proof}

\begin{proof}[Proof of Theorem~\ref{theo:dp_amgm}]
The proof boils down to two directions.
From $(\epsilon,\delta)$-DP (Theorem~\ref{suff_and_nece}) to Theorem~\ref{theo:dp_amgm}, the proof directly follows from all the derivations in Section~\ref{sect:agm-aMGM}.
For the inverse direction, it is sufficient to show that $||\Delta'||_F \leq \B$ holds for $\forall \D \simeq \D'$ given Theorem~\ref{theo:dp_amgm}.
In particular, for $||\Delta'||_F = ||U^{-1}\Delta V^{-\top}||_F$, we have 
\begin{align*}
    & ~||U^{-1}\Delta V^{-\top}||_F^2
    \leq \sum^r_{i=1}\frac{\sigma^2_i(\Delta)}{\sigma^2_{n-i+1}(U)\sigma^2_{d-i+1}(V)} \\
    \leq & ~\frac{\sum^r_{i=1} \sigma^2_i(\Delta)}{\sigma^2_{n}(U)\sigma^2_{d}(V)} 
    \leq \frac{||\Delta||^2_F}{S^2_2(f)/\B^2} \leq \B^2,
\end{align*}
where the first inequality is due to Lemma~\ref{lemma_3}, the second one holds since $\sigma_n(U)$ and $\sigma_d(V)$ are the smallest singular values among the others, and the third one is from Theorem~\ref{theo:dp_amgm}.
\end{proof}

\fi

\end{document}